\RequirePackage{fix-cm}
\documentclass[smallextended,envcountsect,envcountsame]{svjour3}       

\usepackage{amsmath} 
\usepackage{amssymb}
\usepackage{tikz}
\usepackage[noend]{algpseudocode}
\usepackage{algorithm}
\usepackage{graphicx,parskip,subcaption}

\algnewcommand\algorithmicforeach{\textbf{for each}}
\algdef{S}[FOR]{ForEach}[1]{\algorithmicforeach\ #1\ \algorithmicdo}

\usepackage{thmtools}
\usepackage{thm-restate}
\usepackage{hyperref}

\newtheorem{lem}[theorem]{Lemma}
\newtheorem{cor}[theorem]{Corollary}
\newtheorem{prop}[theorem]{Proposition}
\newtheorem{proc}[theorem]{Procedure}

\smartqed

\usepackage{microtype}
\usepackage{todonotes}

\bibliographystyle{plainurl}

\begin{document}
	
	\title{Efficient Isomorphism for $S_d$-graphs and $T$-graphs
		\thanks{This paper is the full extended version of the conference paper that appeared at MFCS 2020.
			It contains the detailed algorithms and full proofs, and few additional small results.}
		\thanks{Supported by research grant GA\v CR 20-04567S of the Czech Science Foundation.}
	}
	
	\author{Deniz A\u{g}ao\u{g}lu \c{C}a\u{g}{\i}r{\i}c{\i} \and Petr Hlin\v en\'y }
	
	\institute{D. A\u{g}ao\u{g}lu \c{C}a\u{g}{\i}r{\i}c{\i} \at
		Masaryk University, Brno, Czech Republic \\
		\email{agaoglu@mail.muni.cz}           
		\and
		P. Hlin\v{e}n\'y$^*$ (corresponding author) \at
		Masaryk University, Brno, Czech Republic \\
		\email{hlineny@fi.muni.cz}$^*$ 
	}

	\maketitle
	
	\begin{abstract}
		An $H$-graph is one representable as the intersection graph of connected subgraphs of a suitable
		subdivision of a fixed graph $H$, introduced by Bir\'{o}, Hujter and Tuza (1992).  
		An $H$-graph is proper if the representing subgraphs of $H$ can be chosen
		incomparable by the inclusion.
		In this paper, we focus on the isomorphism problem for $S_d$-graphs and
		$T$-graphs, where $S_d$ is the star with $d$ rays and $T$ is an arbitrary
		fixed tree.  
		
		Answering an open problem of Chaplick, T\"{o}pfer, Voborn\'{\i}k and Zeman (2016),
		we provide an \textbf{FPT}-time algorithm for testing isomorphism
		and computing the automorphism group 
		of $S_d$-graphs when parameterized by~$d$,
		which involves the classical group-computing machinery by Furst, Hopcroft, and Luks (1980).
		We also show that the isomorphism problem of $S_d$-graphs is at
		least as hard as the isomorphism problem of posets of bounded width,
		for which no efficient combinatorial-only algorithm is known to date.
		Then we extend our approach to an \textbf{XP}-time algorithm 
		for isomorphism of $T$-graphs when parameterized by the size of $T$.
		Lastly, we contribute a simple \textbf{FPT}-time combinatorial algorithm for
		isomorphism testing in the special case of proper $S_d$- and $T$-graphs.

		\keywords{intersection graph; isomorphism testing; chordal graph; $H$-graph; parameterized complexity}
	\end{abstract}

	\section{Introduction}\label{introduction}
	
	A \emph{graph} is a pair $G=(V,E)$ where $V=V(G)$ is the \emph{finite} vertex set 
	and $E=E(G)$ is the edge set -- a set of unordered pairs of vertices.
	A \emph{subdivision} of an edge $\{u,v\}$ of a graph $G$ is the operation of replacing
	$\{u,v\}$ with a new vertex $x$ and two new edges $\{u,x\}$ and~$\{x,v\}$.
	Two graphs $G_1$ and $G_2$ are called \emph{isomorphic} and denoted by $G_1\simeq G_2$,  
	if~there exists a bijection $f: V(G_1) \rightarrow V(G_2)$, called an 
	{\em isomorphism}, such that $\{ u,v \} \in E(G_1)$ if and only if $\{ f(u),f(v)\} \in E(G_2)$ for all $\{ u,v \} \subseteq V(G_1)$.
	
	The \emph{graph isomorphism problem} is to determine whether the two given graphs are isomorphic. 
	It is in a sense a quite special problem in computer science;
	on one hand, under some widely-believed complexity-theoretic assumptions, it can be shown that
	graph isomorphism is not an NP-hard problem, while on the other hand,
	a polynomial-time algorithm for graph isomorphism is still elusive (and not
	everybody expects existence of such algorithm).
	It has actually defined its own complexity class \emph{GI} of the problems
	which are reducible in polynomial time to graph isomorphism.
	The current state of the art is a quasi-polynomial algorithm of Babai~\cite{DBLP:conf/stoc/Babai16}.
	Nevertheless, the problem has been shown to be solvable efficiently for various
	natural graph classes such as trees, planar and permutation graphs
	\cite{AHU,planarLinear,DBLP:journals/networks/Colbourn81}
	and for parameterized classes such as those listed below. 
	
	A (decision) problem with a parameter $k$ belongs to the class \textbf{FPT} if it can be
	solved in time $f(k)\cdot n^{\mathcal{O}(1)}$ where $f$ is a
	computable~function and $n$ is the size of the input instance.  Similarly, a
	decision problem with a parameter $k$ belongs to \textbf{XP} if it can be
	solved in time $f(k)\cdot n^{g(k)}$ where $f$ and $g$ are two
	computable~functions and $n$ is the size of the input.  Even though
	the graph isomorphism problem has not been studied as intensively as other
	``classical'' graph problems in the parameterized setting, 
	some of the well-known parameterizations yielding to
	\textbf{FPT}- and \textbf{XP}-time algorithms are the maximum
	degree~\cite{DBLP:journals/jcss/Luks82},
	eigenvalue multiplicity~\cite{eigenmul},
	genus~\cite{genus,DBLP:journals/corr/Kawarabayashi15a,DBLP:conf/esa/Neuen21},
	tree-depth~\cite{treedepth} and tree-width~\cite{treewidth}.

	Now, let us briefly introduce the graph classes which are the
	subject of our research.  The \emph{intersection graph} $G$ of a finite
	collection of sets $\{S_1, \dots, S_n\}$ is a simple undirected graph in which each set
	$S_i$ is associated with a vertex $v_i \in V(G)$ and each pair $v_i,v_j$ of
	vertices is joined by an edge if and only if the corresponding sets have a
	non-empty intersection, i.e.  $\{v_i,v_j\} \in E(G) \iff S_i \cap S_j \neq \emptyset$.
	The relevant special collections of sets are described below.
	
	A graph is \emph{chordal} if every induced cycle of length more than three
	has a chord.  This can be defined as the intersection graph of subtrees of
	some suitable tree~\cite{chordalityInters}.  Chordal graphs have linearly many
	maximal cliques which can be listed in polynomial time
	\cite{recogChordaLinear}.  Deciding the isomorphism of chordal graphs is a
	{\em GI-complete} problem \cite{isoChordalGIComp}.  This means that testing
	whether two chordal graphs are isomorphic is polynomial-time equivalent to
	the graph isomorphism problem in the general case.
	
	A graph $G$ is an \emph{interval graph} if it is the intersection graph for a set of intervals on the real line, 
	and interval graphs form a subclass of chordal graphs.
	The isomorphism problem for interval graphs can be solved in linear time~\cite{recogIntervalLinear}.
	
	\emph{Split graphs} are chordal graphs whose vertex set can be partitioned
	into a clique and an independent set.  They present a special case of
	intersection graphs of substars of a suitable subdivided star, and the
	isomorphism problem for split graphs is also {\em GI-complete}
	\cite{isoSplitGIComp}.
	
	For a fixed graph $H$, an {\em$H$-graph} is the intersection graph of connected 
	subgraphs of a suitable subdivision $H'$ of the graph~$H$~\cite{biro}. 
	Such an intersection representation is also called an $H$-representation.
	They generalize all mentioned intersection graphs as follows. Interval graphs are $K_2$-graphs, chordal graphs are the union of $T$-graphs where $T$ ranges over all trees, and split graphs are contained in the union of $S_d$-graphs where $d$ ranges over all positive integers.
	Every $S_d$-graph is chordal, but not every $S_d$-graph is a split graph.
	Various optimization problems such as maximum clique and minimum
	dominating set on $H$-graphs (for particular graphs~$H$) have been shown to be solvable in
	polynomial and/or \textbf{FPT}-time \cite{zemanWG,zeman2,DBLP:conf/esa/FominGR18}.
	
	If a graph $G$ has an $H$-representation, i.e., an intersection
	representation by subgraphs of a subdivision $H'$ of $H$,
	such that these subgraphs of $H'$ are pairwise incomparable by inclusion
	(no one is a subgraph of another), then we speak about a {\em proper
		representation} and $G$ is called a {\em proper $H$-graph}.
	Obviously, a proper $H$-graph is also an $H$-graph, but the converse is far
	from being true.
	For instance, the class of proper $K_2$-graphs coincides with the class of
	unit interval graphs.

	\subsection{Organization of the paper}
	\begin{itemize}
		\item In Section~\ref{introSD}, we give the definitions and
		basic properties of $S_d$- and $T$-graphs in closer detail.
		
		\item In Section~\ref{easysection}, we consider $S_d$-graphs with bounded
		maximum clique size at most~$p$ and give, as a warm-up exercise,
		a simple combinatorial \textbf{FPT}-time isomorphism algorithm parameterized only by $p$ (Theorem~\ref{theo:SDSMallTheo}). 
		
		\item In Section~\ref{reductionsection}, we prove that the $S_d$-graph isomorphism problem includes isomorphism testing of posets of width~$d$ (Theorem~\ref{theo:red2})
		which can be solved using the group-based approach by Furst, Hopcroft and Luks~\cite{furst} via Babai~\cite{babai-bdcm}
		(but not known to have a combinatorial algorithm).
		
		\item In Section~\ref{hardsection}, as our main result,
		we combine the case of posets of bounded width with a specific adaptation of the general group-compu\-ting approach by Furst, Hopcroft and Luks~\cite{furst} to obtain an \textbf{FPT}-time isomorphism algorithm for $S_d$-graphs (Theorem~\ref{thm:MAIN}).
		
		\item In Section~\ref{hardestsection}, we extend our result on $S_d$-graph isomorphism to $T$-graph isomorphism, and obtain an \textbf{XP}-time isomorphism algorithm for $T$-graphs.
		This algorithm actually applies to all chordal graphs parameterized by their leafage.
		
		\item  In Section~\ref{proper}, we focus on the isomorphism problem for proper $S_d$- and $T$-graphs, and show that their isomorphism can be tested in \textbf{FPT}-time by a combinatorial algorithm.
		
	\end{itemize}
	We remark that all our algorithms expect only (abstract) graphs on
	the input, i.e., they do not require an intersection representation of the graph to be given.

	\section{ $S_d$-graphs and $T$-graphs}\label{introSD}
	
	We first introduce necessary details about representations of $S_d$- and $T$-graphs. 
	For basic poset terms related to $S_d$- and $T$-representations, we refer the
	readers to~\cite{posetbook}.
	We just recall that the {\em width} of a poset is the maximum size of its
	antichain, and this number is equal to the minimum number of chains covering
	all poset elements.
	
	An {\em$S_d$-graph} $G$ is the intersection graph of connected 
	substars of a suitable subdivision $S'$ of the star $S_d$. 
	In such a representation, every ray of $S'$ defines an induced interval subgraph of $G$, 
	and the central node of~$S'$ defines a clique $C$ of $G$.
	We may always straightforwardly modify the representation such that $C$ is a
	maximal (by inclusion) clique of~$G$.
	For further reference, we call a maximal clique $C$ of $G$ a {\em central clique}
	of $G$ if there is an $S_d$-representation of $G$ in which the central node defines~$C$.
	
	Given a graph $G$ and a maximal clique $C$ of $G$, let $\mathcal{X}=\{X_1,X_2,\ldots,X_c\}$ denote
	the set of connected components of $G-C$.  For each connected component $X_i \in
	\mathcal{X}$, let $N_C{(X_i)}$ be the set of neighbors of $X_i$ in $C$ which
	is called the {attachment} of~$X_i$.  A~connected component together
	with its attachment edges is called a \emph{bridge} of~$C$ in~$G$.
	Chaplick et al.~\cite{zemanWG} obtained a characterization of
	$S_d$-graphs yielding to a polynomial time recognition algorithm for
	arbitrary~$d$.  This useful characterization is based
	on the following partial order $P$ on the connected components of $G-C$
	which we will call the {\em central poset} of $G$ on the clique~$C$.
	
	Since each $G[C \cup X_i]$ induces an interval subgraph, the neighborhoods
	of the vertices of $X_i$ in $C$ form a chain by inclusion.  
	The {\em upper attachment} of~$X_i$, denoted by
	${N_C}^U{(X_i)}$, is the maximum neighborhood in $C$ among the vertices of~$X_i$,
	and it is $N_C{(X_i)} = {N_C}^U{(X_i)}$.  Analogously, the {\em lower attachment},
	denoted by ${N_C}^L{(X_i)}$, is the minimum neighborhood in $C$ among the vertices of $X_i$.  
	Note that the upper and lower attachments are well-defined since they are
	unique for each $X_i$ in~$G$.
	After the {\em attachment} (i.e., the lower and upper one)
	of each connected component of $G-C$ is determined, 
	the partial order $P$ is constructed by comparing the
	attachments of each pair of connected components.  A pair $(X_i,X_j)$ of
	components is {\em comparable in $P$}, denoted by $X_i \preceq_{P} X_j$, 
	if ${N_C}^U{(X_i)} \subseteq {N_C}^L{(X_j)}$ holds, and incomparable
	if neither of $X_i \preceq_{P} X_j$, $X_j \preceq_{P} X_i$ holds.
	We will also speak about {\em comparable/incomparable attachments}
	(in $C$) of the components.
	
	Naturally, for every chain $X_1 \preceq_{P}\dots \preceq_{P}
	X_k$, the induced subgraph $G[C \cup X_k \cup \dots \cup X_1]$ is also an
	interval graph.  In addition, distinct connected components $X_i$ and $X_j$
	of $G-C$ are called \emph{equivalent} and treated as one bridge of $C$ when
	${N_C}^U(X_i) = {N_C}^L(X_i) = {N_C}^U(X_j) = {N_C}^L(X_j)$ holds.
	
	The mentioned characterization simply reads:
	
	\begin{prop}[Chaplick et al.~\cite{zemanWG}]\label{prop:Sdcharacteriz}
		A graph $G$ is an $S_d$-graph if and only if there exists a maximal clique $C$ of $G$ such
		that, for $\mathcal{X}$ and $P$ as above,
		\begin{itemize}
			\item for all $X_i \in \mathcal{X}$, the induced subgraph $G[C \cup X_i]$ is an interval graph, and
			\item the central poset $P$ (of $G$ on~$C$) can be covered by at most $d$ chains.
		\end{itemize}
	\end{prop}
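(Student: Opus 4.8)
The plan is to prove the two implications by setting up a correspondence between the rays of a suitable subdivided star $S'$ and the chains of the central poset~$P$: rays become chains, and chains become rays.

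For the forward implication I would start from an $S_d$-representation of $G$ in which the central clique $C$ is realised by the central node of $S'$ and, as the text permits, is maximal. The key observation is that every vertex $v\notin C$ is represented by a connected substar avoiding the central node; since $S'$ with its centre deleted splits into the $d$ rays, such a substar lies entirely inside a single ray. As adjacent vertices share a node and each $X_i\in\mathcal{X}$ is connected, a whole component $X_i$ is confined to one ray, and hence $G[C\cup X_i]$ embeds into the interval graph carried by that ray, which already yields the first bullet. It then remains to show that two components $X_i,X_j$ hosted by the same ray are comparable in~$P$. Reading the ray as an interval representation, the connected image of each component is an interval, and the images of distinct components are disjoint (a common point would force an edge between them), so the two images are linearly ordered along the ray. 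A short computation with the definitions of $N_C^U$ and $N_C^L$ shows that the component closer to the centre dominates the other in~$P$. Thus each ray carries a chain of $P$, the $d$ rays cover $P$ by at most $d$ chains, and $P$ has width at most~$d$.

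For the converse I assume a maximal clique $C$ with both properties and cover $P$ by chains $Q_1,\dots,Q_d$ (Dilworth). Using the fact recorded just before the statement, that the union of a chain with $C$ induces an interval graph, I fix for each $k$ an interval representation $\rho_k$ of $G_k:=G[\,C\cup\bigcup_{X\in Q_k}X\,]$. The Helly property of intervals gives a point $p_k$ common to all $C$-intervals in $\rho_k$, and maximality of $C$ guarantees that no other interval contains $p_k$. I would normalise $\rho_k$ so that $p_k$ becomes the inner end of ray $k$, every $C$-interval is left-justified to $p_k$, and every component interval lies strictly to its right, ordered outward according to the chain $Q_k$ (which is exactly the nesting order $N_C^U(Y_1)\subseteq N_C^L(Y_2)\subseteq\cdots$ of the attachments). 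Identifying the inner ends of all rays with one central node then makes each $v\in C$ a connected substar, whose portion in ray $k$ is the initial segment prescribed by the normalised $\rho_k$, and glues the $d$ rays into a subdivision $S'$ of $S_d$. Faithfulness is then routine to check: inside a ray the adjacencies are those of $\rho_k$; a $C$-vertex meets a component exactly when its segment reaches it; and two components on different rays can only meet at the centre, which no component uses.

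The main obstacle is precisely this normalisation-and-gluing step of the converse. Laying the components of a single chain on one outward ray requires that, after deleting the common clique point $p_k$, they can be arranged on one side in the chain order without spurious overlaps, whereas an arbitrary interval representation may scatter them on both sides of $p_k$; one must therefore either fold the representation onto one side or rebuild it directly from the totally ordered attachments along the chain, while simultaneously ensuring that every $C$-vertex remains a single connected substar across all $d$ rays. Verifying that a common, centre-consistent left-justified layout always exists is the technical heart of the argument, whereas the forward direction is essentially a direct reading of the given representation.
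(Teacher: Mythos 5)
The paper does not prove this proposition at all --- it is quoted as a known characterization of Chaplick et al.~\cite{zemanWG} --- so there is no in-text proof to compare against and your argument must stand on its own. Your forward direction does: a vertex outside $C$ is represented by a substar avoiding the central node and hence confined to a single ray, so each component $X_i$ lives on one ray; the $C$-substars restricted to that ray are initial segments from the centre, giving an interval representation of $G[C\cup X_i]$; and two components on the same ray have disjoint, linearly ordered images, from which the claimed comparability in $P$ follows, so the $d$ rays yield a chain cover.

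The converse, however, has a genuine gap, and it is the one you flag yourself: the existence of the ``normalised'' representation of $G[\,C\cup\bigcup_{X\in Q_k}X\,]$ with all $C$-intervals left-justified at a common point and the components laid out to the right in chain order is essentially the whole content of this direction, and you do not establish it. Starting from an arbitrary interval representation of the chain graph and ``folding'' it onto one side of the Helly point $p_k$ is not a legitimate move: a reflection destroys intersections with intervals that straddle $p_k$, and the components of a chain may genuinely sit on both sides of $p_k$ in a given representation. The repair has to go component by component: for a single component $Y$, maximality of $C$ ensures no interval of $Y$ contains $p_k$, so connectivity of $Y$ places all of it on one side of $p_k$; after reflecting if necessary, truncating every $C$-interval at $p_k$ yields the one-sided, left-justified representation of $G[C\cup Y]$ without changing any adjacency. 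To concatenate the components $Y_m\succeq\cdots\succeq Y_1$ of one chain along a ray one then needs the transitive consequence $N_C^U(Y_j)\subseteq N_C^L(Y_{j'})$ for all $j'>j$, which guarantees that a $C$-vertex reaching the region of $Y_j$ passes cleanly through every intermediate region, creating no spurious non-adjacencies or adjacencies. (By contrast, the cross-ray consistency you worry about at the end is automatic: once each ray's $C$-intervals are initial segments from the centre, their union over the rays is a connected substar for free.) Without the per-component reflection-and-truncation step and this concatenation check, the claim that ``faithfulness is routine'' is not justified.
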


	\begin{figure}[tbp]
		\centering
		\captionsetup[subfigure]{position=b,justification=centering}
		\begin{subfigure}[t]{0.48\linewidth}
			\centering
			\begin{tikzpicture}[scale=0.75]
				
				\node (a) at (-0.2,0.4) {3};
				\draw[rotate around={90:(-0.2,0.4)}, color=orange, fill=orange, ultra thick, opacity=0.3] (-0.2,0.4) ellipse (0.25cm and 0.25cm); 
				
				\node (b) at (-0.2,2) {1};
				\draw[rotate around={90:(-0.2,2)}, color=orange, fill=orange, ultra thick, opacity=0.3] (-0.2,2) ellipse (0.25cm and 0.25cm); 
				
				\node (c) at (-1.2,1.2) {2};
				\draw[rotate around={90:(-1.2,1.2)}, color=orange, fill=orange, ultra thick, opacity=0.3] (-1.2,1.2) ellipse (0.25cm and 0.25cm);
				
				\node (d) at (0.8,1.2) {4};
				\draw[rotate around={90:(0.8,1.2)}, color=orange, fill=orange, ultra thick, opacity=0.3] (0.8,1.2) ellipse (0.25cm and 0.25cm);
				
				\draw[rotate around={90:(-2.8,1.2)}, color=magenta, fill=magenta, ultra thick, opacity=0.3] (-2.8,1.2) ellipse (0.35cm and 0.35cm); 
				\draw[rotate around={90:(2.8,1.2)}, color=blue, fill=blue, ultra thick, opacity=0.3] (2.8,1.2) ellipse (0.7cm and 0.35cm); 
				\draw[rotate around={90:(-1.2,3)}, color=green, fill=green, ultra thick, opacity=0.3] (-1.2,3) ellipse (0.35cm and 0.7cm);
				\draw[rotate around={90:(0.8,3)}, color=teal, fill=teal, ultra thick, opacity=0.3] (0.8,3) ellipse (0.35cm and 0.35cm);
				\draw[rotate around={90:(1.6,0)}, color=red, fill=red, ultra thick, opacity=0.3] (1.6,0) ellipse (0.35cm and 0.7cm);
				\draw[rotate around={90:(-1.8,0)}, color=cyan, fill=cyan, ultra thick, opacity=0.3] (-1.8,0) ellipse (0.35cm and 0.35cm);
				
				\node [label=left:{$X_1~$}] at (-2.8,1.2) [circle,draw, fill=black, opacity=1, color=black, inner sep=0.8mm] (e) {};
				\node at (2.8,0.8) [circle,draw, fill=black, opacity=1, color=black, inner sep=0.8mm] (f) {};
				\node [label=right:{$~X_2$}] at (2.8,1.6) [circle,draw, fill=black, opacity=1, color=black, inner sep=0.8mm] (g) {};
				\node at (-0.8,3) [circle,draw, fill=black, opacity=1, color=black, inner sep=0.8mm] (h) {};
				\node [label=left:{$X_3~$}] at (-1.6,3) [circle,draw, fill=black, opacity=1, color=black, inner sep=0.8mm] (i) {};
				\node [label=right:{$~X_4$}] at (0.8,3) [circle,draw, fill=black, opacity=1, color=black, inner sep=0.8mm] (j) {};
				\node [label=left:{$X_5~$}] at (1.2,0) [circle,draw, fill=black, opacity=1, color=black, inner sep=0.8mm] (k) {};
				\node at (2,0) [circle,draw, fill=black, opacity=1, color=black, inner sep=0.8mm] (l) {};
				\node [label=left:{$X_6~$}] at (-1.8,0) [circle,draw, fill=black, opacity=1, color=black, inner sep=0.8mm] (m) {};
				
				\draw (a) -- (b) node[midway, above]{};
				\draw (a) -- (c) node[midway, above]{};
				\draw (a) -- (d) node[midway, above]{};
				\draw (b) -- (c) node[midway, above]{};
				\draw (b) -- (d) node[midway, above]{};
				\draw (c) -- (d) node[midway, above]{};
				\draw (c) -- (e) node[midway, above]{};
				\draw (a) -- (e) node[midway, above]{};
				\draw (b) -- (e) node[midway, above]{};
				\draw (f) -- (g) node[midway, above]{};
				\draw (b) -- (f) node[midway, above]{};
				\draw (a) -- (g) node[midway, above]{};
				\draw (b) -- (g) node[midway, above]{};
				\draw (h) -- (i) node[midway, above]{};
				\draw (h) -- (b) node[midway, above]{};
				\draw (j) -- (b) node[midway, above]{};
				\draw (j) -- (a) node[midway, above]{};
				\draw (j) -- (d) node[midway, above]{};
				\draw (k) -- (d) node[midway, above]{};
				\draw (k) -- (l) node[midway, above]{};
				\draw (m) -- (a) node[midway, above]{};
				\draw (m) -- (c) node[midway, above]{};
				
				\node (11) at (-0.2,-0.8) {(a)};
				
			\end{tikzpicture}	
			\label{fig:UDVGa}
		\end{subfigure}
		~
		\begin{subfigure}[t]{0.48\linewidth}
			\centering
			\begin{tikzpicture}[scale=0.75]
				
				\draw[rotate around={90:(7,2.5)}, color=magenta, fill=magenta, ultra thick, opacity=0.3] (7,2.5) ellipse (0.35cm and 0.35cm); 
				\draw[rotate around={90:(7.5,1.4)}, color=blue, fill=blue, ultra thick, opacity=0.3] (7.5,1.4) ellipse (0.35cm and 0.35cm); 
				\draw[rotate around={90:(8,0.3)}, color=green, fill=green, ultra thick, opacity=0.3] (8,0.3) ellipse (0.35cm and 0.35cm); 
				\draw[rotate around={90:(6,1.4)}, color=cyan, fill=cyan, ultra thick, opacity=0.3] (6,1.4) ellipse (0.35cm and 0.35cm); 
				\draw[rotate around={90:(8.5,2.5)}, color=teal, fill=teal, ultra thick, opacity=0.3] (8.5,2.5) ellipse (0.35cm and 0.35cm); 
				\draw[rotate around={90:(10,0.3)}, color=red, fill=red, ultra thick, opacity=0.3] (10,0.3) ellipse (0.35cm and 0.35cm); 
				
				\node [label=left:{$\{1, 2, 3\}$}] (1) at (7,2.5) {$X_{1}$};
				\node [label=left:{$\{2, 3\}$}] (6) at (6,1.4) {$X_{6}$};
				\node [label=right:{$\{1, 3\}$}] (3) at (7.5,1.4) {$X_{2}$};
				\node [label=right:{$\{1\}$}] (2) at (8,0.3) {$X_{3}$};
				\node [label=right:{$\{1, 3, 4\}$}] (4) at (8.5,2.5) {$X_{4}$};
				\node [label=right:{$\{4\}$}] (5) at (10,0.3) {$X_{5}$};
				
				\draw[->] (6)--(1);
				\draw[->] (3)--(1);
				\draw[->] (2)--(3);	
				\draw[->] (3)--(4);		
				\draw[->] (5)--(4);
				
				\node (11) at (8,-0.8) {(b)};
				
			\end{tikzpicture}
			\label{fig:UDVGb}
		\end{subfigure}
		
		\begin{subfigure}[t]{0.44\linewidth}
			\centering
			\begin{tikzpicture}[scale=0.75]
				
				\draw[black, dashed, ultra thick] (4.75,2.75) -- (4.75,0.5) -- (1.75,-1);
				\draw[black, dashed, ultra thick] (4.75,0.5) -- (7.75,-1);
				
				\node [label=left:{\textcolor{magenta}{$X_1$}}] (111) at (3.65,0.85) { };
				\draw[magenta, line width=2.6] (3.45,0.85) -- (4,1.15); 
				
				\node [label=left:{\textcolor{blue}{$X_2$}}] (222) at (2.6,0.2) { };
				\draw[blue, line width=2.6] (2.45,0.2) -- (3,0.5); 
				
				\node [label=left:{\textcolor{green}{$X_3$}}] (333) at (2.15,-0.7) { };
				\draw[green, line width=2.6] (2,-0.7) -- (2.65,-0.35); 
				
				\node [label=right:{\textcolor{teal}{$X_4$}}] (444) at (5.8,0.25) { };
				\draw[teal, line width=2.6] (5.5,0.55) -- (6,0.3); 
				
				\node [label=right:{\textcolor{red}{$X_5$}}] (555) at (7.15,-0.55) { };
				\draw[red, line width=2.6] (6.85,-0.25) -- (7.35,-0.5); 
				
				\node [label=right:{\textcolor{cyan}{$X_6$}}] (666) at (5.05,2.55) { };
				\draw[cyan, line width=2.6] (5.2,1.95) -- (5.2,2.55); 
				
				\node [label=below:{1}] (1) at (2.775,-0.7) { };
				\draw[orange, ultra thick] (2.625,-0.825) -- (4.75,0.25) -- (6.25,-0.5);
				
				\node [label=left:{2}] (2) at (3.7,0.4) { };
				\draw[orange, ultra thick] (3.5,0.5) -- (5,1.25) -- (5,2.25);
				
				\node [label=left:{3}] (3) at (3.2,-0.1) { };
				\draw[orange, ultra thick] (3,0) -- (4.5,0.75) -- (4.5,2.25) -- (4.5,0.75) -- (6.25,-0.125);
				
				\node [label=below:{4}] (4) at (6.6,-0.875) { };
				\draw[orange, ultra thick] (4.75,0) -- (6.75,-1);
				
				\node (11) at (4.75,-1.5) {(c)};
				
			\end{tikzpicture}
			\label{fig:UDVGd}
		\end{subfigure}
		
		\caption{(a) An $S_d$-graph $G$ with its central clique $C = \{ 1, 2, 3, 4\}$ and the connected components of $G-C$. (b) The partial order $P$ on the connected components. (c) An $S_d$-representation of $G$ with $C$ in the center.}		
		\label{fig:SDGraph}
	\end{figure}
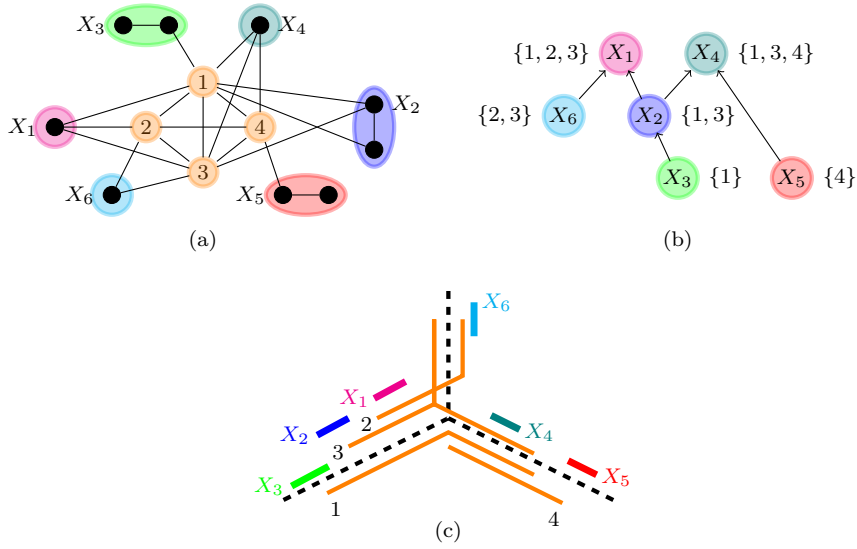
	
	In Figure~\ref{fig:SDGraph} (a), we demonstrate an $S_d$-graph with its
	central clique $C$ colored orange, and the connected components $X_1, \dots, X_6$ of $G-C$ colored differently. 
	In Figure~\ref{fig:SDGraph} (b), we see the partial order $P$ on the connected components of $G-C$ with respect to their attachments. The three tuples of components $(X_1,X_2,X_3)$, $(X_4,X_5)$, $(X_6)$ form a chain cover of $P$. Therefore, $G$ is an $S_d$-graph for $d=3$. In Figure~\ref{fig:SDGraph} (c), the corresponding $S_d$-representation of $G$ is given where the connected components are placed on the rays of a subdivision of $S_3$ according to this chain cover, and
	$C$ is placed in the center.
	
	If $G$ is not connected, then the characterization in
	Proposition~\ref{prop:Sdcharacteriz} says that all components of $G$ not
	containing the central clique $C$ are interval graphs,
	and their attachments are empty (hence they are treated as one bridge in our setting, and can be placed together at the end of any edge of~$S_d$).
	
	Note that the intersection representation of an $S_d$-graph is not uniquely determined by a central maximal clique. For instance, the tuples $(X_1,X_6)$, $(X_2,X_3)$, $(X_4,X_5)$ also form a chain cover of the partial order given in Figure~\ref{fig:SDGraph} (b), and it leads to another $S_d$-representation. In particular, there can be $d^{\Omega(n)}$ many distinct chain covers of a poset of width $d$ on $n$ elements when the depth of $P$ is not bounded. This means that the isomorphism problem of $S_d$-graphs cannot be simply solved by comparing the rays using the linear time isomorphism testing for interval graphs~\cite{recogIntervalLinear} for suitable pairs of maximal cliques.
	
	We actually have an easy observation:
	
	\begin{prop}[originally proved in \cite{isoChordalGIComp}]\label{prop:SdGIc}
		The isomorphism problem of proper $S_d$-graphs (and hence of all $S_d$-graphs) with $d$ on the input is GI-complete.
	\end{prop}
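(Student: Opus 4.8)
My plan is to establish the two directions of GI-completeness separately. Membership in \emph{GI} is immediate, since isomorphism of proper $S_d$-graphs is a restriction of general graph isomorphism and hence reduces to it. All of the content therefore lies in proving \emph{GI-hardness}, i.e.\ in reducing the isomorphism problem of arbitrary graphs (which is GI-complete) to the isomorphism problem of proper $S_d$-graphs with $d$ given on the input.

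For the hardness reduction I would assign to every graph $G=(V,E)$ a proper $S_d$-graph $\Phi(G)$ as follows. Start from a central clique $C=\{c_v:v\in V\}$, one clique vertex per vertex of $G$. For each edge $e=\{u,v\}\in E$ add a single \emph{edge vertex} $b_e$ adjacent to exactly $c_u$ and $c_v$; this is the usual ``split-graph'' encoding of $G$, which is an $S_d$-graph by Proposition~\ref{prop:Sdcharacteriz} (the attachments are the $2$-element sets $\{c_u,c_v\}$, each $G[C\cup\{b_e\}]$ is interval, and the central poset is covered by finitely many chains). To force the representation to be \emph{proper} and, simultaneously, to make the clique rigidly recognizable, attach one private pendant (degree-$1$) vertex $p_v$ to each $c_v$ and one private pendant $q_e$ to each $b_e$. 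I would then set $d:=|V|+|E|$ (computable and part of the input) and exhibit an explicit proper representation on a subdivided $S_d$: all subtrees representing $C$ share the centre; each $c_v$ runs out along its own private ray to reach $p_v$ at the very tip, and along one ray per incident edge $e$ to reach $b_e$; on the ray of $e=\{u,v\}$ only $c_u$ and $c_v$ reach outward, while $b_e$ overlaps both but \emph{sticks out past them} toward the tip, where $q_e$ sticks out still further. Checking adjacencies shows this represents $\Phi(G)$, and checking that every subtree owns an exclusive ``tip region'' (or the centre) not covered by any other subtree shows the representation is proper.

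The equivalence $G_1\simeq G_2 \iff \Phi(G_1)\simeq\Phi(G_2)$ is then proved in the two usual directions. The forward direction is routine: any isomorphism $\sigma$ of $G_1,G_2$ lifts via $c_v\mapsto c_{\sigma v}$, $p_v\mapsto p_{\sigma v}$, $b_{\{u,v\}}\mapsto b_{\{\sigma u,\sigma v\}}$, $q_e\mapsto q_{\sigma e}$. For the backward direction I would recover a graph isomorphism from an arbitrary $\phi\colon\Phi(G_1)\to\Phi(G_2)$ by an invariance argument on vertex roles. After padding both graphs with isolated vertices so that $|V|\ge 4$ (which changes neither isomorphism type), $C$ is the unique maximum clique of $\Phi(G)$: every clique meeting a pendant has size at most $2$, every clique meeting an edge vertex has size at most $3$, while $|C|\ge 4$. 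Hence $\phi$ maps $C_1$ onto $C_2$ and induces a bijection $\sigma\colon V(G_1)\to V(G_2)$; moreover the edge vertices are exactly the degree-$3$ vertices outside $C$ having two clique neighbours, so $\{u,v\}\in E(G_1)$ holds iff some such vertex sees $c_u,c_v$, a condition preserved by $\phi$. Thus $\sigma$ is an isomorphism of $G_1$ and $G_2$.

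The main obstacle, and the only place needing real care, is guaranteeing \emph{properness} and faithfulness at once: the naive split encoding places each $b_e$ inside the reach of its clique vertices and is therefore \emph{not} a proper representation, so the argument hinges on the ``stick-out'' placement of every bridge and pendant at the tips of their rays and on verifying \emph{global} (whole-subtree, not per-ray) incomparability. In particular, two clique vertices must remain incomparable because each privately reaches its own pendant, even when one of them reaches farther than the other along a shared edge ray.
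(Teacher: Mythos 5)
Your proof is correct and follows essentially the same route as the paper: both reduce from general graph isomorphism via the split-graph encoding (subdivide every edge, then make a clique on the original vertex set) and both use $d=|V|+|E|$ rays. The only difference is cosmetic -- the paper achieves properness purely at the level of the representation, letting each clique subtree extend onto its own private extra ray, whereas you additionally attach pendant vertices $p_v$ and $q_e$ to force the same effect and to make the central clique easier to recognize; this is a harmless embellishment that the paper's terser argument does not need.
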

	\begin{proof}
		Let $G_1$ and $G_2$ be two arbitrary graphs on the same number of at least $4$ vertices.
		We construct~$G_i'$, $i=1,2$, as follows: subdivide every edge with a new
		vertex, and then make a clique on the original vertex set $V(G_i)$.
		Then, $G_1\simeq G_2$ if and only if $G_1'\simeq G_2'$; this is easy since
		any isomorphism of $G_1'$ and $G_2'$ must map $V(G_1)$ to $V(G_2)$ due to vertex degrees.
		Since each $G_i'$ is an $S_d$-graph for
		$d=|E(G_i)|$, with the central clique on $V(G_i)$,
		solving their isomorphism would solve also the isomorphism of $G_1$ and $G_2$.
		Furthermore, $G_i'$ is a proper $S_{d'}$-graph for $d'=|E(G_i)|+|V(G_i)|$, where
		the additional $|V(G_i)|$ rays of $S_{d'}$ are used to make the subgraphs
		representing the central clique pairwise incomparable.
		\qed\end{proof}

	More generally,
	a $T$-graph $G$ is the intersection graph of connected subtrees of a
	suitable subdivision $T'$ of a fixed tree $T$.  In such a representation, the
	path between any pair of branching nodes of $T'$ (i.e., nodes of degree at least~$3$) 
	defines an induced interval subgraph of $G$, and each branching node of $T'$ defines a clique
	of $G$ which can again be assumed maximal in $G$.
	$S_d$-graphs are $T$-graphs for a tree $T$ with only one branching node, and
	$T$-graphs are also chordal.  While $S_d$-graphs can be recognized in
	polynomial time for an arbitrary $d$ \cite{zemanWG}, the recognition problem
	for $T$-graphs is NP-complete when $T$ is on the input \cite{KLAVIK201585}.

	In \cite{zemanWG}, Chaplick et al. proved that a graph $G$ is a $T$-graph for some fixed tree $T$ if and only if there exists a set of maximal cliques $C_1,\dots,C_k$ of $G$ placed on the branching
	nodes $b_1, \dots, b_k$ of $T$ such that, for each $b_i$, the induced subgraph formed by the maximal clique $C_i$ placed on $b_i$
	and a selection of the connected components of $G-(C_1 \cup \dots \cup C_k)$ is an $S_d$-graph with $d=deg(b_i)$ and with additional restrictions detailed in~\cite{zemanWG}.
	Using this characterization, they showed that $T$-graphs can be recognized in \textbf{XP}-time with
	respect to $|V(T)|$ by checking all assignments of maximal cliques to the branching nodes of $T$ by brute-force.
	
	\begin{figure}[tbp]
		\centering
		\begin{subfigure}[t]{0.9\linewidth}
			\centering
			\begin{tikzpicture}[scale=0.75]
				
				\draw[rotate around={90:(-0.2,0.4)}, color=orange, fill=orange, ultra thick, opacity=0.3] (-0.2,0.4) ellipse (0.25cm and 0.25cm); 
				\node (a) at (-0.2,0.4) {3};
				\draw[rotate around={90:(-0.2,2)}, color=orange, fill=orange, ultra thick, opacity=0.3] (-0.2,2) ellipse (0.25cm and 0.25cm); 
				\node (b) at (-0.2,2) {1};	
				\draw[rotate around={90:(-1.2,1.2)}, color=orange, fill=orange, ultra thick, opacity=0.3] (-1.2,1.2) ellipse (0.25cm and 0.25cm);
				\node (c) at (-1.2,1.2) {2};
				\draw[rotate around={90:(0.8,1.2)}, color=orange, fill=orange, ultra thick, opacity=0.3] (0.8,1.2) ellipse (0.25cm and 0.25cm);
				\node (d) at (0.8,1.2) {4};
				
				\draw[rotate around={90:(-0.6,3)}, color=magenta, fill=magenta, ultra thick, opacity=0.3] (-0.6,3) ellipse (0.35cm and 0.7cm);
				\draw[rotate around={90:(-2.4,1.2)}, color=green, fill=green, ultra thick, opacity=0.3] (-2.4,1.2) ellipse (0.35cm and 0.35cm); 
				\draw[rotate around={90:(2.4,1.2)}, color=cyan, fill=cyan, ultra thick, opacity=0.3] (2.4,1.2) ellipse (0.7cm and 0.35cm); 
				
				\node [label=left:{$X_2$}] at (-1,3) [circle,draw, fill=black, opacity=1, color=black, inner sep=0.8mm] (i) {};
				\node at (-0.2,3) [circle,draw, fill=black, opacity=1, color=black, inner sep=0.8mm] (h) {};
				\node [label=left:{$X_1$}] at (-2.4,1.2) [circle,draw, fill=black, opacity=1, color=black, inner sep=0.8mm] (e) {};
				\node at (2.4,0.8) [circle,draw, fill=black, opacity=1, color=black, inner sep=0.8mm] (f) {};
				\node [label=above:{$X_3$}] at (2.4,1.6) [circle,draw, fill=black, opacity=1, color=black, inner sep=0.8mm] (g) {};
				
				\draw (a) -- (b) node[midway, above]{};
				\draw (a) -- (c) node[midway, above]{};
				\draw (a) -- (d) node[midway, above]{};
				\draw (b) -- (c) node[midway, above]{};
				\draw (b) -- (d) node[midway, above]{};
				\draw (c) -- (d) node[midway, above]{};
				\draw (c) -- (e) node[midway, above]{};
				\draw (a) -- (e) node[midway, above]{};
				\draw (b) -- (e) node[midway, above]{};
				\draw (f) -- (g) node[midway, above]{};
				\draw (a) -- (f) node[midway, above]{};
				\draw (b) -- (f) node[midway, above]{};
				\draw (a) -- (g) node[midway, above]{};
				\draw (b) -- (g) node[midway, above]{};
				\draw (h) -- (i) node[midway, above]{};
				\draw (h) -- (b) node[midway, above]{};	
				\draw (h) -- (d) node[midway, above]{};

				\draw[rotate around={90:(4,0.4)}, color=yellow, fill=yellow, ultra thick, opacity=0.3] (4,0.4) ellipse (0.25cm and 0.25cm); 
				\node (ee) at (4,0.4) {7};
				\draw[rotate around={90:(4,2)}, color=yellow, fill=yellow, ultra thick, opacity=0.3] (4,2) ellipse (0.25cm and 0.25cm); 
				\node (ff) at (4,2) {5};
				\draw[rotate around={90:(5,1.2)}, color=yellow, fill=yellow, ultra thick, opacity=0.3] (5,1.2) ellipse (0.25cm and 0.25cm); 
				\node (gg) at (5,1.2) {6};
				
				\draw[rotate around={90:(4.8,3)}, color=gray, fill=gray, ultra thick, opacity=0.3] (4.8,3) ellipse (0.35cm and 0.7cm);
				\draw[rotate around={90:(6.2,1.2)}, color=blue, fill=blue, ultra thick, opacity=0.3] (6.2,1.2) ellipse (0.35cm and 0.35cm);
				\draw[rotate around={90:(5.3,0)}, color=red, fill=red, ultra thick, opacity=0.3] (5.3,0) ellipse (0.35cm and 0.35cm);
				\draw[rotate around={90:(4,-0.7)}, color=teal, fill=teal, ultra thick, opacity=0.3] (4,-0.7) ellipse (0.35cm and 0.35cm);
				
				\node [label=right:{$X_4$}] at (5.2,3) [circle,draw, fill=black, opacity=1, color=black, inner sep=0.8mm] (ii) {};
				\node at (4.4,3) [circle,draw, fill=black, opacity=1, color=black, inner sep=0.8mm] (jj) {};
				\node [label=right:{$X_5$}] at (6.2,1.2) [circle,draw, fill=black, opacity=1, color=black, inner sep=0.8mm] (kk) {};
				\node [label=right:{$X_6$}] at (5.3,0) [circle,draw, fill=black, opacity=1, color=black, inner sep=0.8mm] (ll) {};
				\node [label=right:{$X_7$}] at (4,-0.7) [circle,draw, fill=black, opacity=1, color=black, inner sep=0.8mm] (mm) {};
				
				\draw (ee) -- (ff) node[midway, above]{};	
				\draw (gg) -- (ff) node[midway, above]{};	
				\draw (ee) -- (gg) node[midway, above]{};	
				\draw (g) -- (ff) node[midway, above]{};	
				\draw (g) -- (ee) node[midway, above]{};
				\draw (f) -- (ee) node[midway, above]{};	
				\draw (ii) -- (jj) node[midway, above]{};		
				\draw (ii) -- (ff) node[midway, above]{};		
				\draw (ii) -- (gg) node[midway, above]{};
				\draw (jj) -- (ff) node[midway, above]{};		
				\draw (jj) -- (gg) node[midway, above]{};		
				\draw (kk) -- (gg) node[midway, above]{};	
				\draw (ee) -- (ll) node[midway, above]{};		
				\draw (gg) -- (ll) node[midway, above]{};
				\draw (ee) -- (mm) node[midway, above]{};			
				
				\node (11) at (2,-1.1) {(a)};	
				\label{b}
			\end{tikzpicture}
			
			\label{fig:UDVGdT}
		\end{subfigure}
		~
		\begin{subfigure}[t]{0.75\linewidth}
			\centering
			\begin{tikzpicture}[scale=0.75]
				
				\draw[black, dashed, ultra thick] (-3.25,-2.5) -- (-0.75,-4.5) -- (-3.25,-6.5);
				\draw[black, dashed, ultra thick] (-0.75,-4.5) -- (3.25,-4.5);
				\draw[black, dashed, ultra thick] (5.75,-2.5) -- (3.25,-4.5) -- (5.75,-6.5);
				
				\draw[orange, ultra thick] (-2.4,-2.85) -- (-0.25,-4.65) -- (-2.3,-6.25);
				\draw[orange, ultra thick] (-0.25,-4.65) -- (0.9,-4.65);
				\node [label=left:{1}] (1) at (-2,-6.4) { };
				
				\draw[orange, ultra thick] (-2.65,-6.25) -- (-0.9,-4.9) -- (0.9,-4.9);
				\node [label=left:{3}] (3) at (-2.45,-6.4) { };
				
				\draw[orange, ultra thick] (-1.95,-6.25) -- (-0.4,-5.05);
				\node [label=left:{2}] (2) at (-1.55,-6.4) { };
				
				\draw[orange, ultra thick] (-2.4,-2.55) -- (-0.75,-3.95);
				\node [label=left:{4}] (4) at (-0.3,-4) { };
				
				\draw[yellow, ultra thick] (5.15,-6.25) -- (3.4,-4.9) -- (1.3,-4.9);
				\node [label=right:{7}] (7) at (4.95,-6.4) { };
				
				\draw[yellow, ultra thick] (4.5,-3.2) -- (3.4,-4.1) -- (1.75,-4.1);
				\node [label=right:{5}] (5) at (4.3,-3.05) { };
				
				\draw[yellow, ultra thick] (5.5,-3) -- (3.6,-4.5) -- (4.7,-5.35);
				\node [label=right:{6}] (6) at (5.3,-2.85) { };
				
				\node [label=left:{\textcolor{green}{$X_1$}}] (11) at (-2.4,-5.5) { };
				\draw[green, line width=2.6] (-3.4,-6.4) -- (-2.5,-5.7); 
				
				\node [label=left:{\textcolor{magenta}{$X_2$}}] (22) at (-2.4,-3.55) { };
				\draw[magenta, line width=2.6] (-3.4,-2.65) -- (-2.3,-3.55); 
				
				\node [label=right:{\textcolor{cyan}{$X_3$}}] (33) at (0.5,-4.05) { };
				\draw[cyan, line width=2.6] (0.5,-4.35) -- (2.25,-4.35); 
				
				\node [label=left:{\textcolor{gray}{$X_4$}}] (44) at (4.3,-2.8) { };
				\draw[gray, line width=2.6] (4.3,-2.9) -- (3.2,-3.8); 
				
				\node [label=right:{\textcolor{blue}{$X_5$}}] (55) at (5.65,-3.25) { };
				\draw[blue, line width=2.6] (6,-2.85) -- (5.2,-3.5); 
				
				\node [label=right:{\textcolor{red}{$X_6$}}] (66) at (4.9,-5.15) { };~
				\draw[red, line width=2.6] (4.3,-4.75) -- (5.1,-5.4); 
				
				\node [label=right:{\textcolor{teal}{$X_7$}}] (77) at (5.7,-6.15) { };
				\draw[teal, line width=2.6] (5.1,-5.75) -- (5.9,-6.4); 
				
				\node (11) at (1.78,-7) {(b)};
				\label{c}
				
			\end{tikzpicture}
			\label{fig:UDVGdTT}
		\end{subfigure}
		
		\caption{(a) A $T$-graph $G$ with its two disjoint maximal cliques $C_1 = \{ 1, 2, 3, 4\}$ and $C_2 = \{ 5, 6, 7\}$. (b) A $T$-representation of $G$ with $C_1$ and $C_2$ placed on the branching nodes.}		
		\label{fig:TGraph}
	\end{figure}
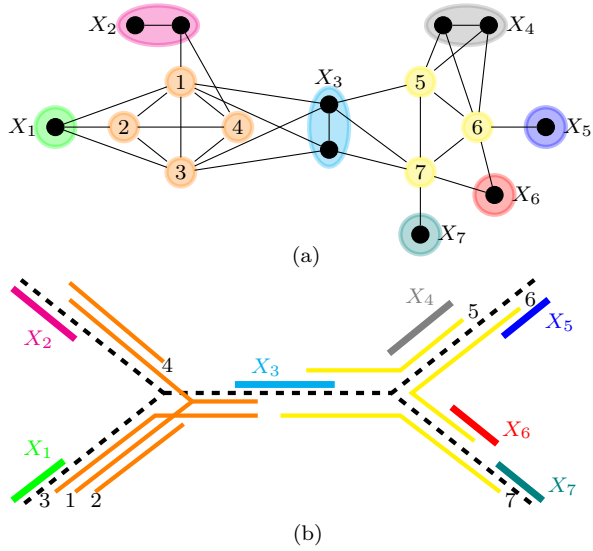
	
	In Figure~\ref{fig:TGraph} (a), we see a $T$-graph $G$ whose disjoint maximal cliques $C_1$ and $C_2$ placed on the branching
	nodes of $T$ are colored orange and yellow, respectively, and the connected components $X_1, \dots, X_7$ of $G-(C_1 \cup C_2)$ are colored differently. All $X_1$, $X_2$ and $X_3$ have non-empty attachments in $C_1$, thus they are placed on incident edges to $b_1$.
	Analogously, all $X_3$, $X_4$, $X_5$, $X_6$ and $X_7$ have non-empty attachments in $C_2$, thus they are placed on incident edges to $b_2$ (for $X_3$, its is the edge~$b_1b_2$). 
	For two $S_d$-instances with the centers $C_1$ and $C_2$, let $P_1$ and $P_2$ denote the partial orders on the connected components. Since $X_3$ has attachments in both $C_1$ and $C_2$,  i.e. $({N_{C_1 \cup C_2}}(X_3) \cap C_1)\setminus C_2 \neq \emptyset$ and $({N_{C_1 \cup C_2}}(X_3) \cap C_2)\setminus C_1 \neq \emptyset$, it is the common component for these $S_d$-instances and appears in both $P_1$ and $P_2$. One of the chain covers of $P_1$ is $(X_1)$, $(X_2)$, $(X_3)$, and one of the chain covers of $P_2$ is $(X_3)$, $(X_4,X_5)$, $(X_6,X_7)$. In Figure~\ref{fig:TGraph} (b), the corresponding $T$-representation of $G$ is given where the connected components are placed on the edges of $T$ according to these chain covers.
	
	Let the {\em leafage} of a chordal graph $G$ be defined as the smallest
	number of leafs of a tree $T$ such that $G$ is a $T$-graph.
	We remark that the leafage can be equivalently defined only from
	the clique graph of~$G$~\cite{mckee1999topics}.
	
	Due to being a superclass of $S_d$-graphs, the isomorphism problem for
	$T$-graphs and proper $T$-graphs is also GI-complete by Proposition~\ref{prop:SdGIc}.

	\section{$S_d$-graph isomorphism parameterized by the clique size}\label{easysection}
	
	We start with an easy case and show that $S_d$-graph isomorphism can be solved in \textbf{FPT}-time when the maximal clique size of given $S_d$-graphs are bounded by a parameter $p$. We emphasize that the complexity of this case does not depend on the width $d$ and we do not need the $S_d$-representations.
	
	Recall that $S_d$-graphs are chordal. Therefore, they have linearly many
	maximal cliques which can be listed in linear time~\cite{recogChordaLinear}. 
	For an $S_d$-graph $G$ with a central clique $C$, the induced subgraph $G[C \cup X_i]$ 
	is an interval graph for each connected component $X_i$ of $G-C$, and the
	isomorphism of interval graphs can be tested in linear time
	\cite{recogIntervalLinear}.  Given two $S_d$-graphs $G$ and $H$,
	we hence compute their collections of maximal cliques $\mathcal{S}$ and $\mathcal{T}$, respectively.
	For each pair of equal-size cliques $C \in\mathcal{S}$ and $D \in\mathcal{T}$ such that all
	connected components in $G-C$ and $H-D$ are interval graphs, we can
	efficiently compare each pair of connected components of $G-C$ and $H-D$ to
	interval graph isomorphism.  If these components can be perfectly
	matched with respect to the isomorphism, we only need to guarantee
	a consistent bijection map between the cliques $C$ and $D$ to correctly conclude
	that $G$ and $H$ are isomorphic.  When the maximal clique size of given $G$ and $H$
	is bounded by a parameter $p$, we can simply loop through all $p!$ possible
	bijections between the cliques $C$ and $D$.
	The fine details are shown in Algorithm~\ref{pseudoEasySD}.
	
	\begin{algorithm} [tbh]
		\caption{Isomorphism test for $S_d$-graphs with bounded clique size ($G$,$H$)}
		\label{pseudoEasySD}\normalsize
		\begin{algorithmic}[1]
			\Require Given two $S_d$-graphs $G$ and $H$ (their representations not required),
			the parameters $d$ of $S_d$ 
			and $p$ bounding the maximal clique size of $G$ and~$H$.
			\Ensure Result of the isomorphism test between $G$ and $H$.
			\medskip
			
			\State Find the maximal clique collections $\mathcal{S}$ and 
			$\mathcal{T}$ of the chordal graphs~$G$~and~$H$;
			\If{$|\mathcal{S}|$ $\neq$ $|\mathcal{T}|$, or the
				cardinalities of members of $\mathcal{S}$ and $\mathcal{T}$ do not match}
			\State\Return ``$G$ and $H$ are not isomorphic'';
			\EndIf{}
			
			\Repeat{~for each maximal clique $C$ $\in$ $\mathcal{S}$}:
			\State Find the connected components $X_1,X_2,\ldots,X_k$ of $G-C$,
			assuming the \mbox{equivalent} connected components
			are joined into single bridge(s) of~$C$;
			\Until{all $G[C \cup X_i]$, $1\leq i\leq k$, are interval graphs};
			\label{li:fixC}
			
			\State Fix any bijective labeling $C\to\{1,2,\ldots,|C|\}$ on the vertices of $C$;
			\For{each $D$ $\in$ $\mathcal{T}$ with $\vert C \vert = \vert D \vert$} 
			
			\State Find the connected components $Y_1,Y_2,\ldots,Y_l$ of $H-D$,
			assuming the \mbox{equivalent} connected components
			are joined into single bridge(s) of~$D$;
			
			\If{$k = l$, and all $H[D \cup Y_j]$, $1\leq j\leq k$, are interval graphs}
			
			\For{each bijective  $D\to\{1,2,\ldots,|D|\}$ on the vertices of $D$}	
			\For{each pair $X_i$ and $Y_j$, $1\leq i,j\leq k$}
			\State Compare the interval graphs $G[C \cup X_i]$
			and $H[D \cup Y_j]$ to isomorphism, respecting the labels of $C$ and $D$;
			\label{li:isopair}
			\State Mark isomorphic pairs of them as  ``symmetric'';
			\EndFor{}
			
			\State $\mathcal{Y} \leftarrow \{Y_1,Y_2,\ldots,Y_k\}$;
			\For{each $X_i$, $1\leq i\leq k$}
			\State Find greedily $Y_j \in \mathcal{Y}$ symmetric to
			$X_i$, and delete~$Y_j$~from~$\mathcal{Y}$;
			\EndFor{}
			
			\If{successful for all $X_i$, $1\leq i\leq k$}
			\State\Return ``$G$ and $H$ are isomorphic'';
			\EndIf{}				
			
			\EndFor{}	
			\EndIf{}
			\EndFor{}
			\State\Return ``$G$ and $H$ are not isomorphic'';
		\end{algorithmic}
	\end{algorithm}

	\paragraph*{The complexity analysis of Algorithm~\ref{pseudoEasySD}.}
	~
	Let $n$ be the number of vertices of both $G$ and $H$.
	The maximal clique collections $\mathcal{S}$ and $\mathcal{T}$ of them
	(each of length~$\leq n$) are found using simplicial vertex elimination
	in time $\mathcal{O}(pn)$.
	Comparing the cardinalities of the members of $\mathcal{S}$ and $\mathcal{T}$
	takes $\mathcal{O}(n\log n)$ time.  
	Finding the connected components of $G-C$ (also of $H-D$) takes linear time
	and testing for interval graphs is also in linear time~\cite{recogIntervalLinear},
	and so a suitable clique $C\in\mathcal{S}$ can be fixed in time $\mathcal{O}(n^3)$.
	
	Then we loop through $D\in\mathcal{T}$ and all labelings of $D$, in the
	worst case, which means at most $p!\cdot n$ iterations.
	Comparing the interval graphs $G[C \cup X_i]$ and $H[D \cup Y_j]$ to 
	isomorphism takes linear time for each pair $X_i,Y_j$, but we can
	do a better runtime analysis since the following holds:
	\begin{equation*}
		\sum_{i=1}^k|C\cup X_i|\leq |C|n+n\leq pn+n
	\end{equation*}
	Therefore, comparing all the pairs of graphs
	$G[C \cup X_i]$ and $H[D \cup Y_j]$ takes altogether $\mathcal{O}(pn^2)$.
	Checking for a perfect matching of symmetric (i.e., isomorphic) pairs
	among them is then trivially in time $\mathcal{O}(n^2)$.
	
	Thus, the overall complexity of this algorithm is 
	\begin{equation*}
		\mathcal{O}(n^3 + p!\cdot n\cdot(pn^2 + n^2)) = \mathcal{O}(p!\cdot pn^3)
	\end{equation*}
	which belongs to \textbf{FPT} with respect to $p$.

	\begin{restatable}{theorem}{prooff}
		\label{theo:SDSMallTheo}
		
		Algorithm~\ref{pseudoEasySD} correctly decides whether two $S_d$-graphs are isomorphic in \emph{FPT}-time parameterized by the maximal
		clique size $p$.
	\end{restatable}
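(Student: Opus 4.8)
The plan is to establish the correctness of Algorithm~\ref{pseudoEasySD}, since the runtime bound of $\mathcal{O}(p!\cdot p n^3)$ has already been argued in the preceding complexity analysis. I would split correctness into soundness (an ``isomorphic'' answer implies $G\simeq H$) and completeness (if $G\simeq H$ then the algorithm answers ``isomorphic''). The backbone of both directions is a \emph{gluing principle}: because $C, X_1,\dots,X_k$ partition $V(G)$ and there are no edges of $G$ between distinct components $X_i,X_j$ of $G-C$, every edge of $G$ lies inside some induced subgraph $G[C\cup X_i]$. Consequently, once a bijection $\varphi\colon C\to D$ onto a clique $D$ of $H$ is fixed, an isomorphism $G\to H$ extending $\varphi$ exists if and only if there is a perfect matching $X_i\leftrightarrow Y_{\sigma(i)}$ between the components of $G-C$ and of $H-D$ together with isomorphisms $G[C\cup X_i]\to H[D\cup Y_{\sigma(i)}]$ that each restrict to $\varphi$ on $C$. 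Indeed, such local isomorphisms all agree on $C$ (they equal $\varphi$), so they assemble into a single well-defined bijection $f$, and by the gluing principle $f$ preserves exactly the adjacencies; the converse is immediate by restricting a global isomorphism to each $C\cup X_i$.

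For \textbf{soundness}, I would observe that whenever the algorithm answers ``isomorphic'' it has produced a clique $D$, a labeling of $D$ (equivalently a bijection $\varphi\colon C\to D$ via the common labels $\{1,\dots,|C|\}$), and a perfect matching of components all of whose matched pairs are marked symmetric, i.e.\ label-respecting isomorphic. The gluing principle then directly yields the desired isomorphism $G\to H$.

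For \textbf{completeness}, assume $f\colon G\to H$ is an isomorphism. First, the clique $C$ fixed in line~\ref{li:fixC} exists because $G$ is an $S_d$-graph: by Proposition~\ref{prop:Sdcharacteriz} some maximal clique has all $G[C\cup X_i]$ interval. The crucial point is that fixing \emph{one} such $C$ suffices, since the property ``all bridges induce interval graphs'' is isomorphism-invariant: thus $D:=f(C)$ is a maximal clique of $H$ with $|D|=|C|$, the components of $H-D$ are exactly the images $f(X_1),\dots,f(X_k)$, and each $H[D\cup f(X_i)]\cong G[C\cup X_i]$ is interval, so $D$ passes every test in the loop. Taking the labeling of $D$ induced by $\varphi:=f|_C$ (which is among the $|D|!$ labelings tried), each $X_i$ is marked symmetric to $f(X_i)$, so a perfect matching of symmetric pairs is present.

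It remains to verify that the \emph{greedy} matching step actually finds a perfect matching whenever one exists, since greedy bipartite matching is not correct in general; this and the gluing principle are the points requiring the most care. Here greedy is correct because two components are symmetric precisely when their labeled interval subgraphs have the same isomorphism type: assigning to each component the isomorphism class of its induced subgraph $G[C\cup X_i]$ or $H[D\cup Y_j]$ with the common clique carrying its label in $\{1,\dots,|C|\}$, the symmetry relation is exactly equality of these types. Hence the bipartite ``symmetric'' graph is a disjoint union of complete bipartite graphs, one per type, and greedy succeeds if and only if every type occurs equally often on both sides, which holds if and only if a perfect matching exists. Finally, the normalization bundling equivalent components into single bridges is harmless, as equality of upper and lower attachments is preserved by $f$, so the bridge decompositions correspond on both sides.
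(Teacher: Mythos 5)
Your proof is correct and follows essentially the same route as the paper's: soundness via the fixed labelings of $C$ and $D$ forcing the local interval-graph isomorphisms to agree on the clique, and completeness via the image clique $D=f(C)$ and the labeling induced by $f|_C$ being among those tried. You additionally spell out two points the paper leaves implicit --- the gluing of the per-component isomorphisms and why the greedy matching of ``symmetric'' pairs succeeds (the symmetry relation is equality of labeled isomorphism types, so the bipartite graph is a disjoint union of bicliques) --- both of which are accurate.
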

	
	\begin{proof}
		
		Assume that Algorithm~\ref{pseudoEasySD} returns ``$G$ and $H$ are isomorphic''.
		Since the fixed labelings of $C$ and $D$ are respected when comparing each pair $G[C \cup X_i]$ and $H[D \cup Y_j]$ to interval graph isomorphism, we indeed have that~$G\simeq H$.
		Note that this conclusion holds true regardless of whether the fixed clique $C$ is
		the central clique in some $S_d$-representation of $G$.
		
		Let, on the other hand, $G$ and $H$ be two isomorphic $S_d$-graphs 
		with cliques of size~$\leq p$, and $f$ be their isomorphism.
		Since $G$ has an $S_d$-graph representation, an admissible clique $C$ must
		be found on line \ref{li:fixC} of the algorithm (as the central clique
		of the representation, if not before).
		Then, for the fixed labeling of $C$ and the image of this labeling in $D$
		under~$f$, the isomorphism tests between $G[C \cup X_i]$ and $H[D \cup Y_j]$ where $Y_j=f(X_i)$
		on line \ref{li:isopair} must succeed.
		Moreover, since isomorphism is an equivalence relation, the
		subsequent greedy pairing of symmetric components succeeds as well.
		Hence the algorithm returns ``$G$ and $H$ are isomorphic'', as desired.\qed
	\end{proof}

	\section{Reduction to $S_d$-graph isomorphism from posets of width $d$}\label{reductionsection}
	
	In Section~\ref{easysection}, we focused on an easy case of
	$S_d$-graph isomorphism which can be tested in \textbf{FPT}-time parameterized by
	the maximal clique size $p$;
	by greedily trying all $p!$ labelings of the (suitably chosen) central cliques.
	In general, the maximal clique size can grow up to $\Omega(n)$,
	and thus we now cannot afford to try all their possible labelings.
	As suggested by Proposition~\ref{prop:SdGIc}, we must take advantage of the
	parameter~$d$, that is, of the underlying structure of $S_d$-representations of $G$ and~$H$
	(as we will see, we do not need a particular representation given for that).
	
	According to the characterization of Chaplick et al.\ from Proposition~\ref{prop:Sdcharacteriz},
	it looks useful to study the isomorphism of the central posets of $G$ and
	$H$ (on their chosen central cliques) in order to decide isomorphism of $G$ and $H$ themselves.
	Before we do this in Section~\ref{hardsection}, we show that it is indeed
	necessary to be able to solve isomorphism of posets of width $d$;
	by giving the following polynomial-time reduction and subsequent Theorem~\ref{theo:red2}.
	
	Given a poset $P$ of width $d$ on $n$ elements,
	we construct an $S_d$-graph $G$ from $P$ (in polynomial time) as follows:
	\begin{enumerate}
		\item Model $P$ by the set inclusion between the sets $M_1, \dots, M_n$,
		where each $M_i$ consists of all comparable elements with $i$ from the lower
		levels and itself.  Formally, $M_i = \{j \in P: j\preceq_P i\}$ for all~$i\in P$.
		\item Take the union $M=\bigcup_{i \in P} M_i$ and form the central
		clique $C$ of size $|M|+2$ by adding $|M|$ vertices corresponding to
		the elements of $M$, two dummy vertices, and all edges~on~$C$. 
		Note that the two dummy vertices are added to $C$ to mark $C$ as the
		unique maximum-size clique $C$ of~$G$.
		
		\item For each $M_i$, add a new vertex $v_i$ to $G$ adjacent
		exactly to the subset of vertices in $C$ corresponding to $M_i$.
	\end{enumerate}
	
	Figure~\ref{fig:reduction1} is an illustration of this reduction. In
	Figure~\ref{fig:reduction1} (a), we have a poset $P$ of width $d=3$ whose
	elements are $1$, $2$, $3$, $4$, $5$, $6$, $7$, $8$ and $9$.  Using the
	above construction, we model the poset $P$ by the set inclusion with sets
	$M_1,\ldots,M_9$ as in Figure~\ref{fig:reduction1} (b),
	and then we construct the graph $G$ in Figure~\ref{fig:reduction1}
	(c), where the cyan set consists of all vertices of the unique maximum
	clique $C$ (assuming that there is an edge between each pair of vertices in this set).
	Each connected component of $G-C$ is a single vertex denoted
	by $v_1, \dots, v_9$ and adjacent to exactly the vertices from $M_1,\ldots,M_9$, respectively.
	
	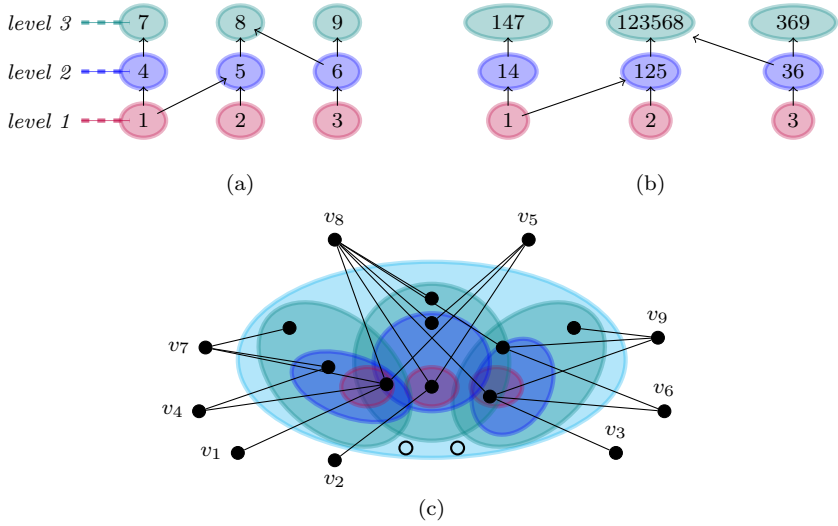
\begin{figure}[tbp]
		\centering
		\begin{tikzpicture}[xscale=0.85, yscale=0.65]

			\draw[rotate around={90:(-10.2,-1.2)}, color=purple, fill=purple, ultra thick, opacity=0.3] (-10.2,-1.2) ellipse (0.35cm and 0.35cm); 
			\draw[rotate around={90:(-8.7,-1.2)}, color=purple, fill=purple, ultra thick, opacity=0.3] (-8.7,-1.2) ellipse (0.35cm and 0.35cm); 
			\draw[rotate around={90:(-7.2,-1.2)}, color=purple, fill=purple, ultra thick, opacity=0.3] (-7.2,-1.2) ellipse (0.35cm and 0.35cm); 
			
			\draw[rotate around={90:(-10.2,-0.2)}, color=blue, fill=blue, ultra thick, opacity=0.3] (-10.2,-0.2) ellipse (0.35cm and 0.35cm);
			\draw[rotate around={90:(-8.7,-0.2)}, color=blue, fill=blue, ultra thick, opacity=0.3] (-8.7,-0.2) ellipse (0.35cm and 0.35cm);
			\draw[rotate around={90:(-7.2,-0.2)}, color=blue, fill=blue, ultra thick, opacity=0.3] (-7.2,-0.2) ellipse (0.35cm and 0.35cm);
			
			\draw[rotate around={50:(-10.2,0.8)}, color=teal, fill=teal, ultra thick, opacity=0.3] (-10.2,0.8) ellipse (0.35cm and 0.35cm); 
			\draw[rotate around={50:(-8.7,0.8)}, color=teal, fill=teal, ultra thick, opacity=0.3] (-8.7,0.8) ellipse (0.35cm and 0.35cm); 
			\draw[rotate around={-50:(-7.2,0.8)}, color=teal, fill=teal, ultra thick, opacity=0.3] (-7.2,0.8) ellipse (0.35cm and 0.35cm); 
			
			\node (1) at (-10.2,-1.2) {$1$};
			\node (2) at (-8.7,-1.2) {$2$};
			\node (3) at (-7.2,-1.2) {$3$};
			\node (4) at (-10.2,-0.2) {$4$};
			\node (5) at (-8.7,-0.2) {$5$};
			\node (6) at (-7.2,-0.2) {$6$};
			\node (7) at (-10.2,0.8) {$7$};
			\node (8) at (-8.7,0.8) {$8$};
			\node (9) at (-7.2,0.8) {$9$};
			
			\node (100) at (-11.8,-1.2) {\emph{level 1}};
			\draw[purple, fill=purple, opacity=0.6, ultra thick, dashed] (100)--(1);
			
			\node (200) at (-11.8,-0.2) {\emph{level 2}};
			\draw[blue, fill=blue, opacity=0.6, ultra thick, dashed] (200)--(4);
			
			\node (300) at (-11.8,0.8) {\emph{level 3}};
			\draw[teal, fill=teal, opacity=0.6, ultra thick, dashed] (300)--(7);
			
			\draw[->] (1)--(4) node[midway, above]{};
			\draw[->] (6)--(8) node[midway, above]{};
			\draw[->] (1)--(5) node[midway, above]{};
			\draw[->] (2)--(5) node[midway, above]{};
			\draw[->] (3)--(6) node[midway, above]{};
			\draw[->] (4)--(7) node[midway, above]{};			
			\draw[->] (5)--(8) node[midway, above]{};	
			\draw[->] (6)--(9) node[midway, above]{};
			
			\node (11) at (-8.7,-2.5) {(a)};
			\label{aaa}
		\end{tikzpicture}
		\qquad\qquad
		\begin{tikzpicture}[xscale=0.75, yscale=0.65]

			\draw[rotate around={90:(-11.2,-1.2)}, color=purple, fill=purple, ultra thick, opacity=0.3] (-11.2,-1.2) ellipse (0.35cm and 0.35cm); 
			\draw[rotate around={90:(-8.7,-1.2)}, color=purple, fill=purple, ultra thick, opacity=0.3] (-8.7,-1.2) ellipse (0.35cm and 0.35cm); 
			\draw[rotate around={90:(-6.2,-1.2)}, color=purple, fill=purple, ultra thick, opacity=0.3] (-6.2,-1.2) ellipse (0.35cm and 0.35cm); 
			\draw[rotate around={90:(-11.2,-0.2)}, color=blue, fill=blue, ultra thick, opacity=0.3] (-11.2,-0.2) ellipse (0.35cm and 0.5cm);
			\draw[rotate around={90:(-8.7,-0.2)}, color=blue, fill=blue, ultra thick, opacity=0.3] (-8.7,-0.2) ellipse (0.35cm and 0.5cm);
			\draw[rotate around={90:(-6.2,-0.2)}, color=blue, fill=blue, ultra thick, opacity=0.3] (-6.2,-0.2) ellipse (0.35cm and 0.5cm);
			\draw[rotate around={90:(-11.2,0.8)}, color=teal, fill=teal, ultra thick, opacity=0.3] (-11.2,0.8) ellipse (0.35cm and 0.75cm); 
			\draw[rotate around={90:(-8.7,0.8)}, color=teal, fill=teal, ultra thick, opacity=0.3] (-8.7,0.8) ellipse (0.35cm and 0.75cm); 
			\draw[rotate around={90:(-6.2,0.8)}, color=teal, fill=teal, ultra thick, opacity=0.3] (-6.2,0.8) ellipse (0.35cm and 0.75cm); 
			\node (1) at (-11.2,-1.2) {$1$};
			\node (2) at (-8.7,-1.2) {$2$};
			\node (3) at (-6.2,-1.2) {$3$};
			\node (4) at (-11.2,-0.2) {$14$};
			\node (5) at (-8.7,-0.2) {$125$};
			\node (6) at (-6.2,-0.2) {$36$};
			\node (7) at (-11.2,0.8) {$147$};
			\node (8) at (-8.7,0.8) {$123568$};
			\node (9) at (-6.2,0.8) {$369$};
			
			\draw[->] (1)--(4) node[midway, above]{};
			\draw[->] (6)--(8) node[midway, above]{};
			\draw[->] (1)--(5) node[midway, above]{};
			\draw[->] (2)--(5) node[midway, above]{};
			\draw[->] (3)--(6) node[midway, above]{};
			\draw[->] (4)--(7) node[midway, above]{};			
			\draw[->] (5)--(8) node[midway, above]{};	
			\draw[->] (6)--(9) node[midway, above]{};
			
			\node (11) at (-8.7,-2.5) {(b)};
		\end{tikzpicture}

		\begin{tikzpicture}[xscale=0.85, yscale=0.65]
			\draw[rotate around={90:(-2,0.55)}, color=cyan, fill=cyan, ultra thick, opacity=0.3] (-2,0.55) ellipse (2cm and 3cm); 
			
			\draw[rotate around={150:(-0.5,0.25)}, color=teal, fill=teal, ultra thick, opacity=0.3] (-0.5,0.25) ellipse (1cm and 1.6cm); 
			\draw[rotate around={165:(-0.75,0)}, color=blue, fill=blue, ultra thick, opacity=0.3] (-0.75,0) ellipse (0.6cm and 1cm); 
			\draw[rotate around={90:(-1,0)}, color=purple, fill=purple, ultra thick, opacity=0.3] (-1,0) ellipse (0.4cm and 0.4cm); 
			\draw[rotate around={90:(-2,0.5)}, color=teal, fill=teal, ultra thick, opacity=0.3] (-2,0.5) ellipse (1.6cm and 1.2cm); 
			\draw[rotate around={90:(-2,0.5)}, color=blue, fill=blue, ultra thick, opacity=0.3] (-2,0.5) ellipse (1cm and 0.9cm); 
			\draw[rotate around={90:(-2,0)}, color=purple, fill=purple, ultra thick, opacity=0.3] (-2,0) ellipse (0.4cm and 0.4cm); 
			\draw[rotate around={30:(-3.5,0.25)}, color=teal, fill=teal, ultra thick, opacity=0.3] (-3.5,0.25) ellipse (1cm and 1.6cm); 
			\draw[rotate around={60:(-3.25,0)}, color=blue, fill=blue, ultra thick, opacity=0.3] (-3.25,0) ellipse (0.6cm and 1cm); 
			\draw[rotate around={90:(-3,0)}, color=purple, fill=purple, ultra thick, opacity=0.3] (-3,0) ellipse (0.4cm and 0.4cm); 
			
			\node at (-2.7,0.05) [circle,draw, fill=black, opacity=1, color=black, inner sep=0.6mm] (1) {};
			\node[label=left:{$v_1$}] at (-5,-1.35) [circle,draw, fill=black, opacity=1, color=black, inner sep=0.6mm] (11) {};
			
			\node at (-3.6,0.4) [circle,draw, fill=black, opacity=1, color=black, inner sep=0.6mm] (4) {};
			\node[label=left:{$v_4$}] at (-5.6,-0.5) [circle,draw, fill=black, opacity=1, color=black, inner sep=0.6mm] (44) {};
			
			\node at (-4.2,1.2) [circle,draw, fill=black, opacity=1, color=black, inner sep=0.6mm] (7) {};
			\node[label=left:{$v_7$}] at (-5.5,0.8) [circle,draw, fill=black, opacity=1, color=black, inner sep=0.6mm] (77) {};
			
			\node at (-2,0) [circle,draw, fill=black, opacity=1, color=black, inner sep=0.6mm] (2) {};
			\node[label=below:{$v_2$}] at (-3.5,-1.5) [circle,draw, fill=black, opacity=1, color=black, inner sep=0.6mm] (22) {};
			
			\node at (-2,1.3) [circle,draw, fill=black, opacity=1, color=black, inner sep=0.6mm] (5) {};
			\node[label=above:{$v_5$}] at (-0.5,3) [circle,draw, fill=black, opacity=1, color=black, inner sep=0.6mm] (55) {};
			
			\node at (-2,1.8) [circle,draw, fill=black, opacity=1, color=black, inner sep=0.6mm] (8) {};
			\node[label=above:{$v_8$}] at (-3.5,3) [circle,draw, fill=black, opacity=1, color=black, inner sep=0.6mm] (88) {};
			
			\node at (-1.1,-0.2) [circle,draw, fill=black, opacity=1, color=black, inner sep=0.6mm] (3) {};
			\node[label=above:{$v_3$}] at (0.85,-1.35) [circle,draw, fill=black, opacity=1, color=black, inner sep=0.6mm] (33) {};
			
			\node at (-0.9,0.8) [circle,draw, fill=black, opacity=1, color=black, inner sep=0.6mm] (6) {};
			\node[label=above:{$v_6$}] at (1.6,-0.5) [circle,draw, fill=black, opacity=1, color=black, inner sep=0.6mm] (66) {};
			
			\node at (0.2,1.2) [circle,draw, fill=black, opacity=1, color=black, inner sep=0.6mm] (9) {};
			\node[label=above:{$v_9$}] at (1.5,1) [circle,draw, fill=black, opacity=1, color=black, inner sep=0.6mm] (99) {};
			
			\node at (-2.4,-1.25) [circle,draw, fill=none, thick, color=black, inner sep=0.6mm] {};
			\node at (-1.6,-1.25) [circle,draw, fill=none, thick, color=black, inner sep=0.6mm] {};
			
			\draw[-] (1) -- (11) node[midway, above]{};
			\draw[-] (44) -- (1) node[midway, above]{};
			\draw[-] (44) -- (4) node[midway, above]{};
			\draw[-] (1) -- (77) node[midway, above]{};
			\draw[-] (4) -- (77) node[midway, above]{};
			\draw[-] (7) -- (77) node[midway, above]{};
			\draw[-] (2) -- (22) node[midway, above]{};
			\draw[-] (55) -- (2) node[midway, above]{};
			\draw[-] (55) -- (1) node[midway, above]{};
			\draw[-] (55) -- (5) node[midway, above]{};
			\draw[-] (88) -- (1) node[midway, above]{};	
			\draw[-] (88) -- (6) node[midway, above]{};
			\draw[-] (88) -- (3) node[midway, above]{};	
			\draw[-] (88) -- (2) node[midway, above]{};
			\draw[-] (88) -- (5) node[midway, above]{};
			\draw[-] (88) -- (8) node[midway, above]{};
			\draw[-] (33) -- (3) node[midway, above]{};
			\draw[-] (66) -- (3) node[midway, above]{};
			\draw[-] (66) -- (6) node[midway, above]{};
			\draw[-] (99) -- (3) node[midway, above]{};
			\draw[-] (99) -- (6) node[midway, above]{};
			\draw[-] (99) -- (9) node[midway, above]{};
			
			\node (111) at (-2,-2.5) {(c)};
			\label{bbb}	
		\end{tikzpicture}
		\caption{(a) An example poset $P$ (as a Hasse diagram), and its levels. 
			(b) A representation of this poset by the set inclusion.
			(c) The corresponding $S_d$-graph $G$ where the cyan set corresponds to
			the central maximal clique.}	
		\label{fig:reduction1}	
	\end{figure}
	
	\begin{lem}\label{lem:posettoSd}
		The graph $G$ constructed from a poset $P$ of width $d$ using the above reduction is an $S_d$-graph.
	\end{lem}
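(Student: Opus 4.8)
The plan is to verify directly the characterization of Proposition~\ref{prop:Sdcharacteriz}, taking the cyan clique $C$ (of size $|M|+2$) as the candidate central clique. First I would confirm that $C$ is a maximal clique of $G$: it is a clique by construction, and each added vertex $v_i$ is adjacent only to the (at most $|M|$, hence non-dummy) vertices of $M_i\subseteq C$ and never to the two dummy vertices. Thus no $v_i$ can be appended to $C$, so $C$ is maximal; in fact, since any clique through a $v_i$ has size at most $|M_i|+1\le |M|+1<|C|$, the clique $C$ is the unique maximum-size clique, which is what makes the dummy vertices useful for the subsequent reduction.

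Next I would pin down the bridge structure of $C$. As the vertices $v_i$ form an independent set whose only neighbours lie in $C$, the connected components of $G-C$ are exactly the singletons $X_i=\{v_i\}$ for $i\in P$, and each has upper and lower attachment equal to $N_C(v_i)=M_i$. Because the down-sets $M_i=\{j:j\preceq_P i\}$ of distinct elements are distinct (antisymmetry of $\preceq_P$), no two of these components are equivalent, so there are exactly $n$ bridges. I would then dispatch the first bullet of Proposition~\ref{prop:Sdcharacteriz}: each $G[C\cup\{v_i\}]$ is a clique together with a single extra vertex adjacent to the subset $M_i$, which is plainly an interval graph (represent the vertices of $M_i$ by intervals extending to the left of the common clique point and place the interval of $v_i$ in that left region so that it meets exactly the members of $M_i$).

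The core of the argument is the second bullet, namely that the central poset $P'$ of $G$ on $C$ is covered by $d$ chains; for this I would show $P'\cong P$. Since every component is a single vertex, ${N_C}^U(X_i)={N_C}^L(X_i)=M_i$, so by definition $X_i\preceq_{P'}X_j$ if and only if $M_i\subseteq M_j$. The key equivalence is $M_i\subseteq M_j \iff i\preceq_P j$: the forward direction follows from $i\in M_i$, and the backward direction is transitivity of $\preceq_P$. Hence $v_i\mapsto i$ is a poset isomorphism $P'\to P$, so $P'$ has width $d$, and by the characterisation of width recalled in Section~\ref{introSD} (equivalently, Dilworth's theorem) it is covered by $d$ chains. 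Proposition~\ref{prop:Sdcharacteriz} applied to $C$ then yields that $G$ is an $S_d$-graph.

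I expect the only genuinely delicate point to be the equivalence $M_i\subseteq M_j\iff i\preceq_P j$ and the resulting identification of the central poset $P'$ with $P$; the remaining steps (maximality of $C$, the singleton component structure, and the interval-graph property of each $G[C\cup\{v_i\}]$) are routine verifications that require only the definitions from Section~\ref{introSD}.
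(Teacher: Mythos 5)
Your proof is correct and rests on the same key observations as the paper's: the attachment of the singleton component $\{v_i\}$ is exactly $M_i$, the inclusion order on the $M_i$ reproduces $\preceq_P$, and a chain cover of $P$ by $d$ chains (Dilworth) yields the $d$ rays. The only presentational difference is that you verify the two conditions of Proposition~\ref{prop:Sdcharacteriz} while the paper builds the $S_d$-representation directly from the chain cover; these amount to the same argument, with your version spelling out the routine checks (maximality of $C$, each bridge being interval) that the paper leaves implicit.
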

	
	\begin{proof}
		$C$ is the maximal (and also maximum) clique of $G$, represented in the
		branching node of a subdivision of $S_d$.
		A chain cover of $P$ of size $\leq d$ determines
		an ordered distribution of the vertices $V(G-C)=\{v_1,\ldots,v_n\}$ to the $d$ rays of the representation, 
		such that their adjacent vertices in $C$ form a chain by inclusion on each ray.
		Hence this arrangement is realizable as interval graphs on the $d$ rays of a subdivision of $S_d$.
		\qed\end{proof}

	\begin{restatable}{theorem}{redd}
		\label{theo:red2}
		The isomorphism problem of (colored) posets of width $d$ reduces in
		polynomial time to the isomorphism problem of $S_d$-graphs.
	\end{restatable}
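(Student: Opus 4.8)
The plan is to show that the polynomial-time reduction $P\mapsto G$ described above, already shown to produce an $S_d$-graph in Lemma~\ref{lem:posettoSd}, is \emph{faithful}: for two width-$d$ posets $P_1,P_2$ the constructed graphs $G_1,G_2$ satisfy $P_1\simeq P_2$ if and only if $G_1\simeq G_2$. To cover the colored variant, I would first augment the construction so that each pendant vertex $v_i$, representing the poset element $i$, carries a small private gadget encoding the color of $i$ -- for instance a pendant path whose length is determined by the color (there are at most $n$ colors, so this stays polynomial). Such a gadget is attached only to $v_i$, has no neighbor in $C$, and leaves the connected component of $v_i$ an interval graph whose $C$-neighborhoods form a chain by inclusion ($M_i=N_C(v_i)$ for $v_i$ and $\emptyset$ for the gadget vertices). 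Hence $G$ remains an $S_d$-graph of the same width by Proposition~\ref{prop:Sdcharacteriz}, and its unique maximum clique $C$ is unaffected.

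The forward direction is routine. Recall that $i\preceq_P j$ is equivalent to $M_i\subseteq M_j$, since $M_i$ is the principal down-set of $i$ and $i\in M_i$. A color-preserving poset isomorphism $\sigma\colon P_1\to P_2$ therefore induces the obvious bijection that sends the $M$-vertex of $C$ representing element $i$ to the one representing $\sigma(i)$, the two dummy vertices to the two dummy vertices, each $v_i$ to $v_{\sigma(i)}$, and the color gadget of $v_i$ to that of $v_{\sigma(i)}$ (these gadgets are isomorphic because the colors agree). Edges are preserved because $v_i\sim m_j$ iff $j\in M_i$ iff $\sigma(j)\in M_{\sigma(i)}$ iff $v_{\sigma(i)}\sim m_{\sigma(j)}$, giving $G_1\simeq G_2$.

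The crux is the backward direction, where the colored poset must be reconstructed canonically from the abstract graph. The two dummy vertices were added for exactly this purpose: $C$ is the \emph{strict} unique maximum clique. Indeed, each set $M_i\cup\{v_i\}$ is a clique of size at most $n+1$, no two $v_i$ are adjacent (they lie in distinct components of $G-C$), and every color gadget contributes only cliques of size two, whereas $\lvert C\rvert=n+2$. Consequently any isomorphism $\phi\colon G_1\to G_2$ must map $C_1$ onto $C_2$, and therefore map $G_1-C_1$ onto $G_2-C_2$ and connected components to connected components. Within each component, the unique vertex adjacent to $C$ is the element-vertex $v_i$, so $\phi$ sends element-vertices to element-vertices and induces a bijection $\sigma$ on poset elements; the gadgets force $\sigma$ to preserve colors. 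Finally $\phi$ restricts to a bijection $C_1\to C_2$ carrying $M_i=N_{C_1}(v_i)$ onto $M_{\sigma(i)}=N_{C_2}(v_{\sigma(i)})$, and a bijection preserves inclusions, so $M_i\subseteq M_j$ iff $M_{\sigma(i)}\subseteq M_{\sigma(j)}$; since the order of $P$ equals inclusion of the down-sets $M_i$, the map $\sigma$ is the desired colored poset isomorphism.

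I expect the main obstacle to be precisely this canonical recognition of the central clique: the whole backward argument hinges on $C$ being the strict unique maximum clique, which is what the two dummy vertices guarantee and without which a graph isomorphism need not respect the intended structure. Once that is secured, the remaining bookkeeping -- identifying element-vertices as the components' unique $C$-neighbors, and translating neighborhood inclusion back into the poset order -- is straightforward.
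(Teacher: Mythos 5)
Your treatment of the uncolored case is correct and essentially identical to the paper's proof: both directions rest on the two dummy vertices making $C$ the strict unique maximum clique, so that any graph isomorphism must map $C_1$ onto $C_2$ and therefore induces an inclusion-preserving bijection on the down-sets $M_i$, which is exactly a poset isomorphism.

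The genuine gap is your color gadget. Attaching a pendant path to $v_i$ puts vertices with \emph{empty} neighborhood in $C$ into the connected component $X_i$ of $G-C$, so its lower attachment becomes $N_C^{\,L}(X_i)=\emptyset$. Comparability in the central poset requires $N_C^{\,U}(X_i)\subseteq N_C^{\,L}(X_j)$, which now fails for every pair (each $M_i\ni i$ is nonempty), so the central poset degenerates to an antichain of size $n$ and the second condition of Proposition~\ref{prop:Sdcharacteriz} would demand $n$ chains rather than $d$. This is not an artifact of the characterization: already for the width-$1$ poset $1\prec 2\prec 3$ your construction contains the asteroidal triple formed by the three pendant gadget vertices (connect any two of them through $v_i$--$m_1$--$v_j$, avoiding the closed neighborhood of the third), so the resulting graph is not even an interval graph, i.e.\ not an $S_1$-graph, and Lemma~\ref{lem:posettoSd} fails for it. Geometrically, a component placed in the interior of a ray is entirely covered by the subtrees of the $C$-vertices reaching the components beyond it, leaving no room for a vertex with no neighbor in~$C$. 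The paper sidesteps this by encoding the color $c$ of element $j$ as a clique $K_c$ that \emph{replaces} the single vertex $v_j$, every vertex of which is attached to exactly $M_j$; then $N_C^{\,U}=N_C^{\,L}=M_j$ is unchanged, the central poset and its width are preserved, and one only needs to increase the number of dummy vertices beyond the largest color so that $C$ remains the unique maximum clique. Your backward argument would then go through verbatim with ``pendant vertex'' replaced by ``pendant clique of size equal to the color''.
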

	
	\begin{proof} 
		Let $P$ and $Q$ with $n$ elements be two posets of width $d$, and
		$G$ and $H$ be the $S_d$-graphs (cf.~Lemma~\ref{lem:posettoSd}) formed from $P$ and $Q$,
		respectively, by the construction described above.
		
		Assume that $P$ and $Q$ are isomorphic under a bijection $f:P\to Q$.
		Then, $f$ directly defines a bijection from the central clique $C$
		to $D$, mapping the dummy vertices of $C$ to those of $D$ in any order, and a
		bijection from $V(G-C)=\{v_1,\ldots,v_n\}$ to $V(H-D)= \{w_1,\ldots,w_n\}$.
		The composition of these two mappings from $C$ to $D$ and $G-C$ to $H-D$ is clearly a graph isomorphism between $G$ and~$H$.
		
		Assume that $G$ and $H$ are isomorphic under a bijection $g:V(G)\to V(H)$.
		Since $C \subseteq V(G)$ and $D \subseteq V(H)$ are the unique maximum cliques, we have that $g(C)=D$.
		Therefore, a restriction of $g$ is a bijection from
		$V(G-C)=\{v_1,\ldots,v_n\}$ to $V(H-D)=\{w_1,\ldots,w_n\}$.
		With respect to the construction of $G$ and $H$, this restriction induces a bijection
		between the sets $\{M_1,\ldots,M_n\}$ of $P$ (as represented by
		$v_1,\ldots,v_n$) and the sets $\{N_1,\ldots,N_n\}$ of $Q$ (as represented
		by $w_1,\ldots,w_n$) and, in turn, a bijection $g':P\to Q$.
		If $a\preceq_P b$, then $M_a\subseteq M_b$ and the neighborhood of $v_a$ in
		$G$ is included in the neighborhood of $v_b$.
		Then the neighborhood of $g(v_a)$ in $H$ is included in the neighborhood of
		$g(v_b)$ and, regarding to the induced bijection $g'$, 
		$N_{g'(a)}\subseteq N_{g'(b)}$ and $g'(a)\preceq_Q g'(b)$.
		The converse implication holds the same way, and so the posets $P$ and $Q$ are isomorphic.
		
		Lastly, we remark that if the input posets $P$ and $Q$ are given with colored elements, 
		and we are looking for color-preserving isomorphism, we can use the same approach.
		Let, say, the given distinct colors be denoted by $c_1,\ldots,c_r$ where $r\in\mathcal{O}(n)$.
		We then use the above reduction with the following changes:
		(i) If $j$ is a poset element of color $c_i$, then we represent $M_j$ in $G$ not by a single vertex $v_j$,
		but by a copy of the clique $K_i$. Therefore, only the mappings between the vertices contained in equal sized cliques are allowed corresponding to the color-preserving isomorphisms.
		(ii) The number of dummy vertices in this case is $r+1$ in order to mark $C$ of size $n+r+1$ as the unique maximum clique of $G$.
		\qed\end{proof}

	\section{Isomorphism of $S_d$-graphs in general, parameterized by~$d$}\label{hardsection}
	
	When $d$ is a part of the input, the isomorphism problem for $S_d$-graphs is
	GI-complete (Proposition~\ref{prop:SdGIc}), and we proved that $S_d$-graph
	isomorphism can be solved in \textbf{FPT}-time parameterized by the maximal
	clique size (Theorem~\ref{theo:SDSMallTheo}).  In this section, we consider
	$S_d$-graphs without bounding the clique size, and give an
	\textbf{FPT}-time algorithm solving their isomorphism parameterized by $d$.
	
	We first recall the notion of the \emph{automorphism group} which is closely
	related to the graph isomorphism problem.  An \emph{automorphism} is an
	isomorphism of a graph $G$ to itself, and the \emph{automorphism group} of
	$G$ is the group $Aut(G)$ of all automorphisms of $G$.  There exists an isomorphism
	from $G$ to $H$ if and only if the automorphism group of the disjoint union
	$G \uplus H$ contains a permutation exchanging the vertex sets of $G$ and $H$.  
	In fact, assuming connectivity of the graphs $G$ and $H$,
	it is enough to look for a permutation mapping some vertex of $G$ to a
	vertex of $H$, and only among generators of the automorphism group.
	For further details regarding the automorphism groups, see e.g., \cite{furst}.
	
	The elements of any poset $R$ can be partitioned into {\em levels}
	$L_i\subseteq R$ where $i\geq1$; $L_1$ is formed by the minimal elements of
	$R$, and $L_{i+1}$ is inductively formed by the minimal elements of
	$R\setminus(L_1\cup\ldots\cup L_i)$.  
	
	Consider now two $S_d$-graphs $G$ and $H$.
	As forwarded in Section~\ref{reductionsection}, we will approach the isomorphism problem for
	$G$ and $H$ via the automorphism group of the union of the underlying colored central posets of bounded width,
	whose elements are colored regarding the isomorphism types of their interval bridges in $G$ and~$H$.
	However, we will also need to ensure that an automorphism on the central
	posets is indeed consistent with some permutation on the union of the central cliques.
	In contrast to the exhaustive approach used in Section~\ref{easysection},
	we will utilize an involved group-computing approach respecting the attachment sets
	of bridges on $C\cup D$ (so-called cardinality Venn diagram on $C\cup D$).
	
	We first give the following overview of our approach, and then we discuss it in more detail
	(cf. Algorithms~\ref{alg:Gammaprime} and~\ref{alg:fullISO}).
	
	\begin{proc}\label{proc:isomore}\rm~
		Given two $S_d$-graphs $G$ and $H$ on $n$ vertices with
		central cliques $C \subseteq G$ and $D \subseteq H$ such that $|C|=|D|$, let $\mathcal{X}=\{X_1,\ldots,X_k\}$ and $\mathcal{Y}=\{Y_1,\ldots,Y_k\}$ be the sets of connected components in $G-C$ and $H-D$, respectively.
		Assume (as in Algorithm~\ref{pseudoEasySD}) that the equivalent connected components are joined into single bridge(s) of $C$ and $D$, respectively.
		Let $K=G\uplus H$ be the disjoint union of our graphs.
		For $Z\in\mathcal{X}\cup\mathcal{Y}$, let $K(Z)$ denote
		the induced subgraph $G[C\cup Z]$ if $Z\in\mathcal{X}$, and the induced subgraph $H[D\cup Z]$ otherwise. 
		Call an isomorphism of $K(Z)$ to $K(Z')$ {\em respectful} if it maps
		$V(K(Z))\cap(C\cup D)$ to $V(K(Z'))\cap(C\cup D)$.
		We test the existence of an isomorphism between $G$ and $H$ mapping $C$ to $D$ as follows:
		\begin{enumerate}	
			\item Construct the central posets $P$ and $Q$ on $\mathcal{X}$ and $\mathcal{Y}$, 
			respectively, as described in Section~\ref{introSD}.
			Make the disjoint union $R=P\uplus Q$, and compute the levels of~$R$.
			For each pair of components $Z,Z'\in\mathcal{X}\cup\mathcal{Y}$
			on the same level of $R$, compare $K(Z)$
			and $K(Z')$ to respectful isomorphism of interval graphs, and color $Z$ and $Z'$ with respect to the isomorphism type.
			That is, $Z$ and $Z'$ receive the same color if and only if $Z$ and $Z'$ are
			on the same level and $K(Z)\simeq K(Z')$ with a respectful isomorphism.
			
			\item Respecting the colors by the previous step, compute the color-preserving automorphism group $\Gamma$ of $R$. We use Theorem~\ref{thm:bdcm} and Corollary~\ref{cor:bdcm} here.
			
			\item Compute the subgroup $\Gamma'$ of $\Gamma$, consisting of
			those automorphisms $\varrho$ of $R$ for which there exists a permutation
			$f_{\varrho}$ of the set $C\cup D$ such that the following holds;
			for every component $Z\in\mathcal{X}\cup\mathcal{Y}$,
			there is a respectful isomorphism from $K(Z)$ to $K(\varrho(Z))$
			whose restriction to the intersection with $C\cup D$ equals 
			the respective restriction of~$f_{\varrho}$.
			We use Lemma~\ref{lem:preciseVenn}, and Theorem~\ref{thm:furstgen} with
			Corollary~\ref{cor:d-tuples} here.
			
			\item If $P$ and $Q$ are swapped in some automorphism from $\Gamma'$,
			then return that $G$ and $H$ are {\em isomorphic}.
		\end{enumerate}
	\end{proc}
	
	If the above procedure does not say that $G$ and $H$ are isomorphic for
	any pair of maximal cliques $C \subseteq G$ and $D \subseteq H$,
	we return that $G$ and $H$ are {\em non-isomorphic}.
	
	In step 1 of Procedure~\ref{proc:isomore}, we construct the central posets $P$ and $Q$, 
	the disjoint union $R=P\uplus Q$, and its levels. 
	Then, we compute colors on $R$ using the interval graph isomorphism algorithm of~\cite{recogIntervalLinear}. 
	All these are easily executed in polynomial time.
	We will show in detail how to achieve an \textbf{FPT}-time implementation of the other steps in the following subsections.

	\subsection{Computing the automorphism group of a poset of width $d$}
	
	Note that poset isomorphism problem is GI-complete \cite{isoPoset}. 
	However, posets of width $d$ are special and can be handled using the
	following classical concept of bounded color multiplicity.
	
	A \emph{$d$-bounded color multiplicity graph} is a graph $G$ whose 
	vertex set is arbitrarily partitioned into $k$ color classes 
	$V(G)=V_1\cup\ldots\cup V_k$ such that $V_i \cap V_j = \emptyset$ for all $1 \leq i < j \leq k$.
	The number $k$ of colors is arbitrary, but for all $1\leq i\leq k$, the
	cardinality $|V_i|$, called the multiplicity of $V_i$, is at most $d$. 
	We apply the following classical result in our approach.
	
	\begin{theorem}[Babai~\cite{babai-bdcm}, with 
		Furst, Hopcroft and Luks~\cite{furst-bdcm}]\label{thm:bdcm}
		The color-preserving automorphism group (i.e., the generators of it) of a $d$-bounded color multiplicity graph
		can be determined in \textbf{FPT}-time parameterized by~$d$.
	\end{theorem}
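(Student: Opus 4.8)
The plan is to reduce the problem to the classical Furst--Hopcroft--Luks group-computing toolkit and then to refine an explicitly described ``full'' color-preserving group down to the true automorphism group one adjacency constraint at a time, keeping the index of each refinement bounded by a function of $d$. First observe that every color-preserving automorphism must independently permute each color class, so the sought group is a subgroup of $\Gamma_0 := \mathrm{Sym}(V_1)\times\cdots\times\mathrm{Sym}(V_k)$. This $\Gamma_0$ has orbits exactly the color classes, each of size at most $d$, and is presented by $O(kd)$ obvious generators (e.g.\ adjacent transpositions within each class); we never enumerate it. The goal is then to cut $\Gamma_0$ down to those permutations that additionally preserve the edge relation of $G$.

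The engine is the Furst--Hopcroft--Luks machinery \cite{furst}: from a generating set of a permutation group one computes, in polynomial time, a base and strong generating set, and with it a membership test, the group order, and point-stabilizer transversals. The one derived fact I would lean on is that, given a permutation group $\Gamma$ by generators together with a polynomial-time membership oracle for a subgroup $\Delta \le \Gamma$ of \emph{bounded} index $m$, a generating set of $\Delta$ can be computed in time polynomial in $m$ and the input size: explore the at most $m$ right cosets by a breadth-first search over the generators, using the oracle to identify cosets, output the resulting Schreier generators, and re-reduce to a strong generating set so the description stays polynomial.

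Next I would express the color-preserving automorphism group as an iterated intersection of bounded-index refinements. For each pair of color classes $\{V_i,V_j\}$ (allowing $i=j$) --- there are only $O(k^2)$ of them --- let $\phi_{ij}$ be the projection of $\Gamma_0$ onto the factors indexed by $\{i,j\}$, landing in a group of order at most $(d!)^2$, and let $B_{ij}$ be the subgroup of that projection preserving the adjacency between $V_i$ and $V_j$ (resp.\ inside $V_i$ when $i=j$). A permutation preserves \emph{all} edges exactly when $\phi_{ij}(\sigma)\in B_{ij}$ for every pair, so the target group equals $\bigcap_{i,j}\phi_{ij}^{-1}(B_{ij})$. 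I would process the pairs one at a time, maintaining the current group $\Gamma^{(t)}$ by a strong generating set and refining to $\Gamma^{(t+1)}=\{\sigma\in\Gamma^{(t)}:\phi_{ij}(\sigma)\in B_{ij}\}$. Since $\phi_{ij}(\Gamma^{(t)})$ has order at most $(d!)^2$, the index $[\Gamma^{(t)}:\Gamma^{(t+1)}]$ is at most $(d!)^2$, bounded purely in terms of $d$; and the membership oracle for $\Gamma^{(t+1)}$ is immediate (apply $\phi_{ij}$ and test in the constant-size $B_{ij}$). Each refinement thus falls under the bounded-index computation above and runs in $f(d)\cdot n^{O(1)}$ time, and composing the $O(k^2)=O(n^2)$ refinements yields a strong generating set of the color-preserving automorphism group in total \textbf{FPT} time parameterized by $d$.

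The main obstacle I expect is precisely the bounded-index subgroup computation: keeping everything polynomial while never materializing the potentially enormous kernels of the projections $\phi_{ij}$. The resolution is the Furst--Hopcroft--Luks discipline of manipulating only generating sets. The index of each refinement is capped at $(d!)^2$ regardless of how large $k$ grows, so the number of Schreier generators produced per step is polynomial, and re-reducing to a strong generating set after each of the $O(n^2)$ steps prevents the generator description from blowing up across the iteration. In this way all dependence on $n$ is confined to polynomial factors, with the super-polynomial cost isolated in the $(d!)^2$-type factors, as required for an \textbf{FPT} bound in $d$.
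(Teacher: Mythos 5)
Your proposal is correct and follows essentially the same route as the paper's own outline of Babai's tower-of-groups procedure: start from the product of symmetric groups on the color classes, then refine pair-by-pair of color classes, each refinement being a subgroup of index at most $(d!)^2$ with an easy polynomial-time membership test, computed via the Furst--Hopcroft--Luks bounded-index subgroup routine (the paper's Theorem~\ref{thm:furstgen}). The only cosmetic difference is that you phrase each refinement through the projection $\phi_{ij}$ onto two factors rather than through the cumulative edge sets $E_1\cup\dots\cup E_i$, which yields the same groups and the same index bound.
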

	
	Consider a poset $R$ of width $\leq d$ 
	and the levels $L_1,\ldots,L_k$ of~$R$, where $|L_i|\leq d$ for $1\leq i\leq k$. By having the levels of $R$ as color classes, we can ignore the edge directions since the colors will directly correspond to the levels (from a vertex of the color corresponding to the lower level to the vertex of the color corresponding to the higher level). Therefore, a poset of width $d$ is a $d$-bounded multiplicity graph.
	Note that any automorphism of $R$ preserves its levels and this corresponds to the color-preserving automorphisms of a bounded multiplicity graph. Hence we have the following corollary:
	
	\begin{cor}[\cite{babai-bdcm,furst-bdcm}]\label{cor:bdcm}
		The automorphism group of a poset $R$ of width $d$ can be determined in \textbf{FPT}-time parameterized by~$d$.
	\end{cor}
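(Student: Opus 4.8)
The plan is to reduce the automorphism computation for a width-$d$ poset $R$ to the already-established case of $d$-bounded color multiplicity graphs, so that Theorem~\ref{thm:bdcm} can be invoked directly. First I would observe that a poset of width $d$ can be decomposed into its levels $L_1,\ldots,L_k$ using the inductive definition recalled above, where $L_1$ consists of the minimal elements and each $L_{i+1}$ of the minimal elements of what remains after deleting the earlier levels. The key structural fact is that, by Dilworth's theorem (equivalently, the characterization of width as the minimum number of covering chains recalled in Section~\ref{introSD}), no antichain exceeds $d$ elements; since every level $L_i$ is itself an antichain in $R$, we immediately get $|L_i|\leq d$ for all~$i$. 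This bounds the multiplicity of each class by~$d$.

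Next I would encode $R$ as a $d$-bounded color multiplicity graph by taking the comparability graph (or equivalently the Hasse/underlying graph together with a consistent labeling) and coloring each element $x$ by the index $i$ of the level $L_i$ to which it belongs. The crucial point, emphasized in the paragraph preceding the corollary, is that the level of each element is an isomorphism invariant: every automorphism of $R$ preserves the order relation, hence preserves minimality and, inductively, preserves the level partition. Consequently the automorphism group $\mathrm{Aut}(R)$ coincides exactly with the color-preserving automorphism group of the colored graph just constructed, where the colors are the levels. Because the levels are antichains of size at most~$d$, this is precisely a $d$-bounded color multiplicity graph, and the encoding is computable in polynomial time (the levels are obtained by repeated extraction of minimal elements, and the comparability relation is read off directly from~$R$).

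Finally I would apply Theorem~\ref{thm:bdcm} to this colored graph to obtain generators of its color-preserving automorphism group in \textbf{FPT}-time parameterized by~$d$, and conclude via the identification above that these are exactly generators of $\mathrm{Aut}(R)$. Since the reduction runs in polynomial time and only the black-box group computation depends on~$d$ through the function $f(d)$ supplied by Theorem~\ref{thm:bdcm}, the total running time is $f(d)\cdot n^{\mathcal{O}(1)}$, giving the claimed \textbf{FPT} bound.

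The only point requiring genuine care — and the place I expect a referee to look closely — is the equivalence between automorphisms of the poset $R$ and color-preserving automorphisms of the constructed graph. One must verify both directions: that every order-automorphism respects the level coloring (the forward direction, which follows from the invariance of minimality under order-isomorphism), and, more subtly, that a color-preserving automorphism of the comparability graph is automatically order-preserving rather than merely edge-preserving. The latter needs the observation that the direction of each comparable pair is recoverable from the level colors, since in a comparison $x\prec_R y$ the element $x$ always lies on a strictly lower level than~$y$; hence an edge of the comparability graph together with the two level-colors of its endpoints unambiguously determines its orientation, so preserving colors and edges forces preserving the order direction. This is exactly the argument sketched in the text (``we can ignore the edge directions since the colors will directly correspond to the levels''), and making it airtight is the heart of the proof.
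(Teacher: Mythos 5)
Your proposal is correct and follows essentially the same route as the paper: level decomposition with $|L_i|\leq d$ because each level is an antichain, coloring by level to obtain a $d$-bounded color multiplicity graph, and invoking Theorem~\ref{thm:bdcm}, with your careful two-directional argument about recovering edge orientations from level colors being exactly the paper's remark that ``we can ignore the edge directions since the colors will directly correspond to the levels.'' The only cosmetic note is that bounding antichains by $d$ needs no appeal to Dilworth's theorem --- it is immediate from the definition of width as the maximum antichain size.
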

	
	By Corollary~\ref{cor:bdcm}, we can finish step 2 of
	Procedure~\ref{proc:isomore} in \textbf{FPT}-time parameterized by $d$. 
	However, this is not all.
	Having isomorphic central posets only ensures a kind of structural
	equivalence between the chosen central cliques of our graphs $G$ and $H$, regardless of
	the existence of a bijection between the elements of the central cliques.
	
	In Figure~\ref{fig:sharedelements}, as an illustration example, we have two isomorphic
	colored posets $P$ and $Q$ obtained from graphs $G$ and $H$ with the central
	clique $C=D=\{1,2,\ldots,8\}$ (shaded gray), such that the components of
	$G-C$ (of $H-C$) are $9$ singleton vertices having attachments in exactly
	the listed vertices of~$C$ (of~$D$).  The colors are fully determined by the
	cardinality of the shown attachments.  In step 2, $P$ and $Q$ can be swapped
	respecting inclusion and the colors.  However, the components of $G-C$ with
	blue attachments in $P$ (in Figure~\ref{fig:sharedelements}, it is the
	second level marked with an arrow from the right)
	have a common neighbor ($3$) while the analogous two
	components of $H-D$ with blue attachments in $Q$ have no common neighbor. 
	Therefore, the cardinalities of the intersections of blue attachments in $P$
	and in $Q$ differ, which means that there is no bijection between $C$ and $D$
	which maps these attachments to each other, and the corresponding graphs $G$
	and $H$ are hence not isomorphic.
	
	\begin{figure}[tb]
		\centering
		\begin{tikzpicture}[scale=0.8]

			\draw[rotate around={90:(-10.9,-1.2)}, color=purple, fill=purple, ultra thick, opacity=0.3] (-10.9,-1.2) ellipse (0.35cm and 0.35cm); 
			\draw[rotate around={90:(-9.5,-1.2)}, color=purple, fill=purple, ultra thick, opacity=0.3] (-9.5,-1.2) ellipse (0.35cm and 0.35cm); 
			\draw[rotate around={90:(-7.9,-1.2)}, color=purple, fill=purple, ultra thick, opacity=0.3] (-7.9,-1.2) ellipse (0.35cm and 0.35cm); 
			\draw[rotate around={90:(-6.5,-1.2)}, color=purple, fill=purple, ultra thick, opacity=0.3] (-6.5,-1.2) ellipse (0.35cm and 0.35cm); 
			
			\draw[rotate around={0:(-10.2,-0.2)}, color=blue, fill=blue, ultra thick, opacity=0.3] (-10.2,-0.2) ellipse (0.5cm and 0.35cm);
			\draw[rotate around={0:(-7.2,-0.2)}, color=blue, fill=blue, ultra thick, opacity=0.3] (-7.2,-0.2) ellipse (0.5cm and 0.35cm);
			
			\draw[rotate around={0:(-10.2,0.8)}, color=green, fill=green, ultra thick, opacity=0.3] (-10.2,0.8) ellipse (0.5cm and 0.35cm); 
			\draw[rotate around={0:(-7.2,0.8)}, color=green, fill=green, ultra thick, opacity=0.3] (-7.2,0.8) ellipse (0.5cm and 0.35cm); 
			
			\draw[rotate around={0:(-8.7,2)}, color=cyan, fill=cyan, ultra thick, opacity=0.3] (-8.7,2) ellipse (1cm and 0.35cm); 
			
			\node (1) at (-10.9,-1.2) {$1$};
			\node (2) at (-9.5,-1.2) {$2$};
			\node (3) at (-7.9,-1.2) {$5$};
			\node (4) at (-6.5,-1.2) {$6$};
			
			\node (5) at (-10.2,-0.2) {$123$};
			\node (6) at (-7.2,-0.2) {$356$};
			\node (7) at (-10.2,0.8) {$1234$};
			\node (9) at (-7.2,0.8) {$3456$};
			\node (10) at (-8.7,2) {$12345678$};
			
			\node (11) at (-8.7,-2.2) {(a) $P$};
			
			\draw[->] (1)--(5) node[midway, above]{};
			\draw[->] (2)--(5) node[midway, above]{};
			\draw[->] (3)--(6) node[midway, above]{};
			\draw[->] (4)--(6) node[midway, above]{};
			\draw[->] (5)--(7) node[midway, above]{};				
			\draw[->] (6)--(9) node[midway, above]{};
			\draw[->] (7)--(10) node[midway, above]{};
			\draw[->] (9)--(10) node[midway, above]{};	
			
			\label{aaaa}

			\draw[rotate around={90:(-3.9,-1.2)}, color=purple, fill=purple, ultra thick, opacity=0.3] (-3.9,-1.2) ellipse (0.35cm and 0.35cm); 
			\draw[rotate around={90:(-2.5,-1.2)}, color=purple, fill=purple, ultra thick, opacity=0.3] (-2.5,-1.2) ellipse (0.35cm and 0.35cm); 
			\draw[rotate around={90:(-0.9,-1.2)}, color=purple, fill=purple, ultra thick, opacity=0.3] (-0.9,-1.2) ellipse (0.35cm and 0.35cm); 
			\draw[rotate around={90:(0.5,-1.2)}, color=purple, fill=purple, ultra thick, opacity=0.3] (0.5,-1.2) ellipse (0.35cm and 0.35cm); 
			
			\draw[rotate around={0:(-3.2,-0.2)}, color=blue, fill=blue, ultra thick, opacity=0.3] (-3.2,-0.2) ellipse (0.5cm and 0.35cm);
			\draw[rotate around={0:(-0.2,-0.2)}, color=blue, fill=blue, ultra thick, opacity=0.3] (-0.2,-0.2) ellipse (0.5cm and 0.35cm);
			
			\draw[rotate around={0:(-3.2,0.8)}, color=green, fill=green, ultra thick, opacity=0.3] (-3.2,0.8) ellipse (0.5cm and 0.35cm); 
			\draw[rotate around={0:(-0.2,0.8)}, color=green, fill=green, ultra thick, opacity=0.3] (-0.2,0.8) ellipse (0.5cm and 0.35cm); 
			
			\draw[rotate around={0:(-1.7,2)}, color=cyan, fill=cyan, ultra thick, opacity=0.3] (-1.7,2) ellipse (1cm and 0.35cm); 
			
			\node (1) at (-3.9,-1.2) {$1$};
			\node (2) at (-2.5,-1.2) {$2$};
			\node (3) at (-0.9,-1.2) {$5$};
			\node (4) at (0.5,-1.2) {$6$};
			
			\node (5) at (-3.2,-0.2) {$123$};
			\node (6) at (-0.2,-0.2) {$456$};
			\node (7) at (-3.2,0.8) {$1234$};
			\node (9) at (-0.2,0.8) {$3456$};
			\node (10) at (-1.7,2) {$12345678$};
			
			\node (11) at (-1.7,-2.2) {(b) $Q$};
			\node (11) at (1.2,-0.2) {{\boldmath$\leftarrow$}};
			
			\draw[->] (1)--(5) node[midway, above]{};
			\draw[->] (2)--(5) node[midway, above]{};
			\draw[->] (3)--(6) node[midway, above]{};
			\draw[->] (4)--(6) node[midway, above]{};
			\draw[->] (5)--(7) node[midway, above]{};				
			\draw[->] (6)--(9) node[midway, above]{};
			\draw[->] (7)--(10) node[midway, above]{};
			\draw[->] (9)--(10) node[midway, above]{};
			
			\label{bbbb}	
		\end{tikzpicture}
		
		\begin{tikzpicture}[xscale=0.6, yscale=0.5]

			\tikzstyle{every node}=[draw, shape=circle, inner sep=1.4pt, fill=black]
			\tikzstyle{every path}=[color=black]
			
			\draw[fill=gray!50!white] (0,0.5) ellipse (25mm and 20mm);
			\node (0) at (-1,-1) {};
			\node (00) at (1,-1) {};
			\node (1) at (-2,0) {};
			\node (2) at (-2,1) {};
			\node (3) at (-1,2) {};
			\node (4) at (1,2) {};
			\node (5) at (2,1) {};
			\node (6) at (2,0) {};
			\draw (0)--(00)--(1)--(2)--(3)--(4)--(5)--(6)--(0)--(1)--(3)--(5)--(0)--(2)--(5)--(00)--(4);
			\draw (00)--(2)--(4)--(6)--(1)--(4)--(0)--(3);
			\draw (00)--(3)--(6)--(00); \draw (2)--(6); \draw (1)--(5);
			
			\node[draw=none,fill=none] (111) at (0,-3.3) {(a') $G$};
			\label{a'}	
			\node[draw=none,fill=none] (11) at (10,-3.3) {(b') $H$};
			\label{b'}	
			
			\tikzstyle{every node}=[draw, color=black, shape=circle, inner sep=2pt, opacity=0.5]
			\tikzstyle{every path}=[color=black]
			\node[fill=purple] (n1) at (-3,-0.5) {};
			\node[fill=purple] (n2) at (-3.5,0.5) {};
			\draw (n1)--(1); \draw (n2)--(2);
			\node[fill=blue] (n123) at (-4,2.5) {};
			\draw (2)--(n123)--(1)--(n123)--(3);
			\node[fill=green] (n1234) at (-3,3.5) {};
			\draw (2)--(n1234)--(1)--(n1234)--(3)--(n1234)--(4);
			\node[fill=purple] (n6) at (3,-0.5) {};
			\node[fill=purple] (n5) at (3.5,0.5) {};
			\draw (n5)--(5); \draw (n6)--(6);
			\node[fill=blue, thick] (n356) at (4,2.5) {};
			\draw[thick] (3)--(n356)--(5)--(n356)--(6);
			\node[fill=green] (n3456) at (3,3.5) {};
			\draw (5)--(n3456)--(6)--(n3456)--(3)--(n3456)--(4);
			
			\node[fill=cyan] (nall) at (0,-2.5) {};
			\draw (2)--(nall)--(1)--(nall)--(3)--(nall)--(4)--(nall)--(5)--(nall)--(6)--(nall)--(0)--(nall)--(00);

			\tikzstyle{every node}=[draw, shape=circle, inner sep=1.4pt, fill=black]
			\tikzstyle{every path}=[color=black]
			
			\draw[fill=gray!50!white] (10,0.5) ellipse (25mm and 20mm);
			\node (0) at (9,-1) {};
			\node (00) at (11,-1) {};
			\node (1) at (8,0) {};
			\node (2) at (8,1) {};
			\node (3) at (9,2) {};
			\node (4) at (11,2) {};
			\node (5) at (12,1) {};
			\node (6) at (12,0) {};
			\draw (0)--(00)--(1)--(2)--(3)--(4)--(5)--(6)--(0)--(1)--(3)--(5)--(0)--(2)--(5)--(00)--(4);
			\draw (00)--(2)--(4)--(6)--(1)--(4)--(0)--(3);
			\draw (00)--(3)--(6)--(00); \draw (2)--(6); \draw (1)--(5);
			
			\tikzstyle{every node}=[draw, color=black, shape=circle, inner sep=2pt, opacity=0.5]
			\tikzstyle{every path}=[color=black]
			\node[fill=purple] (n1) at (7,-0.5) {};
			\node[fill=purple] (n2) at (6.5,0.5) {};
			\draw (n1)--(1); \draw (n2)--(2);
			\node[fill=blue] (n123) at (6,2.5) {};
			\draw (2)--(n123)--(1)--(n123)--(3);
			\node[fill=green] (n1234) at (7,3.5) {};
			\draw (2)--(n1234)--(1)--(n1234)--(3)--(n1234)--(4);
			\node[fill=purple] (n6) at (13,-0.5) {};
			\node[fill=purple] (n5) at (13.5,0.5) {};
			\draw (n5)--(5); \draw (n6)--(6);
			\node[fill=blue, thick] (n456) at (14,2.5) {};
			\draw[thick] (4)--(n456)--(5)--(n456)--(6);
			\node[fill=green] (n3456) at (13,3.5) {};
			\draw (5)--(n3456)--(6)--(n3456)--(3)--(n3456)--(4);
			
			\node[fill=cyan] (nall) at (10,-2.5) {};
			\draw (2)--(nall)--(1)--(nall)--(3)--(nall)--(4)--(nall)--(5)--(nall)--(6)--(nall)--(0)--(nall)--(00);
			
		\end{tikzpicture}\qquad
		\vspace*{-2ex}%
		
		\caption{Two isomorphic colored posets (a) $P$ and (b) $Q$ obtained from non-isomorphic $S_d$-graphs (a') $G$ and (b') $H$ with the central cliques shaded gray.}
		\label{fig:sharedelements}	
	\end{figure}
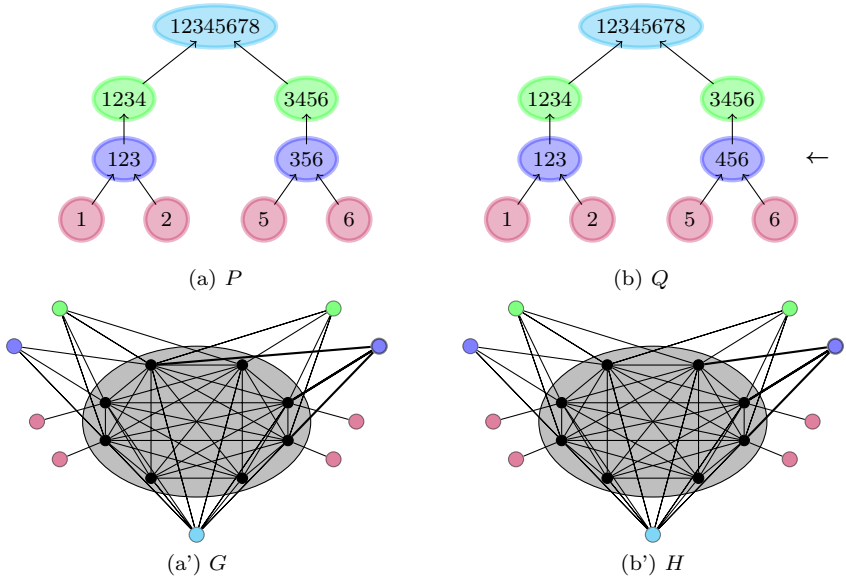

	We next move to step 3 of Procedure~\ref{proc:isomore}, and show, in Lemma~\ref{lem:preciseVenn},
	that it suffices to ensure that problems with cardinalities of the
	intersections of attachment sets like the one in the example do not happen, to claim that $G$ and~$H$ are indeed isomorphic.

	\subsection{Checking the consistency on the central cliques}
	
	Recall the automorphism group $\Gamma$ of $R$ from step~2 of Procedure~\ref{proc:isomore}.
	We say that an automorphism $\varrho\in\Gamma$ is {\em consistent on the central cliques~$C\cup D$}
	if $\varrho$ satisfies the condition in step 3 of Procedure~\ref{proc:isomore}, that is, if
	\begin{itemize}
		\item[*] there exists a permutation $f_{\varrho}$ of $C\cup D$ such
		that, for every component $Z\in\mathcal{X}\cup\mathcal{Y}$,
		there is a respectful isomorphism from $K(Z)$ to $K(\varrho(Z))$
		whose restriction to $C\cup D$ equals the respective restriction of~$f_{\varrho}$.
	\end{itemize}
	While we explicitly processed all bijections between the central cliques $C$ and $D$ 
	in the simple approach given in Section~\ref{easysection}, 
	we consider no explicit permutations on $C\cup D$ in this general case.
	Instead, we will indirectly check for an existence of a permutation $f_{\varrho}$
	on $C\cup D$ witnessing consistency of $\varrho\in\Gamma$ on the central~cliques.
	
	Observe that, for a component $Z\in\mathcal{X}\cup\mathcal{Y}$ (of $K-(C\cup D)=(G-C)\uplus(H-D)$), 
	the $a\geq1$ distinct neighborhoods (attachment sets) of vertices of $Z$ in $C\cup D$ 
	form a sequence $N_1(Z),\ldots,N_a(Z)\subseteq C\cup D$ ordered by the strict inclusion
	$N_{C\cup D}^{L}(Z)=N_1(Z)\subsetneq N_2(Z)\subsetneq\cdots\subsetneq N_a(Z)=N_{C\cup D}^{U}(Z)$,
	and we denote this whole family by $\mathcal{N}_{C\cup D}(Z):=\{N_1(Z),\ldots,N_a(Z)\}$.
	
	\vskip1pt
	We call the {\em multiset} of sets
	$\mathcal{U}:=\biguplus_{Z\in\mathcal{X}\cup\mathcal{Y}}\, \mathcal{N}_{C\cup D}(Z)$
	the {\em attachment collection} of $\mathcal{X}\cup\mathcal{Y}$ in $C\cup D$ of our graph~$K$
	(\,$\mathcal{U}$ is a multiset since the same attachment set may occur several times in
	$\mathcal{U}$ if the occurrences come from distinct bridges of $C\cup D$
	which are incomparable in~$R$). Recall the notation $K(Z)$ from Procedure~\ref{proc:isomore}.
	If an automorphism $\varrho$ of~$R$ maps $Z$ to $Z'=\varrho(Z)$, then,
	in particular, $K(Z)$ is respectfully isomorphic to $K(Z')$.
	While there may exist different respectful isomorphisms from $K(Z)$ to $K(Z')$,
	they all define (because of the strict inclusion order) the same unique mapping
	of the attachment sets from~$\mathcal{N}_{C\cup D}(Z)$ to $\mathcal{N}_{C\cup D}(Z')$.
	So, the automorphism $\varrho$ of $R$ induces a unique corresponding
	permutation on~$\mathcal{U}$, denoted here by~$\tilde\varrho$.
	
	For a set family $\mathcal{U}$, we call
	a {\em cardinality Venn diagram} of $\mathcal{U}$ the vector
	$\big(\ell_{\mathcal{U},\mathcal{U}_1}: \emptyset\not=\mathcal{U}_1\subseteq\mathcal{U}\big)$
	such that $\ell_{\mathcal{U},\mathcal{U}_1}:=|L_{\mathcal{U},\mathcal{U}_1}|$ where
	\vspace{1pt}%
	$L_{\mathcal{U},\mathcal{U}_1}=\bigcap_{A\in\mathcal{U}_1}\!A
	\setminus \bigcup_{B\in{\mathcal{U}\setminus\mathcal{U}_1}}\!B$.
	That is, we record the cardinality of every internal cell of the Venn diagram
	of~$\mathcal{U}$.
	Let $\tilde\varrho(\mathcal{U}_1)=\{\tilde\varrho(A):A\in\mathcal{U}_1\}$
	for $\mathcal{U}_1\subseteq\mathcal{U}$.

	\begin{figure}[tbp]
		\centering
		\captionsetup[subfigure]{position=b,justification=centering}
		\begin{subfigure}[t]{0.48\linewidth}
			\centering
			\begin{tikzpicture}[scale=0.8]
				
				\filldraw [fill=cyan, draw=cyan, ultra thick, opacity=0.3] (-4.6,2.35) rectangle (1.6,-1.2); 
				
				\draw[rotate around={0:(-2.6,0.4)}, color=green, fill=green, ultra thick, opacity=0.2] (-2.6,0.4) ellipse (1.8cm and 1.45cm); 
				\draw[rotate around={0:(-0.4,0.4)}, color=green, fill=green, ultra thick, opacity=0.2] (-0.4,0.4) ellipse (1.8cm and 1.45cm); 
				
				\draw[rotate around={0:(-2.6,0)}, color=blue, fill=blue, ultra thick, opacity=0.2] (-2.6,0) ellipse (1.6cm and 0.7cm); 
				\draw[rotate around={0:(-0.4,0)}, color=blue, fill=blue, ultra thick, opacity=0.2] (-0.4,0) ellipse (1.6cm and 0.7cm); 
				
				\draw[rotate around={90:(-3.5,0)}, color=purple, fill=purple, ultra thick, opacity=0.2] (-3.5,0) ellipse (0.35cm and 0.35cm); 
				
				\draw[rotate around={90:(-2.5,0)}, color=purple, fill=purple, ultra thick, opacity=0.2] (-2.5,0) ellipse (0.35cm and 0.35cm); 
				
				\draw[rotate around={90:(-0.5,0)}, color=purple, fill=purple, ultra thick, opacity=0.2] (-0.5,0) ellipse (0.35cm and 0.35cm); 
				
				\draw[rotate around={90:(0.5,0)}, color=purple, fill=purple, ultra thick, opacity=0.2] (0.5,0) ellipse (0.35cm and 0.35cm); 
				
				\node (1) at (-3.5,0) {$1$};
				\node (2) at (-2.5,0) {$2$};
				\node (3) at (-1.5,0) {$3$}; 
				\node (4) at (-1.5,1) {$4$};
				\node (5) at (-0.5,0) {$5$}; 
				\node (6) at (0.5,0) {$6$};			
				\node (8) at (-2.5,2.1) {$7$};
				\node (9) at (-0.5,2.1) {$8$}; 	
				
				\node (11) at (-1.5,-1.8) {(a)};
				
			\end{tikzpicture}	
			\label{fig:a}
		\end{subfigure}
		~
		\begin{subfigure}[t]{0.48\linewidth}
			\centering
			\begin{tikzpicture}[scale=0.8]
				
				\filldraw [fill=cyan, draw=cyan, ultra thick, opacity=0.3] (2.7,2.35) rectangle (8.9,-1.2);
				
				\draw[rotate around={0:(4.8,0.3)}, color=green, fill=green, ultra thick, opacity=0.2] (4.8,0.3) ellipse (2.1cm and 1.2cm); 
				\draw[rotate around={0:(6.8,0.3)}, color=green, fill=green, ultra thick, opacity=0.2] (6.8,0.3) ellipse (2.1cm and 1.2cm); 
				
				\draw[rotate around={0:(4.3,0.3)}, color=blue, fill=blue, ultra thick, opacity=0.2] (4.3,0.3) ellipse (1.45cm and 0.8cm); 
				\draw[rotate around={0:(7.3,0.3)}, color=blue, fill=blue, ultra thick, opacity=0.2] (7.3,0.3) ellipse (1.45cm and 0.8cm); 
				
				\draw[rotate around={90:(3.3,0.3)}, color=purple, fill=purple, ultra thick, opacity=0.2] (3.3,0.3) ellipse (0.35cm and 0.35cm); 
				
				\draw[rotate around={90:(4.3,0.3)}, color=purple, fill=purple, ultra thick, opacity=0.2] (4.3,0.3) ellipse (0.35cm and 0.35cm); 
				
				\draw[rotate around={90:(7.3,0.3)}, color=purple, fill=purple, ultra thick, opacity=0.2] (7.3,0.3) ellipse (0.35cm and 0.35cm); 
				
				\draw[rotate around={90:(8.3,0.3)}, color=purple, fill=purple, ultra thick, opacity=0.2] (8.3,0.3) ellipse (0.35cm and 0.35cm); 
				
				\node (1) at (3.3,0.3) {$1$}; 
				\node (2) at (4.3,0.3) {$2$}; 
				\node (3) at (5.3,0.3) {$3$}; 	
				\node (4) at (6.3,0.3) {$4$}; 
				\node (5) at (7.3,0.3) {$5$}; 
				\node (6) at (8.3,0.3) {$6$}; 		
				\node (8) at (4.8,2.1) {$7$}; 		
				\node (9) at (6.8,2.1) {$8$}; 
				
				\node (11) at (5.8,-1.8) {(b)};
				
			\end{tikzpicture}
			\label{fig:b}
		\end{subfigure}
		
		\begin{subfigure}[t]{0.48\linewidth}
			\centering
			\begin{tikzpicture}[scale=0.8]
				
				\filldraw [fill=cyan, draw=cyan, ultra thick, opacity=0.3] (-4.6,2.35) rectangle (1.6,-1.2); 
				
				\draw[rotate around={0:(-2.6,0.4)}, color=green, fill=green, ultra thick, opacity=0.2] (-2.6,0.4) ellipse (1.8cm and 1.45cm); 
				\draw[rotate around={0:(-0.4,0.4)}, color=green, fill=green, ultra thick, opacity=0.2] (-0.4,0.4) ellipse (1.8cm and 1.45cm); 
				
				\draw[rotate around={0:(-2.6,0)}, color=blue, fill=blue, ultra thick, opacity=0.2] (-2.6,0) ellipse (1.6cm and 0.7cm); 
				\draw[rotate around={0:(-0.4,0)}, color=blue, fill=blue, ultra thick, opacity=0.2] (-0.4,0) ellipse (1.6cm and 0.7cm); 
				
				\draw[rotate around={90:(-3.5,0)}, color=purple, fill=purple, ultra thick, opacity=0.2] (-3.5,0) ellipse (0.35cm and 0.35cm); 
				
				\draw[rotate around={90:(-2.5,0)}, color=purple, fill=purple, ultra thick, opacity=0.2] (-2.5,0) ellipse (0.35cm and 0.35cm); 
				
				\draw[rotate around={90:(-0.5,0)}, color=purple, fill=purple, ultra thick, opacity=0.2] (-0.5,0) ellipse (0.35cm and 0.35cm); 
				
				\draw[rotate around={90:(0.5,0)}, color=purple, fill=purple, ultra thick, opacity=0.2] (0.5,0) ellipse (0.35cm and 0.35cm); 
				
				\boldmath
				\node (1) at (-3.5,0) {$1$};
				\node (2) at (-2.5,0) {$1$};
				\node (3) at (-1.5,0) {$1$}; 
				\node (4) at (-1.5,1) {$1$};
				\node (5) at (-0.5,0) {$1$}; 
				\node (6) at (0.5,0) {$1$};			
				\node (8) at (-1.5,2.1) {$2$};	
				
				\node (11) at (-1.5,-1.8) {(a')};
				
			\end{tikzpicture}
			\label{fig:a'}
		\end{subfigure}
		~
		\begin{subfigure}[t]{0.48\linewidth}
			\centering
			\begin{tikzpicture}[scale=0.8]
				
				\filldraw [fill=cyan, draw=cyan, ultra thick, opacity=0.3] (2.7,2.35) rectangle (8.9,-1.2);
				
				\draw[rotate around={0:(4.8,0.3)}, color=green, fill=green, ultra thick, opacity=0.2] (4.8,0.3) ellipse (2.1cm and 1.2cm); 
				\draw[rotate around={0:(6.8,0.3)}, color=green, fill=green, ultra thick, opacity=0.2] (6.8,0.3) ellipse (2.1cm and 1.2cm); 
				
				\draw[rotate around={0:(4.3,0.3)}, color=blue, fill=blue, ultra thick, opacity=0.2] (4.3,0.3) ellipse (1.45cm and 0.8cm); 
				\draw[rotate around={0:(7.3,0.3)}, color=blue, fill=blue, ultra thick, opacity=0.2] (7.3,0.3) ellipse (1.45cm and 0.8cm); 
				
				\draw[rotate around={90:(3.3,0.3)}, color=purple, fill=purple, ultra thick, opacity=0.2] (3.3,0.3) ellipse (0.35cm and 0.35cm); 
				
				\draw[rotate around={90:(4.3,0.3)}, color=purple, fill=purple, ultra thick, opacity=0.2] (4.3,0.3) ellipse (0.35cm and 0.35cm); 
				
				\draw[rotate around={90:(7.3,0.3)}, color=purple, fill=purple, ultra thick, opacity=0.2] (7.3,0.3) ellipse (0.35cm and 0.35cm); 
				
				\draw[rotate around={90:(8.3,0.3)}, color=purple, fill=purple, ultra thick, opacity=0.2] (8.3,0.3) ellipse (0.35cm and 0.35cm); 
				
				\boldmath
				\node (1) at (3.3,0.3) {$1$}; 
				\node (2) at (4.3,0.3) {$1$}; 
				\node (3) at (5.3,0.3) {$1$}; 	
				\node (4) at (6.3,0.3) {$1$}; 
				\node (5) at (7.3,0.3) {$1$}; 
				\node (6) at (8.3,0.3) {$1$}; 		
				\node (8) at (5.8,2.1) {$2$}; 		
				
				\node (11) at (5.8,-1.8) {(b')};
				
			\end{tikzpicture}
			\label{fig:b'}
		\end{subfigure}
		
		\caption{The Venn diagrams (a) and (b) and the corresponding cardinality Venn diagrams (a') and (b') for the colored posets $P$ and $Q$ given in Figure~\ref{fig:sharedelements}. }
		\label{fig:cardinalityvenndiagrams}	
	\end{figure}
	
	It is reasonably easy to see that an automorphism $\varrho$ of~$R$ is consistent on the central
	cliques $C\cup D$ of $K$, if and only if the corresponding permutation $\tilde\varrho$
	on the attachment collection $\mathcal{U}$ of~$\mathcal{X}\cup\mathcal{Y}$
	preserves the values of all cells of the cardinality Venn diagram of $\mathcal{U}$.
	This is precisely formulated as follows:
	
	\begin{lem}\label{lem:preciseVenn}
		Let $K=G\uplus H$ and~$C,D$, the sets $\mathcal{X}$ and $\mathcal{Y}$,
		and the posets $P,Q$ and~$R$ (on the ground set $\mathcal{X}\cup\mathcal{Y}$) 
		be as in Procedure~\ref{proc:isomore}.
		Let $\mathcal{U}$ be the {attachment collection}
		of~$\mathcal{X}\cup\mathcal{Y}$ in~$C\cup D$, and
		assume an automorphism $\varrho$ of $R$ and the corresponding permutation
		$\tilde\varrho$ of~$\mathcal{U}$.
		
		There is an automorphism $f$ of the graph $K$ such that $f(C\cup D)=C\cup D$
		and, for every component $Z\in\mathcal{X}\cup\mathcal{Y}$,~ 
		$f$ maps $V(Z)$ to $V(\varrho(Z))$,
		if and only if the cardinality Venn diagrams of $\mathcal{U}$ and of 
		$\tilde\varrho(\mathcal{U})$ are the same, meaning that
		$\ell_{\mathcal{U},\mathcal{U}_1}=
		\ell_{\mathcal{U},\tilde\varrho(\mathcal{U}_1)}$
		for all~$\emptyset\not=\mathcal{U}_1\subseteq\mathcal{U}$.
	\end{lem}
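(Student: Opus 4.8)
The statement is an equivalence, and in both directions the bridge I would use is the following reformulation: an automorphism $f$ of the required form exists if and only if there is a bijection $f_C\colon C\cup D\to C\cup D$ that \emph{realizes} $\tilde\varrho$, meaning $f_C(A)=\tilde\varrho(A)$ for every attachment set $A\in\mathcal{U}$. The plan is therefore to (i) match such a realizing bijection $f_C$ with the equality of the cardinality Venn diagrams, and (ii) show that any suitable $f$ restricts to such an $f_C$ on $C\cup D$, while conversely a realizing $f_C$ can be extended over the components to a genuine automorphism of $K$.

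For the forward implication I would start from $f$ and put $f_C:=f|_{C\cup D}$. Since $f$ maps $V(Z)$ to $V(\varrho(Z))$ and fixes $C\cup D$ setwise, its restriction to $V(K(Z))$ is an isomorphism $K(Z)\to K(\varrho(Z))$; by the strict-inclusion ordering of the attachment sets this forces $f_C(N_i(Z))=N_i(\varrho(Z))$, i.e.\ $f_C(A)=\tilde\varrho(A)$ for all $A\in\mathcal{U}$, matching the definition of $\tilde\varrho$. A bijection of the ground set preserves the cardinality of every Boolean cell, and a membership-pattern computation shows that $f_C$ carries $L_{\mathcal{U},\mathcal{U}_1}$ bijectively onto $L_{\mathcal{U},\tilde\varrho(\mathcal{U}_1)}$ (using $\tilde\varrho(\mathcal{U})=\mathcal{U}$); hence $\ell_{\mathcal{U},\mathcal{U}_1}=\ell_{\mathcal{U},\tilde\varrho(\mathcal{U}_1)}$ for all nonempty $\mathcal{U}_1\subseteq\mathcal{U}$, as claimed.

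For the converse I would build $f_C$ from the Venn equality. The key observation is that a point $x\in\bigcup\mathcal{U}$ is characterized by its membership pattern $\mathcal{U}_x=\{A\in\mathcal{U}:x\in A\}$, and $f_C(A)=\tilde\varrho(A)$ holds for all $A$ exactly when $f_C$ sends each cell $L_{\mathcal{U},\mathcal{U}_1}$ into $L_{\mathcal{U},\tilde\varrho(\mathcal{U}_1)}$. Since $\tilde\varrho$ permutes $\mathcal{U}$, it permutes the cells, and the hypothesis says matched cells have equal size; choosing an arbitrary bijection between each pair of matched cells, and mapping the residual vertices (those in no set of $\mathcal{U}$) to residual vertices, yields the desired $f_C$. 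Here I would note that every nonempty cell lies entirely in $C$ or entirely in $D$ (attachment sets of $\mathcal{X}$ live in $C$, those of $\mathcal{Y}$ in $D$, and a mixed intersection is empty), so $f_C$ respects the partition $\{C,D\}$; in the swapping case the equality of the covered cardinalities together with $|C|=|D|$ guarantees that the residual parts also match in size. To extend $f_C$ over the components, for each $Z$ I would take any isomorphism $g\colon K(Z)\to K(\varrho(Z))$ (it exists because $Z$ and $\varrho(Z)$ received the same color) and correct it on the clique: the permutation $\sigma$ of the clique of $K(\varrho(Z))$ obtained by composing $f_C$ with the inverse of $g$ on that clique fixes each attachment set of $\varrho(Z)$ setwise, hence extends by the identity on $V(\varrho(Z))$ to an automorphism of $K(\varrho(Z))$, so that $\sigma\circ g$ is an isomorphism $K(Z)\to K(\varrho(Z))$ agreeing with $f_C$ on $C\cup D$. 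Gluing $f_C$ with the component parts of these corrected isomorphisms defines $f$, and one then checks that edges inside the cliques, inside each component, and across the attachments are all preserved.

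The main obstacle, I expect, is exactly this last gluing: turning the purely set-theoretic bijection $f_C$ into a graph automorphism requires the chosen component isomorphisms to be \emph{simultaneously} compatible with $f_C$ on $C\cup D$, which is why the setwise-fixing correction $\sigma$ is essential (an arbitrary $g$ need not agree with $f_C$ on the clique). The secondary bookkeeping — that nonempty Venn cells do not straddle $C$ and $D$, and that residual vertices can be matched using $|C|=|D|$ in the swapping case — is where the ``disjoint union of two cliques'' structure of $C\cup D$ is genuinely used, and it must be handled carefully so that $f_C$ respects both the $\{C,D\}$ and the induced $\{V(G),V(H)\}$ partitions.
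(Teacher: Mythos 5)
Your proof is correct and follows essentially the same route as the paper: the forward direction tracks how $f|_{C\cup D}$ permutes the Venn cells, and the converse assembles a permutation $f_0$ of $C\cup D$ from per-cell bijections and extends it over the components. Your explicit correction $\sigma=f_C\circ g^{-1}$ (and the bookkeeping for residual vertices) merely fills in the step the paper states as ``there is an isomorphism $f_Z$ from $K(Z)$ to $K(\varrho(Z))$ which extends $f_0$,'' so no substantive difference.
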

	
	\begin{proof}
		$\Rightarrow$ 
		Suppose that there exists an automorphism $f$ of $K$ such that
		$f(C\cup D)=C\cup D$ and, for every component
		$Z\in\mathcal{X}\cup\mathcal{Y}$, $f$ maps the vertices of $Z$ to the
		vertices of~$\varrho(Z)$.  
		Then, each pair $K(Z)$ and $K(\varrho(Z))$ are isomorphic interval graphs, 
		and if there are $a$ distinct neighborhoods $N_1(Z),\dots,N_a(Z)$ of vertices 
		of $Z$ in $C \cup D$, then there are $a$
		neighborhoods $N_1(\varrho(Z)),\dots, N_a(\varrho(Z))$ of vertices of
		$\varrho(Z)$ in $C \cup D$ determined by the restriction of $f$ to~$C\cup D$.
		Since $f$ is an automorphism of $K$, indeed,
		$\tilde\varrho(N_i(Z)) =N_i(\varrho(Z))$ for $1 \leq i \leq a$.  
		Moreover, by our assumption, if $v\in N_i(Z)$, then 
		$f(v)\in N_i(\varrho(Z))$, 
		and vice versa.
		Now consider arbitrary $\emptyset\not=\mathcal{U}_1\subseteq\mathcal{U}$.
		By the previous; if
		$$v\in \bigcap_{A\in\mathcal{U}_1}\!A \setminus
		\bigcup_{B\in{\mathcal{U}\setminus\mathcal{U}_1}}\!B
		,$$
		then (since $\tilde\varrho$ is a permutation of~$\mathcal{U}$)
		$$f(v)\in \bigcap_{A\in\mathcal{U}_1}\!\tilde\varrho(A) \setminus
		\bigcup_{B\in{\mathcal{U}\setminus\mathcal{U}_1}}\!\tilde\varrho(B)
		~= \bigcap_{A\in\tilde\varrho(\mathcal{U}_1)}\!A \setminus
		\bigcup_{B\in{\mathcal{U}\setminus\tilde\varrho(\mathcal{U}_1)}}\!B
		,$$
		and vice versa.
		Consequently, $\ell_{\mathcal{U},\mathcal{U}_1}=
		\ell_{\mathcal{U},\tilde\varrho(\mathcal{U}_1)}$
		for all~$\emptyset\not=\mathcal{U}_1\subseteq\mathcal{U}$.
		
		\smallskip
		$\Leftarrow$ We start with a simple claim:
		if, for some $Z\in\mathcal{X}\cup\mathcal{Y}$,
		$f$ is a permutation of $C\cup D$ which set-wise stabilizes all
		sets in $\mathcal{N}_{C\cup D}(Z)$, then $f$ extended with the identity on
		$Z$ is a respectful automorphism of~$K(Z)$.
		This is trivially true for edges of $K(Z)$ having both ends either in $Z$ or in~$C\cup D$.
		For $v\in C\cup D$ and~$w\in Z$, we have $vw\in E(K(Z))$ $\iff$
		$f(v)w\in E(K(Z))$ since otherwise the neighborhood set of $w$ in $C\cup D$
		would not be stabilized by~$f$.
		
		Now suppose that $\ell_{\mathcal{U},\mathcal{U}_1}=\ell_{\mathcal{U},\tilde\varrho(\mathcal{U}_1)}$
		holds for our automorphism $\varrho$ of $R$ and all~$\emptyset\not=\mathcal{U}_1\subseteq\mathcal{U}$.  
		Then, in particular, for every such $\mathcal{U}_1$ there exists
		$$\mbox{ a bijection from }
		\bigcap_{A\in\mathcal{U}_1}\!A \setminus
		\bigcup_{B\in{\mathcal{U}\setminus\mathcal{U}_1}}\!B
		\mbox{ ~to } \bigcap_{A\in\tilde\varrho(\mathcal{U}_1)}\!A \setminus
		\bigcup_{B\in{\mathcal{U}\setminus\tilde\varrho(\mathcal{U}_1)}}\!B
		.$$
		The composition of these bijections results in a permutation $f_0$ of~$C\cup D$.
		For every $Z\in\mathcal{X}\cup\mathcal{Y}$, since $\varrho$ is an
		automorphism of~$R$, there is a respectful isomorphism $f_Z$ from $K(Z)$ to $K(\varrho(Z))$.
		Let $f_Z'$ be the restriction of $f_Z$ to $C\cup D$
		and let $f_0':={f_Z'}^{-1}\circ f_0$ extended with the identity map on~$Z$.
		Then $f_0'$ set-wise stabilizes all sets in $\mathcal{N}_{C\cup D}(Z)$ by
		the definition of $\tilde\varrho$ and $f_0$.
		Consequently, $f_0'$ is a respectful automorphism of~$K(Z)$ and so 
		$f_Z^o:=f_Z\circ f_0'$ is a respectful isomorphism from $K(Z)$ to $K(\varrho(Z))$
		which coincides with $f_0$ on~$C\cup D$.
		
		Finally, since the members of $\mathcal{X}\cup\mathcal{Y}$ are pairwise disjoint
		sets, the composition of $f_Z^o$ over all $Z\in\mathcal{X}\cup\mathcal{Y}$
		is well-defined and it is hence an automorphism of the graph $K$ satisfying the desired properties.
		\qed\end{proof}

	At first sight, the condition of Lemma~\ref{lem:preciseVenn} may not seem
	efficient since $\mathcal{U}$ has up to $2n$ attachment sets, 
	and so an exponential number of Venn diagram cells.
	Though, only at most $2n$ of the cells may be nonempty since the ground set of
	$\mathcal{U}$ is of cardinality $|C\cup D|\leq2n$,
	and so we can handle the situation as follows:
	
	\begin{lem}\label{lem:testconsist}
		For any $\mathcal{U'}\subseteq\mathcal{U}$ such that
		$\tilde\varrho(\mathcal{U'})=\mathcal{U'}$,
		one can in $\mathcal{O}(n^2)$ time test whether the equalities
		$\ell_{\mathcal{U}'\!,\,\mathcal{U}_1}=   
		\ell_{\mathcal{U}',\tilde\varrho(\mathcal{U}_1)}$
		hold for all~$\emptyset\not=\mathcal{U}_1\subseteq\mathcal{U}'$.
	\end{lem}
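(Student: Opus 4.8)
The plan is to avoid enumerating the exponentially many subsets $\mathcal{U}_1\subseteq\mathcal{U}'$ by observing that the cells $L_{\mathcal{U}',\mathcal{U}_1}$ partition the ground set $C\cup D$: every $v\in C\cup D$ lies in exactly one cell, the one indexed by its \emph{signature}
\[
S(v):=\{A\in\mathcal{U}': v\in A\}\subseteq\mathcal{U}'.
\]
Hence $\ell_{\mathcal{U}',\mathcal{U}_1}$ is exactly the number of elements of $C\cup D$ whose signature equals $\mathcal{U}_1$, and at most $|C\cup D|\le 2n$ of the cells are nonempty. I would therefore encode the whole cardinality Venn diagram as the multiset $\mathcal{M}:=\{\!\{S(v): v\in C\cup D\}\!\}$ of signatures (counted with multiplicity), an object of size $\mathcal{O}(n^2)$.

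\emph{Main reduction.} The key step is to rewrite the entire family of equalities as a single multiset identity. Since $\tilde\varrho$ is a permutation of $\mathcal{U}'$, it induces a bijection $\mathcal{U}_1\mapsto\tilde\varrho(\mathcal{U}_1)$ on subsets; applying $\tilde\varrho$ elementwise to signatures yields $\mathcal{M}':=\{\!\{\tilde\varrho(S(v)): v\in C\cup D\}\!\}$. Counting multiplicities, the multiplicity of any subset $\tilde\varrho(\mathcal{U}_1)$ in $\mathcal{M}'$ equals $|\{v: S(v)=\mathcal{U}_1\}|=\ell_{\mathcal{U}',\mathcal{U}_1}$, while its multiplicity in $\mathcal{M}$ is $\ell_{\mathcal{U}',\tilde\varrho(\mathcal{U}_1)}$. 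As $\tilde\varrho$ ranges bijectively over all subsets, it follows that the equalities $\ell_{\mathcal{U}',\mathcal{U}_1}=\ell_{\mathcal{U}',\tilde\varrho(\mathcal{U}_1)}$ hold for all nonempty $\mathcal{U}_1$ (the case $\mathcal{U}_1=\emptyset$ being automatic, since $\tilde\varrho(\emptyset)=\emptyset$) \emph{if and only if} $\mathcal{M}=\mathcal{M}'$ as multisets. I expect this counting identity — collapsing $2^{|\mathcal{U}'|}$ cell conditions into one equality of two $\mathcal{O}(n)$-sized multisets — to be the conceptual heart of the argument, and the place where the hypothesis $\tilde\varrho(\mathcal{U}')=\mathcal{U}'$ is essential, as it is precisely what guarantees that $\tilde\varrho$ sends subsets of $\mathcal{U}'$ to subsets of $\mathcal{U}'$, so that $\mathcal{M}$ and $\mathcal{M}'$ live in the same universe.

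\emph{Complexity.} It then remains to realize this test in $\mathcal{O}(n^2)$ time. First I would build the incidence of the $\le 2n$ elements of $C\cup D$ against the $\le 2n$ members of $\mathcal{U}'$, reading each attachment set once at total cost $\mathcal{O}(n^2)$; this produces every $S(v)$ as a bit-vector indexed by $\mathcal{U}'$. Applying $\tilde\varrho$ is a mere re-indexing of the bit positions, so $\mathcal{M}$ and $\mathcal{M}'$ are both obtained in $\mathcal{O}(n^2)$. Finally, to compare two multisets of $\le 2n$ bit-vectors of length $\le 2n$, I would sort each by an LSD radix sort over the (binary) coordinate positions: sorting $\mathcal{O}(n)$ vectors of length $\mathcal{O}(n)$ costs $\mathcal{O}(n^2)$, and comparing the sorted lists entrywise is again $\mathcal{O}(n^2)$. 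The equalities of the lemma hold exactly when the two sorted lists coincide, giving the claimed $\mathcal{O}(n^2)$ bound.
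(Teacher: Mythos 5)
Your proposal is correct and follows essentially the same route as the paper's own proof: record for each of the $\le 2n$ vertices of $C\cup D$ its membership pattern (your signature $S(v)$) across $\mathcal{U}'$, note that only $\mathcal{O}(n)$ Venn cells are nonempty, and compare the resulting counts against their $\tilde\varrho$-permuted counterparts. Your explicit reformulation as a multiset identity and the radix-sort comparison merely spell out details the paper leaves implicit.
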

	
	\begin{proof}
		We loop through all vertices $w$ of $C\cup D$, and for each $w$ we
		record in $\mathcal{O}(n)$ time to which of the sets in $\mathcal{U'}$ this $w$ belongs to.
		Summing the obtained records at the~end precisely gives the $\mathcal{O}(n)$ nonzero values
		$\ell_{\mathcal{U}'\!,\,\mathcal{U}_1}$ over $\emptyset\not=\mathcal{U}_1\subseteq\mathcal{U}'$.
		
		We analogously compute the $\mathcal{O}(n)$ nonzero values       
		$\ell_{\mathcal{U}',\tilde\varrho(\mathcal{U}_1)}$ over $\emptyset\not=\mathcal{U}_1\subseteq\mathcal{U}'$,
		and then compare the two sets of values with respect to
		each $\mathcal{U}_1$ and matching $\tilde\varrho(\mathcal{U}_1)$.\qed
	\end{proof}

	\subsection{Computing the subgroup of consistent poset automorphisms}
	
	Knowing how to efficiently test whether an automorphism $\varrho\in\Gamma$
	in step 3 of Procedure~\ref{proc:isomore} is consistent
	(Lemmas~\ref{lem:preciseVenn} and~\ref{lem:testconsist}),
	we would like to finish a computation of the subgroup $\Gamma'\subseteq\Gamma$.
	This, however, cannot be done directly by processing all members of
	the group $\Gamma$ which can be exponentially large.
	Instead, inspired by famous Babai's ``tower-of-groups'' procedure
	(cf.~\cite{babai-bdcm} and Theorem~\ref{thm:bdcm}),
	we iteratively compute a chain of subgroups
	$\Gamma=\Gamma_0\supseteq\Gamma_1\supseteq\ldots\supseteq\Gamma_h=\Gamma'$ leading to the result.
	Here, by ``computing a~group'' we mean to output a set of its generators. 
	
	We will show that this step can also be executed in \textbf{FPT}-time. There are two important ingredients making this computation work.
	First, we look for a manageable combinatorially defined ``gradual refinement'' of the
	condition tested by Lemma~\ref{lem:preciseVenn}.
	Our intention is to define $\Gamma_i$ as the subgroup of $\Gamma$ respecting
	the $i$-th step of this refinement. 
	By manageable we mean that the ratio of orders (sizes) of consequent groups
	$\Gamma_i$ and $\Gamma_{i+1}$ in the chain is always bounded and, at the same time, 
	that the number of refinement steps ($h$) is not too big.
	Second, having such manageable refinement steps, we then stepwise apply another
	classical result (as illustrated below):
	
	\begin{restatable}{theorem}{thmsubgroup}{\bf(Furst, Hopcroft and Luks
			\cite[Cor.~1]{furst})}\label{thm:furstgen}
		Let $\Pi$ be a permutation group given by its generators, 
		and $\Pi_1$ be any subgroup of $\Pi$ such that one can test in
		polynomial time whether $\pi\in\Pi_1$ for any $\pi\in\Pi$ (membership test).
		If the ratio $|\Pi|/|\Pi_1|$ is bounded by a function of a parameter $d$,
		then a set of generators of $\Pi_1$ can be computed in
		\textbf{FPT}-time (with respect to~$d$).
	\end{restatable}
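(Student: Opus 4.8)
The plan is to realise $\Pi_1$ as a point stabiliser in the natural action of $\Pi$ on its right coset space, whose size is exactly the bounded index $[\Pi:\Pi_1]\le f(d)$, and then to recover generators of $\Pi_1$ by Schreier's lemma. Let $S$ be the given generating set of $\Pi$, let $e$ be the identity, and assume without loss of generality that $S=S^{-1}$ (replacing $S$ by $S\cup S^{-1}$ at most doubles its size). The group $\Pi$ acts transitively on the set $\Omega=\{\Pi_1 g:g\in\Pi\}$ of right cosets by right multiplication, $\rho(h)\colon \Pi_1 g\mapsto\Pi_1 gh$; this is a homomorphism $\Pi\to\mathrm{Sym}(\Omega)$, the stabiliser of the coset $\Pi_1=\Pi_1 e$ is precisely $\Pi_1$, and $|\Omega|=[\Pi:\Pi_1]\le f(d)$. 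The obstacle is that $\Pi$ itself may be exponentially large, so we cannot enumerate its elements; the whole point is to compute instead with the small object $\Omega$ and to identify cosets using only the polynomial-time membership test for $\Pi_1$, exploiting that $\Pi_1 g=\Pi_1 h$ if and only if $gh^{-1}\in\Pi_1$.

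First I would run an orbit computation that builds a right transversal $T\subseteq\Pi$ together with, for each pair $(t,s)$, the representative $\overline{ts}\in T$ of the coset $\Pi_1(ts)$. Start with $T=\{e\}$ (representing $\Pi_1$) and a work list containing $e$. Repeatedly take an unprocessed representative $t\in T$ and, for each $s\in S$, form the product $g=ts$ and test whether $g\,(t')^{-1}\in\Pi_1$ for some $t'$ already in $T$; if such a $t'$ exists we record $\overline{ts}=t'$, and otherwise we add $g$ to $T$ as the representative of a new coset, set $\overline{ts}=g$, and append $g$ to the work list. Since $\langle S\rangle=\Pi$ and the coset action is transitive, every element of $\Omega$ is reached, so the procedure terminates with $|T|=[\Pi:\Pi_1]\le f(d)$. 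The cost is $O(f(d))$ representatives times $|S|$ generators times $O(f(d))$ membership and comparison tests, where each product, inverse, and membership test is polynomial in $n$; hence the transversal $T$ and the induced permutation representation of $\Pi$ on $\Omega$ are computed in \textbf{FPT}-time.

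Finally I would apply Schreier's lemma. With the right transversal $T$ containing $e$ and the inverse-closed generating set $S$ of $\Pi$, the set of Schreier generators $\{\,t\,s\,\overline{ts}^{\,-1}: t\in T,\ s\in S\,\}$ generates $\Pi_1$; each such element lies in $\Pi_1$ because $\Pi_1(ts)=\Pi_1\overline{ts}$ forces $(ts)\overline{ts}^{\,-1}\in\Pi_1$. Crucially, every $\overline{ts}$ is exactly the representative already identified during the orbit computation, so each Schreier generator is an explicit product of permutations computable in polynomial time, and there are only $|T|\cdot|S|\le f(d)\cdot|S|$ of them. This yields a generating set of $\Pi_1$ within the claimed \textbf{FPT} bound. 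The one place needing care is the correctness of the coset identification: two elements represent the same point of $\Omega$ if and only if their quotient lies in $\Pi_1$, which is precisely what the membership oracle decides, and it is this test—rather than any explicit description of $\Pi_1$—that drives the entire construction.
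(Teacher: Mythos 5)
The paper states this result only as a citation to Furst, Hopcroft and Luks \cite[Cor.~1]{furst} and supplies no proof of its own, so there is no internal argument to compare against. Your proof --- realising $\Pi_1$ as the stabiliser of the trivial coset in the right-multiplication action on the coset space of size $[\Pi:\Pi_1]\le f(d)$, enumerating that space by a membership-oracle-driven orbit computation to obtain a transversal, and then extracting Schreier generators --- is correct and is essentially the classical argument by which the cited source establishes the statement.
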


	To illustrate the typical use and the strength of Theorem~\ref{thm:furstgen}, we show an outline of how 
	it can be used to design an algorithm (known as Babai's tower-of-groups
	procedure~\cite{babai-bdcm}) proving Theorem~\ref{thm:bdcm}:
	\begin{enumerate}
		\item Let $G$ be a graph on $n$ vertices, and $V(G)=V_1\cup\ldots\cup V_k$
		be a partition of its vertex set into $k$ color classes such that $|V_i|\leq d$.
		Let $\Pi_0$ be the group of all permutations $\pi$ on~$V(G)$ which satisfy
		$\pi(V_j)=V_j$ for all $1\leq j\leq k$
		(i.e., $\Pi_0$ is a product of the symmetric groups of the color classes).
		\item 
		For $i=1,2,\ldots,{k\choose2}$,
		let $(a,b)$ be the $i$-th pair in the following sequence of pairs:
		$(1,2),(1,3),\ldots,(1,k),(2,3),(2,4),\ldots,(2,k),(3,4),\ldots,(k-1,k)$
		(the order of which is not really important).
		Denote by $E_i$ the set of edges of the induced subgraph $G[V_a\cup V_b]$,
		and by $\Pi_i$ the subgroup of $\Pi_{i-1}$
		consisting of all permutations $\pi\in\Pi_{i-1}$ such that
		$\pi$ induces an automorphism of the graph with the vertex set $V(G)$ and
		the edge set $E_1\cup E_2\cup\ldots\cup E_i$.
		Then $|\Pi_{i-1}|/|\Pi_i|\leq |V_a|!\cdot|V_b|!\leq(d!)^2$, and so we may use
		Theorem~\ref{thm:furstgen} to compute $\Pi_i$ from $\Pi_{i-1}$.
		\item 
		Finally, $\Pi_{k\choose2}$ is the color-preserving automorphism group of~$G$, as desired.
	\end{enumerate}
	
	In a nutshell, the outlined procedure stepwise removes those permutations of
	$\Pi_0$ which violate the edge subsets between the parts $V_a$ and~$V_b$,
	ranging over all pairs $i=(a,b)$.
	To detail an analogous procedure in our case, we need to
	closely analyze what happens if an automorphism $\varrho\in\Gamma$ does
	not pass the cardinality Venn diagram test described in Lemma~\ref{lem:preciseVenn}.
	This is not nearly as simple as in case of the procedure for Theorem~\ref{thm:bdcm},
	since the cardinality Venn diagram involves attachment sets from all levels of the poset at once.
	Fortunately, thanks to considering posets of bounded width, we can prove
	that every violation of the consistency check from Lemma~\ref{lem:preciseVenn} is witnessed 
	by a subcollection of at most $d$ sets of $\mathcal{U}$,
	i.e., such failure involves only at most $d$ levels of the posets
	which is manageable.
	
	\begin{lem}\label{lem:Venngood}
		Let $\mathcal{U}$ be a set family and $\tilde\varrho$ be a permutation of $\mathcal{U}$.
		If there exists $\mathcal{U}_1$ such that 
		$\emptyset\not=\mathcal{U}_1\subseteq\mathcal{U}$ and
		$\ell_{\mathcal{U},\mathcal{U}_1}\not=\ell_{\mathcal{U},\tilde\varrho(\mathcal{U}_1)}$,
		then there exist $\mathcal{U}_2,\mathcal{U}_3\subseteq\mathcal{U}$ such that
		$|\mathcal{U}_2|\leq2$ or $\mathcal{U}_2$ is an antichain in the inclusion,
		$\emptyset\not=\mathcal{U}_3\subseteq\mathcal{U}_2$ and
		$\ell_{\mathcal{U}_2,\mathcal{U}_3}\not=
		\ell_{\tilde\varrho(\mathcal{U}_2),\tilde\varrho(\mathcal{U}_3)}$.
	\end{lem}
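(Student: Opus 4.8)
The plan is to reformulate the cardinality Venn diagram in terms of the inclusion order on $\mathcal{U}$ and then to localize any discrepancy either to a pair or to an antichain of $\mathcal{U}$. The starting observation is that for each ground element $w\in C\cup D$ its \emph{signature} $\sigma(w):=\{A\in\mathcal{U}:w\in A\}$ is automatically an up-set (filter) of the poset $(\mathcal{U},\subseteq)$, since $w\in A$ and $A\subseteq B$ force $w\in B$. Consequently the cell $L_{\mathcal{U},\mathcal{U}_1}$ is exactly the set of $w$ with $\sigma(w)=\mathcal{U}_1$, so $\ell_{\mathcal{U},\mathcal{U}_1}=0$ whenever $\mathcal{U}_1$ is not a filter; writing $n_F:=\ell_{\mathcal{U},F}$ for a filter $F$, the whole diagram is encoded by the function $F\mapsto n_F$ on filters of $(\mathcal{U},\subseteq)$.

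I would first dispose of the case where $\tilde\varrho$ does not respect inclusion. If there are $A,B\in\mathcal{U}$ for which exactly one of $A\subseteq B$ and $\tilde\varrho(A)\subseteq\tilde\varrho(B)$ holds, then $\mathcal{U}_2=\{A,B\}$ and $\mathcal{U}_3=\{A\}$ already work, because $\ell_{\{A,B\},\{A\}}=|A\setminus B|$ and $\ell_{\{\tilde\varrho(A),\tilde\varrho(B)\},\{\tilde\varrho(A)\}}=|\tilde\varrho(A)\setminus\tilde\varrho(B)|$, and exactly one of these is zero; this yields the $|\mathcal{U}_2|\le2$ alternative. Hence I may assume $\tilde\varrho$ is an automorphism of $(\mathcal{U},\subseteq)$, so it permutes filters preserving inclusion. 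Under the hypothesis the witnessing $\mathcal{U}_1$ must itself be a filter $F$ (otherwise both $\ell_{\mathcal{U},\mathcal{U}_1}$ and $\ell_{\mathcal{U},\tilde\varrho(\mathcal{U}_1)}$ vanish), so there is a filter with $n_F\neq n_{\tilde\varrho(F)}$.

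The key step is then to convert such a filter discrepancy into an antichain-cell discrepancy. Among all filters $F$ with $n_F\neq n_{\tilde\varrho(F)}$ I would choose one that is \emph{maximal} under inclusion; it is nonempty, since $\tilde\varrho$ fixes the empty filter. Let $\mathcal{A}=\min F$ be its antichain of minimal elements, which is nonempty. A membership $\mathcal{A}\subseteq\sigma(w)$ is equivalent to $F\subseteq\sigma(w)$ (as $\sigma(w)$ is a filter), so
\begin{equation*}
\ell_{\mathcal{A},\mathcal{A}}=\big|\{w:F\subseteq\sigma(w)\}\big|=\sum_{F'\supseteq F}n_{F'},
\qquad
\ell_{\tilde\varrho(\mathcal{A}),\tilde\varrho(\mathcal{A})}=\sum_{G\supseteq\tilde\varrho(F)}n_{G}=\sum_{F'\supseteq F}n_{\tilde\varrho(F')},
\end{equation*}
where the last equality reindexes the sum by the inclusion-preserving bijection $\tilde\varrho$. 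Subtracting and using maximality of $F$ (every $F'\supsetneq F$ satisfies $n_{F'}=n_{\tilde\varrho(F')}$) collapses the difference to $n_F-n_{\tilde\varrho(F)}\neq0$. Thus $\mathcal{U}_2=\mathcal{U}_3=\mathcal{A}$ is the desired antichain witness.

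I expect the maximality choice in the last paragraph to be the real crux: an individual antichain cell $\ell_{\mathcal{A},\mathcal{A}}$ sums the counts of \emph{all} filters above $F$, and only selecting a maximal discrepant $F$ guarantees that these extra terms cancel against their $\tilde\varrho$-images, leaving a single surviving term. A minor technical point to handle carefully is that $\mathcal{U}$ is a multiset, so equal attachment sets make $(\mathcal{U},\subseteq)$ only a preorder; but the first case forces equal copies to be sent to equal copies by $\tilde\varrho$, and the minimal elements of a filter are pairwise non-strictly-comparable, so they still constitute an antichain as required.
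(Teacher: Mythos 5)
Your proof is correct, and it takes a genuinely different route from the paper's. The paper argues by contradiction from a minimal counterexample: it fixes an inclusion-minimal subfamily $\mathcal{U}_2$ that is not Venn-good, supposes it contains a comparable pair $A_1\subseteq A_2$, and uses an inclusion--exclusion identity relating the four cells obtained by adding/removing $A_1,A_2$ (together with the minimality hypothesis, which makes $\mathcal{U}_2^0,\mathcal{U}_2^1,\mathcal{U}_2^2$ Venn-good) to show that no witness $\mathcal{U}_3$ could exist; hence the minimal bad subfamily is a pair or an antichain. You instead recast the Venn diagram as a function $F\mapsto n_F$ on the filters of $(\mathcal{U},\subseteq)$ via the signatures $\sigma(w)$, dispose of the case where $\tilde\varrho$ breaks an inclusion by the two-element witness $\{A,B\}$ with $\mathcal{U}_3=\{A\}$, and then \emph{construct} the antichain as $\min F$ for a maximal discrepant filter $F$, using the identity $\ell_{\mathcal{A},\mathcal{A}}=\sum_{F'\supseteq F}n_{F'}$ and the cancellation of all terms above $F$. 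Your version is more conceptual and names the witness explicitly, at the cost of importing the filter-lattice language; the paper's is more elementary and purely local. The only point to tighten is the multiset issue you already flag: if two equal copies of an attachment set both lie in $\min F$, they are weakly comparable, so $\min F$ is not literally an antichain in the inclusion. This is repaired in one line by thinning $\mathcal{A}$ to one representative per distinct set, which changes neither $\bigcap_{A\in\mathcal{A}}A$ nor its image under $\tilde\varrho$ (your first case already forces $\tilde\varrho$ to map equal copies to equal copies), and the thinned $\mathcal{A}$ is a genuine antichain as the lemma and its later use (the bound $|\mathcal{U}_2|\leq d$) require. With that fix the argument is complete.
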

	In our case, $\mathcal{U}_2$ is an antichain of attachment sets of
	one of $G$ or~$H$, and hence $|\mathcal{U}_2|\leq d$ follows from the
	assumption of considering $S_d$-graphs.
	
	For simplicity, we say that $\mathcal{U}'\subseteq\mathcal{U}$ is 
	{\em Venn-good} if
	$\ell_{\mathcal{U}',\mathcal{U}_0} =
	\ell_{\tilde\varrho(\mathcal{U}'),\tilde\varrho(\mathcal{U}_0)}$
	holds true for all $\emptyset\not=\mathcal{U}_0\subseteq\mathcal{U}'$,
	and we call $\mathcal{U}_0$ a {\em witness} (of $\mathcal{U}'$ not being Venn-good) if 
	$\ell_{\mathcal{U}',\mathcal{U}_0} \not=
	\ell_{\tilde\varrho(\mathcal{U}'),\tilde\varrho(\mathcal{U}_0)}$.
	Recalling that the permutation $\tilde\varrho$ of $\mathcal{U}$ is determined by an
	automorphism $\varrho$ of our poset $R$, we also more precisely say that
	$\mathcal{U}'$ is Venn-good {\em for} this automorphism $\varrho$.
	
	\begin{proof}[of Lemma~\ref{lem:Venngood}]
		Notice that $\mathcal{U}$ itself is not Venn-good, and $\mathcal{U}_1$ is a witness.
		Choose $\mathcal{U}_2\subseteq\mathcal{U}$ such that $\mathcal{U}_2$ is
		{\em not} Venn-good and it is minimal such by inclusion,
		and assume (for a contradiction) that
		there are $A_1,A_2\in\mathcal{U}_2$ such that~$A_1\subseteq A_2$.
		If $\tilde\varrho(A_1)\not\subseteq\tilde\varrho(A_2)$, then already
		$\mathcal{U}_2:=\{A_1,A_2\}$ is not Venn-good (with a witness $\{A_1\}$), 
		and so let $\tilde\varrho(A_1)\subseteq\tilde\varrho(A_2)$.
		
		Let $\mathcal{U}_3$ be a witness of $\mathcal{U}_2$ not being Venn-good,
		and for $j=2,3$ denote: 
		$\mathcal{U}_j^0:=\mathcal{U}_j\setminus\{A_1,A_2\}$,
		$\mathcal{U}_j^1:=\big(\mathcal{U}_j\cup\{A_1\}\big)\setminus\{A_2\}$,
		$\mathcal{U}_j^2:=\big(\mathcal{U}_j\cup\{A_2\}\big)\setminus\{A_1\}$,
		$\mathcal{U}_j^3:=\mathcal{U}_j\cup\{A_1,A_2\}$.
		By our minimality assumption, all three subfamilies
		$\mathcal{U}_2^0$, $\mathcal{U}_2^1$ and $\mathcal{U}_2^2$ are Venn-good.
		We first easily derive
		\begin{eqnarray*}
			\ell_{\mathcal{U}_2,\mathcal{U}_3^0} = 
			\ell_{\mathcal{U}_2\setminus\{\!A_1\!\},\,\mathcal{U}_3^0} =
			\ell_{\mathcal{U}_2^2,\,\mathcal{U}_3^0} 
			&=&
			\ell_{\tilde\varrho(\mathcal{U}_2^2),\tilde\varrho(\mathcal{U}_3^0)} =
			\ell_{\tilde\varrho(\mathcal{U}_2),\tilde\varrho(\mathcal{U}_3^0)}
			\,,\\
			\ell_{\mathcal{U}_2,\mathcal{U}_3^1} = 0 &=& 0 =
			\ell_{\tilde\varrho(\mathcal{U}_2),\tilde\varrho(\mathcal{U}_3^1)}
			\,,\\
			\ell_{\mathcal{U}_2,\mathcal{U}_3^3} = 
			\ell_{\mathcal{U}_2\setminus\{\!A_2\!\},\,\mathcal{U}_3^3\setminus\{\!A_2\!\}} 
			=\ell_{\mathcal{U}_2^1,\,\mathcal{U}_3^1} 
			&=&
			\ell_{\tilde\varrho(\mathcal{U}_2^1),\tilde\varrho(\mathcal{U}_3^1)} =
			\ell_{\tilde\varrho(\mathcal{U}_2),\tilde\varrho(\mathcal{U}_3^3)}
			\,.\end{eqnarray*}
		Then, using trivial
		$\ell_{\mathcal{U}_2^0,\,\mathcal{U}_3^0}=
		\ell_{\mathcal{U}_2,\mathcal{U}_3^0}+\ell_{\mathcal{U}_2,\mathcal{U}_3^1}
		+\ell_{\mathcal{U}_2,\mathcal{U}_3^2}+\ell_{\mathcal{U}_2,\mathcal{U}_3^3}$
		and its counterpart under $\tilde\varrho$, we conclude
		\begin{eqnarray*}
			\ell_{\mathcal{U}_2,\mathcal{U}_3^2} &=&
			\ell_{\mathcal{U}_2^0,\,\mathcal{U}_3^0}
			-\ell_{\mathcal{U}_2,\mathcal{U}_3^0}-0
			-\ell_{\mathcal{U}_2,\mathcal{U}_3^3}
			\\	&=&\ell_{\tilde\varrho(\mathcal{U}_2^0),\tilde\varrho(\mathcal{U}_3^0)}
			-\ell_{\tilde\varrho(\mathcal{U}_2),\tilde\varrho(\mathcal{U}_3^0)}
			-\ell_{\tilde\varrho(\mathcal{U}_2),\tilde\varrho(\mathcal{U}_3^3)}
			=\>\ell_{\tilde\varrho(\mathcal{U}_2),\tilde\varrho(\mathcal{U}_3^2)}
			\,.\end{eqnarray*}
		However, $\mathcal{U}_3\in\{\mathcal{U}_3^0,
		\mathcal{U}_3^1,\mathcal{U}_3^2,\mathcal{U}_3^3\}$,
		and so one of the latter four equalities contradicts the assumption 
		that $\mathcal{U}_3$ witnessed $\mathcal{U}_2$ not being Venn-good.\qed
	\end{proof}
	
	\begin{cor}\label{cor:d-tuples}
		Let a poset $R$, its automorphism $\varrho$, attachment collection $\mathcal{U}$ 
		and permutation $\tilde\varrho$ of $\mathcal{U}$ be as in Lemma~\ref{lem:preciseVenn}.
		We have that $\mathcal{U}$ is Venn-good (wrt.~$\tilde\varrho$),
		if and only if every $\mathcal{U}'$ is Venn-good,
		where $\mathcal{U}'\subseteq\mathcal{U}$ is the subcollection
		of attachment sets of the union of some (any) $d$ levels of the poset~$R$.
		\qed\end{cor}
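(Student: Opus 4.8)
The plan is to derive both directions of the equivalence from a single monotonicity principle for Venn-goodness, together with the size bound supplied by the preceding lemma and the fact that automorphisms of $R$ preserve its levels.

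I would first isolate the following elementary sublemma: \emph{if $\mathcal{V}\subseteq\mathcal{U}$ satisfies $\tilde\varrho(\mathcal{V})=\mathcal{V}$ and is Venn-good, then every $\mathcal{W}\subseteq\mathcal{V}$ is Venn-good}. Its proof is pure bookkeeping on the Venn cells. An element of the ground set lies in the $\mathcal{W}$-cell indexed by $\mathcal{W}_0\subseteq\mathcal{W}$ exactly when it lies in one of the finer $\mathcal{V}$-cells indexed by $\mathcal{W}_0\cup\mathcal{S}$ for some $\mathcal{S}\subseteq\mathcal{V}\setminus\mathcal{W}$, so $\ell_{\mathcal{W},\mathcal{W}_0}=\sum_{\mathcal{S}\subseteq\mathcal{V}\setminus\mathcal{W}}\ell_{\mathcal{V},\mathcal{W}_0\cup\mathcal{S}}$. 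Reindexing the sum by $\tilde\varrho$ (legitimate because $\tilde\varrho$ permutes $\mathcal{V}$) and using that $\mathcal{V}$ is Venn-good term by term, the same sum computes $\ell_{\tilde\varrho(\mathcal{W}),\tilde\varrho(\mathcal{W}_0)}$, which yields the claim.

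The forward implication is then immediate: apply the sublemma with $\mathcal{V}=\mathcal{U}$ (which is $\tilde\varrho$-invariant and, by hypothesis, Venn-good) and $\mathcal{W}=\mathcal{U}'$, the collection of attachment sets of any $d$ levels. For the converse I would argue by contraposition. If $\mathcal{U}$ is not Venn-good, the preceding lemma yields a not-Venn-good subcollection $\mathcal{U}_2$ that is either of size at most two or an antichain; by the remark following that lemma the antichain lies among the attachment sets of a single one of $G,H$, so the width-$d$ property of $S_d$-graphs forces $|\mathcal{U}_2|\le d$. The members of $\mathcal{U}_2$ are attachment sets of at most $d$ pairwise distinct components, and these occupy at most $d$ levels of $R$; hence $\mathcal{U}_2\subseteq\mathcal{U}'$, where $\mathcal{U}'$ collects the attachment sets of some $d$ levels (padding the selection if fewer than $d$ are actually used). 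Since $\varrho$ is a poset automorphism it preserves levels, so it fixes this union of $d$ levels setwise, and therefore $\tilde\varrho(\mathcal{U}')=\mathcal{U}'$. The contrapositive of the sublemma, applied with $\mathcal{V}=\mathcal{U}'$ and $\mathcal{W}=\mathcal{U}_2$, then shows that $\mathcal{U}'$ is not Venn-good, contradicting the assumption that all $d$-level subcollections are Venn-good.

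The main obstacle I anticipate lies in the level count rather than in the cell algebra: one must chain together the structural conclusion of the preceding lemma (an antichain, and by the remark one confined to a single graph) with the width bound to get $|\mathcal{U}_2|\le d$, and then argue that at most $d$ attachment sets can meet at most $d$ levels of $R$. The second delicate point is the $\tilde\varrho$-invariance of a level-closed collection $\mathcal{U}'$, which is precisely what permits invoking the sublemma for $\mathcal{U}'$ and which rests entirely on automorphisms of $R$ being level-preserving. (For $d=1$, where $R$ is a disjoint union of two chains, the statement is degenerate and can be checked directly.)
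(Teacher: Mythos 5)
Your argument is correct and follows the route the paper intends: the corollary is stated there without proof, resting on the preceding lemma for the ``small witness'' direction exactly as you use it in your contraposition, and on the (implicit) monotonicity of Venn-goodness under passing to subcollections for the forward direction, which your cell-summation sublemma $\ell_{\mathcal{W},\mathcal{W}_0}=\sum_{\mathcal{S}\subseteq\mathcal{V}\setminus\mathcal{W}}\ell_{\mathcal{V},\mathcal{W}_0\cup\mathcal{S}}$ correctly supplies. Your explicit handling of the level count (at most $d$ witness sets coming from at most $d$ components, hence at most $d$ levels, with the degenerate $d=1$ case flagged) is a faithful and slightly more careful rendering of the paper's remark that a failure ``involves only at most $d$ levels.''
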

	
	Corollary~\ref{cor:d-tuples} shows a clear road to computing the subgroup
	$\Gamma'\subseteq\Gamma$ in step 3 of Proce\-dure~\ref{proc:isomore}.
	In every refinement step of a chain
	$\Gamma=\Gamma_0\supseteq\Gamma_1\supseteq\cdots\supseteq\Gamma_h=\Gamma'$,
	the following constraint is imposed: the subcollection of attachment sets of some
	$d$-tuple of levels of the poset~$R$ is Venn-good for every member of the next subgroup.
	All these steps are manageable; the ratio $|\Gamma_i|/|\Gamma_{i+1}|$ is
	bounded from above by the maximum number of subpermutations of
	$\Gamma_i$ on the respective $d$ levels (as proved below), which is $\leq(2d)!^{d}$.
	Since the height of $R$ is $\Theta(n)$, though, we cannot afford to check
	all $d$-tuples of levels this way.
	Fortunately, it is also not necessary by the following argument.
	
	By Lagrange's group theorem,
	$|\Gamma_{i+1}|$ divides $|\Gamma_i|$, and so either $\Gamma_{i+1}=\Gamma_i$
	or $|\Gamma_{i+1}|\leq\frac12|\Gamma_i|$.
	Hence the number of strict refinement steps in our chain of subgroups is $h\leq\log|\Gamma|$.
	Since $|\Gamma|\leq(2d)!^n$, we get $h=\mathcal{O}(nd\log d)$.
	Furthermore, the $d$-tuples of levels giving our $h$ strict refinement
	steps can be, one at each step, computed by Algorithm~\ref{alg:Gammaprime}.
	
	\begin{algorithm}[tb]
		\caption{One step of computation of the subgroup $\Gamma'\subseteq\Gamma$}
		\label{alg:Gammaprime}\normalsize
		\begin{algorithmic}[1]\smallskip
			\Require the attachment collection $\mathcal{U}$ of the graph $K$;
			
			\noindent the colored poset $R$ of $K$ 
			with levels $L_1,L_2,\ldots,L_k$;
			
			\noindent a subgroup $\Gamma_{i-1}$ (via a generator set) of the full
			automorphism group of~$R$.
			\Ensure either a certificate that $\mathcal{U}$ is Venn-good 
			for every member of $\Gamma_{i-1}$; or
			
			\noindent
			a subgroup $\Gamma_i\subsetneq\Gamma_{i-1}$ (via a generator set)
			such that, for the attachment collection $\mathcal{T}\subseteq\mathcal{U}$
			of some $d$-tuple of levels of $R$,~ $\mathcal{T}$ is
			Venn-good precisely for every member of $\Gamma_i$ 
			(and not for members of $\Gamma_{i-1}\setminus\Gamma_i$).
			\smallskip
			
			\State $M_2 \leftarrow \emptyset$
			\Repeat{~for $a:=1,2,\ldots,d\,$}:
			\Repeat{~for $b:=1,2,\ldots,k+1\,$}:
			\If{$b>k$}
			\Return ``$\mathcal{U}$ is Venn-good for all of $\Gamma_{i-1}$'';
			\label{it:allgood}\EndIf{}
			\State $M_1 \leftarrow $ $(L_1\cup L_2\cup\dots\cup L_b)$ $\cup$ $M_2$;
			\State $\,\mathcal{U}_1 \leftarrow $
			the subcollection of attachment sets of~$M_1$,
			~$\mathcal{U}_1\subseteq\mathcal{U}$;
			\Until{ $\mathcal{U}_1$ is not Venn-good (cf.~Lemma~\ref{lem:testconsist})
				for some generator of $\Gamma_{i-1}$};
			\State $j_{a} \leftarrow b$;
			\State $M_2 \leftarrow $ $L_{j_1}\cup L_{j_2}\cup\dots\cup L_{j_{a}}$;
			
			\Until{$j_a=1$ or $a=d$ };
			\State $\,\mathcal{U}_2 \leftarrow $
			the subcollection of attachment sets of~$M_2$,
			~$\mathcal{U}_2\subseteq\mathcal{U}$;
			
			\State Call the algorithm of Theorem~\ref{thm:furstgen} to compute
			the subgroup $\Gamma_i\subseteq\Gamma_{i-1}$, such that the
			membership test of $\varrho\in\Gamma_i$ checks whether
			$\mathcal{U}_2$ is Venn-good for $\varrho$
			(cf.~Lemma~\ref{lem:testconsist});
			\State\Return { $\Gamma_i$ }
			
		\end{algorithmic}
	\end{algorithm}

	\begin{lem}\label{lem:Gammaprim}
		Computation of the subgroup $\Gamma'\subseteq\Gamma$ can be accomplished
		in \textbf{FPT}-time with respect to the fixed parameter~$d$ by iterated calls to
		Algorithm~\ref{alg:Gammaprime}, starting from $\Gamma_0=\Gamma$.
	\end{lem}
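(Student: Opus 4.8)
The plan is to verify that the iterated calls to Algorithm~\ref{alg:Gammaprime}, started from $\Gamma_0=\Gamma$ (which is available with a polynomial-size generating set from Corollary~\ref{cor:bdcm}), produce a strictly descending chain $\Gamma=\Gamma_0\supsetneq\Gamma_1\supsetneq\cdots\supsetneq\Gamma_h=\Gamma'$ in which every step runs in \textbf{FPT}-time, and that $h$ is polynomially bounded. First I would justify a single call. The key preliminary observation is that the sets $M_1$ and $M_2$ built in the algorithm are always \emph{unions of entire levels} of $R$; since every $\varrho\in\Gamma_{i-1}$ preserves the levels of $R$, the induced permutation $\tilde\varrho$ of $\mathcal{U}$ fixes the corresponding subcollections setwise, i.e.\ $\tilde\varrho(\mathcal{U}_2)=\mathcal{U}_2$ and $\tilde\varrho(\mathcal{U}_1)=\mathcal{U}_1$. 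This has two consequences: Lemma~\ref{lem:testconsist} applies verbatim, so each Venn-goodness test costs $\mathcal{O}(n^2)$; and, because the permutations of a fixed family that preserve its cardinality Venn diagram form a group, the candidate $\Gamma_i=\{\varrho\in\Gamma_{i-1}:\mathcal{U}_2\text{ is Venn-good for }\varrho\}$ is indeed a \emph{subgroup} of $\Gamma_{i-1}$ --- it is the preimage of that stabilizer under the homomorphism $\varrho\mapsto\tilde\varrho$.

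Once $\Gamma_i$ is known to be a subgroup with a polynomial-time membership test (Lemma~\ref{lem:testconsist}), I would invoke Theorem~\ref{thm:furstgen}. Its hypothesis on the index is met because membership in $\Gamma_i$ depends only on the restriction of $\varrho$ to the at most $d$ levels forming $M_2$, each of width $\leq 2d$ in $R=P\uplus Q$; hence $\Gamma_i$ is a union of cosets of the kernel of the restriction homomorphism into $\prod_{t}\mathrm{Sym}(L_{j_t})$, giving $|\Gamma_{i-1}|/|\Gamma_i|\leq (2d)!^{\,d}$. Thus each call outputs generators of $\Gamma_i$ in \textbf{FPT}-time, and together with the polynomial greedy search and the $\mathcal{O}(n^2)$ tests, one step is \textbf{FPT}-time.

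The main obstacle is the correctness of the greedy search that isolates the $d$-tuple $M_2$. Here I would lean on the witness lemma preceding Corollary~\ref{cor:d-tuples}: any failure of Venn-goodness of $\mathcal{U}$ for a fixed $\varrho$ is witnessed by an antichain of at most $d$ attachment sets of one of $G$, $H$, hence by the attachment collection of at most $d$ levels of $R$. I must establish two things. First, if the inner scan reaches $b>k$ at line~\ref{it:allgood}, then the prefix union at $b=k$ equals all of $R$ and was found Venn-good for every generator; since being Venn-good is a group condition, it is then Venn-good for all of $\Gamma_{i-1}$, so by Lemma~\ref{lem:preciseVenn} and Corollary~\ref{cor:d-tuples} we conclude $\Gamma_{i-1}=\Gamma'$. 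Second, otherwise the levels $L_{j_1},\dots,L_{j_a}$ isolated by the outer loop (at most $d$ of them, by the antichain bound and the stopping rule $j_a=1$ or $a=d$) form an $M_2$ whose collection $\mathcal{U}_2$ is \emph{not} Venn-good for some generator of $\Gamma_{i-1}$, which forces $\Gamma_i\subsetneq\Gamma_{i-1}$ to be a strict refinement. The delicate point is exactly that the greedy only probes nested prefix unions rather than all $d$-subsets of levels; one argues, using the minimality with which each $L_{j_a}$ is chosen and the antichain structure of the witness, that the retained small family $M_2$ already carries the violation that appeared when the prefix was extended.

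Finally, I would bound the number of strict steps and conclude. By Lagrange's theorem each strict refinement at least halves the order, so $h\leq\log_2|\Gamma|=\mathcal{O}(nd\log d)$ using $|\Gamma|\leq (2d)!^{\,n}$, as recorded before the algorithm. Therefore $\Gamma'$ is reached after a polynomial number of \textbf{FPT}-time calls, which is \textbf{FPT}-time overall. Correctness of the final output follows because every step removes only automorphisms that fail Venn-goodness of some level-collection --- hence fail consistency by Lemma~\ref{lem:preciseVenn} --- while the terminating certificate guarantees that all surviving automorphisms are Venn-good, i.e.\ precisely the consistent ones that constitute $\Gamma'$.
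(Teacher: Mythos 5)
Your proposal is correct and follows essentially the same route as the paper's proof: analyze a single call to Algorithm~\ref{alg:Gammaprime}, use the witness lemma and Corollary~\ref{cor:d-tuples} to justify that the greedy prefix scan isolates a non-Venn-good $d$-tuple of levels, bound the index $|\Gamma_{i-1}|/|\Gamma_i|\leq(2d)!^{d}$ via restrictions to $M_2$ so that Theorem~\ref{thm:furstgen} applies, and bound the number of strict refinements by $\log|\Gamma|=\mathcal{O}(nd\log d)$ via Lagrange. The one step you leave at sketch level (that the lexicographically minimal witness sequence is a prefix of the computed $(j_1,\ldots,j_a)$, so $\mathcal{U}_2$ inherits the violation by monotonicity of non-Venn-goodness under supersets) is exactly the step the paper also dispatches with ``one can straightforwardly verify,'' and your added observations that $\tilde\varrho$ fixes level-based subcollections setwise and that $\Gamma_i$ is genuinely a subgroup are correct elaborations of points the paper leaves implicit.
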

	\vspace*{-1ex}%
	\begin{proof}
		We give this proof by analyzing a call to Algorithm~\ref{alg:Gammaprime}.
		If $\mathcal{U}$ is Venn-good for every generator of $\Gamma_{i-1}$
		(which is equivalent to that of every member of $\Gamma_{i-1}$), then we
		find this already in the first iteration of $a=1$, on line \ref{it:allgood}.
		Hence we may further assume that $\mathcal{U}$ is not Venn-good.
		
		Let $k\geq j'_1>j'_2>\dots>j'_c\geq1$ be an index sequence of length
		$c\leq d$ such that the subcollection of attachment sets of
		$L_{j'_1}\cup L_{j'_2}\cup\dots\cup L_{j'_c}$ is not
		Venn-good for some generator of $\Gamma_{i-1}$
		(such a sequence must exist by Corollary~\ref{cor:d-tuples}),
		and the vector $(j'_1,j'_2,\ldots,j'_c)$ is lexicographically minimal of
		these properties.
		Then one can straightforwardly verify that $(j'_1,j'_2,\ldots,j'_c)$ is a
		prefix of (or equal to) the vector $(j_1,j_2,\ldots,j_a)$ computed by Algorithm~\ref{alg:Gammaprime}.
		Consequently, the collection $\mathcal{U}_2$ of
		Algorithm~\ref{alg:Gammaprime} is not Venn-good for some generator of $\Gamma_{i-1}$.
		
		Next, we verify the fulfillment of the assumptions of Theorem~\ref{thm:furstgen}.
		Generators of $\Gamma_{i-1}=\Pi$ have been given to
		Algorithm~\ref{alg:Gammaprime}.
		The ratio $|\Pi|/|\Pi_1|$, where $\Pi_1=\Gamma_i$ in our case,
		can be bounded as follows (despite we do not know $\Gamma_i$ yet):
		$|\Gamma_{i-1}|/|\Gamma_{i}|$ equals the number of distinct cosets of the
		subgroup $\Gamma_{i}$ in $\Gamma_{i-1}$.
		If we consider two automorphisms $\alpha,\beta\in\Gamma_{i-1}$ which are
		equal when restricted to $M_2$, then the automorphism $\alpha^{-1}\beta$
		determines a permutation of $\mathcal{U}$ which is identical on
		$\mathcal{U}_2$ (so it is Venn-good), and hence $\alpha^{-1}\beta\in\Gamma_{i}$.
		The latter means that $\alpha$ and $\beta$ belong to the same coset of
		$\Gamma_{i}$, and consequently, the number of distinct cosets is at most the
		number of distinct subpermutations on $M_2$ possibly induced by
		$\Gamma_{i-1}$, that is at most $(2d)!^{d}$.
		Therefore, we can finish this step in \textbf{FPT}-time with respect to~$d$.
		
		Finally, as argued already, there can be at
		most $=\mathcal{O}(nd\log d)$ calls to Algorithm~\ref{alg:Gammaprime} altogether
		(and the last one certifies that $\mathcal{U}$ is already Venn-good).\qed
	\end{proof}

	\subsection{The complete algorithm}
	
	At last we review a detailed implementation of {\bf Procedure~\ref{proc:isomore}} as a pseudocode in Algorithm~\ref{alg:fullISO} listing all steps of the full isomorphism testing procedure.
	
	\begin{algorithm}[tb]
		\caption{Isomorphism test for general $S_d$-graphs ($G$,$H$)}
		\label{alg:fullISO}\normalsize
		\begin{algorithmic}[1]
			\Require Given two $S_d$-graphs $G$ and $H$ (their representations not required),
			and the parameter $d$ of $S_d$.
			\Ensure Result of the isomorphism test between $G$ and $H$.
			\medskip
			
			\State Find the maximal clique collections $\mathcal{S}$ and 
			$\mathcal{T}$ of the chordal graphs~$G$~and~$H$;
			\If{$|\mathcal{S}|$ $\neq$ $|\mathcal{T}|$, or the
				cardinalities of members of $\mathcal{S}$ and $\mathcal{T}$ do not match}
			\State\Return ``$G$ and $H$ are not isomorphic'';
			\EndIf{}
			
			\State $K \leftarrow G\uplus H$ (disjoint union);
			\Repeat{~for each maximal clique $C$ $\in$ $\mathcal{S}$}:
			\State Find the connected components $X_1,X_2,\ldots,X_k$ of~$G-C$,
			assuming the \mbox{equivalent} connected components
			are joined into single bridge(s) of~$C$;
			\State $P \leftarrow$ the partial order on the connected components
			$X_1,X_2,\ldots,X_k$ of~$G-C$ determined by their
			attachments;
			\Until{all $G[C \cup X_i]$, $1\leq i\leq k$, are interval graphs,
				and $P$ is of width $\leq d$};
			\label{li:fixCC}
			
			\For{each $D$ $\in$ $\mathcal{T}$ with $\vert C \vert = \vert D \vert$} 
			\State Find the connected components $Y_1,Y_2,\ldots,Y_l$ of $H-D$,
			assuming the \mbox{equivalent} connected components
			are joined into single bridge(s) of~$D$;
			
			\State $Q \leftarrow$ the partial order on the conn.\ components
			$Y_1,Y_2,\ldots,Y_l$ of $H-D$;
			\If{$k=l$, all $H[D \cup Y_i]$, $1\leq i\leq k$, are interval graphs,
				\\~\hfill\mbox{and $Q$ is of width $\leq d$}}
			
			\State $R \leftarrow P \uplus Q$ (disjoint union);
			\label{li:letR}
			\State Determine the levels $L_1,L_2,\ldots,L_k$ of the poset~$R$;
			\For{$i:=1,2,\ldots,k$}	
			\For{each pair of components $Z,Z'\in L_i$}
			\State Compare the interval graphs $K(Z)$ and $K(Z')$ to respectful isomorphism
			\label{it:comparZ}
			(recall the notation $K(Z)$ from Procedure~\ref{proc:isomore});
			\EndFor{}
			
			\State Having computed in the previous step the isomorphism equivalence classes 
			$L_i^1,\ldots,L_i^{c_i}$ of~$L_i$,
			give the element $Z\in L_i$ of poset $R$ color $(i,j)$ iff $Z\in L_i^j$;
			\EndFor{}
			
			\State $\Gamma \leftarrow$ the color-preserving automorphism group
			of~$R$, computed by Theorem~\ref{thm:bdcm};
			\label{li:cGamma}
			\State $\Gamma' \leftarrow$ the subgroup of $\Gamma$ consisting of
			those automorphisms $\varrho$ of $R$ which pass the test of Lemma~\ref{lem:preciseVenn};
			computed by {\bf Algorithm~\ref{alg:Gammaprime}};
			\label{li:cGammaX}
			
			\If{$P$ and $Q$ are swapped in some of the generators of $\Gamma'$}
			\State\Return ``$G$ and $H$ are isomorphic'';
			\label{li:retyes}
			\EndIf{}
			\EndIf{}
			\EndFor{}
			\State\Return ``$G$ and $H$ are not isomorphic'';
		\end{algorithmic}
	\end{algorithm}
	
	\begin{theorem}\label{thm:MAIN}
		The {\sc Isomorphism} problem of $S_d$-graphs can be solved in
		\textbf{FPT}-time with respect to the fixed parameter~$d$ by Algorithm~\ref{alg:fullISO}.
	\end{theorem}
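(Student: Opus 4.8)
The plan is to establish both correctness and the \textbf{FPT} running time of Algorithm~\ref{alg:fullISO} by assembling the machinery developed above. For correctness I would first invoke Proposition~\ref{prop:Sdcharacteriz}: since $G$ is an $S_d$-graph, the \textbf{repeat} loop on line~\ref{li:fixCC} always succeeds in fixing a suitable central clique $C$ (the center of some $S_d$-representation of $G$), whose components induce interval graphs and whose central poset $P$ has width $\leq d$. The central claim to prove is that, for this fixed $C$ and any candidate $D$ passing the inner test, an isomorphism $g\colon G\to H$ with $g(C)=D$ exists if and only if $R=P\uplus Q$ admits a color-preserving automorphism $\varrho$ that swaps $P$ and $Q$ and is consistent on $C\cup D$ in the sense of step~3 of Procedure~\ref{proc:isomore}. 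One direction is direct: such a $g$ restricts to a bijection of components inducing a level- and color-preserving $\varrho$ (the colors are interval-isomorphism types of the $K(Z)$, which $g$ preserves), and $g|_{C}$ witnesses consistency.

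For the converse direction I would apply Lemma~\ref{lem:preciseVenn}: a consistent $\varrho$ yields an automorphism $f$ of $K=G\uplus H$ with $f(C\cup D)=C\cup D$ mapping each component $Z$ to $\varrho(Z)$. The subtle point -- which I expect to be the main obstacle -- is upgrading $f(C\cup D)=C\cup D$ to $f(C)=D$, so that $f|_{V(G)}$ is a genuine isomorphism onto $V(H)$. Here I would use that attachment sets of $G$-components lie in $C$ while those of $H$-components lie in $D$, and that $C\cap D=\emptyset$ in $K$. Consequently every nonempty cell of the cardinality Venn diagram of $\mathcal{U}$ lies entirely inside $C$ or entirely inside $D$, and because $\varrho$ swaps $P$ and $Q$ the induced $\tilde\varrho$ swaps the $C$-cells with the $D$-cells. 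Thus the bijection $f_0$ built in the proof of Lemma~\ref{lem:preciseVenn} maps $\big(\bigcup\mathcal{U}\big)\cap C$ onto $\big(\bigcup\mathcal{U}\big)\cap D$ and vice versa; the equality of the Venn diagrams together with $|C|=|D|$ forces the uncovered parts of $C$ and $D$ to have equal size, so $f_0$ extends to a permutation with $f_0(C)=D$. Restricting the resulting $f$ to $V(G)$ then gives the desired $g$.

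I would then justify that the group $\Gamma'$ computed on line~\ref{li:cGammaX} is precisely the subgroup of consistent automorphisms. This is the content of Corollary~\ref{cor:d-tuples} together with the lemma on Algorithm~\ref{alg:Gammaprime}, which show that the full Venn-diagram test of Lemma~\ref{lem:preciseVenn} is equivalent to its restriction to all $d$-tuples of levels, and that this refinement is realizable as a tower of subgroups $\Gamma=\Gamma_0\supseteq\cdots\supseteq\Gamma'$ computable in \textbf{FPT}-time via Theorem~\ref{thm:furstgen}. Finally, the swap test on line~\ref{li:retyes} may legitimately inspect only the generators: ``swaps $P$ and $Q$'' defines a group homomorphism $\Gamma'\to\mathbb{Z}_2$, whose image is nontrivial exactly when some generator maps to the nontrivial element. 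Hence the algorithm reports ``isomorphic'' precisely when a consistent swapping automorphism exists, which by the reduction above is equivalent to $G\simeq H$. It suffices to test the single fixed $C$ against all valid $D$, since if $G\simeq H$ via some $g$ then $D:=g(C)$ is a maximal clique of $H$ that passes the inner test.

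For the running time I would note that $R=P\uplus Q$ has width at most $2d$, so every level satisfies $|L_i|\leq 2d$ and the coloring turns $R$ into a $2d$-bounded color multiplicity graph. Listing the at most $n$ maximal cliques, computing components, interval-graph tests, poset widths, levels, and level colorings are all polynomial, the interval-isomorphism comparisons on line~\ref{it:comparZ} costing only $\mathcal{O}(d^2)$ per level. Computing $\Gamma$ via Theorem~\ref{thm:bdcm} is \textbf{FPT} in $d$, and computing $\Gamma'$ through the $\mathcal{O}(nd\log d)$ iterated calls to Algorithm~\ref{alg:Gammaprime} is \textbf{FPT} in $d$ as well, while the swap check over the polynomially many generators is polynomial. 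Since the outer loop fixes a single $C$ and the inner loop runs over at most $n$ cliques $D$, the total time is $f(d)\cdot n^{\mathcal{O}(1)}$, that is, \textbf{FPT} parameterized by~$d$.
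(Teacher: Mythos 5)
Your proposal is correct and follows essentially the same route as the paper's proof: fix one admissible central clique $C$ via Proposition~\ref{prop:Sdcharacteriz}, loop over candidate cliques $D$, and reduce everything to the existence of a consistent $P$--$Q$-swapping automorphism in $\Gamma'$, using Lemma~\ref{lem:preciseVenn}, Corollary~\ref{cor:d-tuples}, Algorithm~\ref{alg:Gammaprime} and Theorems~\ref{thm:bdcm} and~\ref{thm:furstgen}, with the same running-time accounting. Your explicit upgrade of $f(C\cup D)=C\cup D$ to $f(C)=D$ is a welcome detail the paper leaves implicit; the only small imprecision is that ``swaps $P$ and $Q$'' is not literally a homomorphism $\Gamma'\to\mathbb{Z}_2$ for a disconnected poset union, but the implication you need (a swapping member of $\Gamma'$ forces some generator to move an element of $P$ into $Q$) still holds because the $P$-preserving automorphisms form a subgroup, which is exactly the level of argument the paper itself uses.
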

	
	\begin{proof}
		We apply Procedure~\ref{proc:isomore}.
		This process runs $\mathcal{O}(n)$ iterations of choices of $C$ and~$D$;
		we first greedily find any valid central clique $C$ of an
		$S_d$-representation of $G$, and then iterate all maximal cliques~$D\subseteq H$.
		In each iteration, we routinely compute in polynomial time the sets of components
		$\mathcal{X}$ and $\mathcal{Y}$, and the central posets $P$ and $Q$ on them and $R=P\uplus Q$.
		If any of $P,Q$ has width greater than $d$, then we reject this iteration.
		Similarly, we reject the iteration if any of the graphs $K(Z)$ for
		$Z\in\mathcal{X}\cup\mathcal{Y}$ is not interval~\cite{recogIntervalLinear}.
		Otherwise, we compute colors on $R$ such that $Z,Z'\in\mathcal{X}\cup\mathcal{Y}$
		receive the same color iff they are on the same level of $R$ and 
		$K(Z)\simeq K(Z')$ with a respectful isomorphism.
		For the latter we use the isomorphism algorithm of~\cite{recogIntervalLinear} with coloring of~$C\cup D$.
		This finishes step 1.
		
		Step 2 -- computing the automorphism group $\Gamma$ of $R$, is done by Corollary~\ref{cor:bdcm}.
		
		Step 3 -- finding the subgroup $\Gamma'\subseteq\Gamma$,
		is accomplished by an iterated application of Theorem~\ref{thm:furstgen};
		the refinement steps are defined by $d$-tuples of levels of $R$ according to 
		Corollary~\ref{cor:d-tuples} and the above outlined procedure for finding them,
		and there are $\mathcal{O}(nd\log d)$ such steps.
		The membership test used in Theorem~\ref{thm:furstgen} is provided by
		Lemma~\ref{lem:testconsist}.

		Finally, we straightforwardly check in step~4 on the generators of $\Gamma'$ whether 
		some of them maps an element of $P$ to an element~of~$Q$.
		
		If any iteration of Procedure~\ref{proc:isomore} succeeds in step 4, then, by
		Lemma~\ref{lem:preciseVenn}, there exists an automorphism of the graph
		$K=G\uplus H$ which moreover swaps $G$ and~$H$.
		Then~$G\simeq H$.
		Conversely, assume $G\simeq H$.
		Since $G$ is an $S_d$-graph, we find a maximal central clique $C\subseteq G$,
		and since all maximal cliques $D\subseteq H$ are tried,
		we get into an iteration with $D$ being the isomorphic image of~$C$.
		Then the posets $P$ and $Q$ (wrt.~$C,D$) must be isomorphic respecting
		their colors, which follows from $G\simeq H$.
		Therefore, there exists an automorphism $\varrho\in\Gamma'$, and hence also
		a generator of $\Gamma'$, swapping $P$ and $Q$ (and preserving the
		computed cardinality Venn diagram by Lemma~\ref{lem:preciseVenn}).
		Some iteration hence succeeds and returns that $G\simeq H$.\qed
	\end{proof}

	\begin{restatable}{remark}{runtimMAIN}
		We do not explicitly state the runtime in Theorem~\ref{thm:MAIN}
		partly since it is not really useful and since neither \cite{furst}
		which we use states explicit runtime.
		Here we briefly remark that Procedure~\ref{proc:isomore} loops
		$\mathcal{O}(n)$ times with suitable $C$ and different choices of $D$, 
		analogously to the procedure of Theorem~\ref{theo:SDSMallTheo}, 
		and this initial setup of the procedure altogether takes time
		$\mathcal{O}(d^2n^3)$.
		Then we have to account for $\mathcal{O}(n)$ calls to steps 2 and 3 of
		Procedure~\ref{proc:isomore}, that is, $\mathcal{O}(n)$ computations of the
		subgroups $\Gamma$ and $\Gamma'$.
		Each time this part is dominated by step 3 which performs $\mathcal{O}(nd\log d)$
		calls to the algorithm of Theorem~\ref{thm:furstgen}~\cite{furst} in order
		to compute~$\Gamma'$ from~$\Gamma$.
		Reading the fine details of \cite{furst}, and adjusting it (the ``sift
		table'') to our setting, leads to an estimate of
		$\mathcal{O}(n^3)\cdot d!^{\mathcal{O}(d)}$ for each of these calls.
		After summarizing, we get the total estimate of
		$\mathcal{O}(n^5)\cdot d!^{\mathcal{O}(d)}$.
	\end{restatable}
	\begin{remark}
		Theorem~\ref{thm:MAIN} easily extends to colored graphs; 
	\begin{itemize}
		\item on line \ref{it:comparZ} of Algorithm~\ref{alg:fullISO} we can test for 
		colored isomorphism of interval	graphs as well, and
		\item in the calls to Algorithm~\ref{alg:Gammaprime} we
		simply test the cardinality Venn diagrams separately in each color class.
	\end{itemize}
	\end{remark}
	
This may also be a useful consequence.
	
	\begin{cor}\label{cor:Sdautomo}
		The automorphism group $Aut(G)$ of an $S_d$-graph $G$ can be computed in
		\textbf{FPT}-time with respect to the fixed parameter~$d$.
	\end{cor}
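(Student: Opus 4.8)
The plan is to observe that the engine behind Theorem~\ref{thm:MAIN} is really a computation of (constrained) automorphism groups, and to assemble $Aut(G)$ from it by the orbit--stabiliser principle applied to the action of $Aut(G)$ on the maximal cliques of~$G$. First I would fix one valid central clique $C$ of an $S_d$-representation of $G$ (greedily, as on line~\ref{li:fixCC} of Algorithm~\ref{alg:fullISO}) and build the coloured central poset $P$ on the components of $G-C$. Since automorphisms permute the maximal cliques and there are only $\mathcal{O}(n)$ of them of size $|C|$, it suffices to compute the stabiliser $A_C=\{f\in Aut(G):f(C)=C\}$ together with one coset representative for each clique in the orbit of~$C$.

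For the stabiliser I would run steps~1--3 of Procedure~\ref{proc:isomore} on $G$ alone, i.e.\ with $R=P$ in place of $P\uplus Q$, obtaining generators of the subgroup $\Gamma'\le Aut(P)$ of automorphisms that pass the cardinality Venn-diagram test. The map $A_C\to\Gamma'$ sending an automorphism to its induced action on the components is onto by the constructive direction of Lemma~\ref{lem:preciseVenn}, which also supplies an explicit vertex-level lift of each generator of $\Gamma'$. Its kernel is transparent: two vertices $u,v\in C$ lie in the same cell of the Venn diagram of the attachment collection exactly when they have identical neighbourhoods outside~$C$, hence are true twins of~$G$, so the kernel is the direct product of the symmetric groups on these twin classes of $C$ and of the analogous twin/interval automorphisms inside the individual bridges. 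All of this is delivered by the same Furst--Hopcroft--Luks tower-of-groups framework (Theorem~\ref{thm:furstgen}) already used in Section~\ref{hardsection}, so generators of $A_C$ are obtained in \textbf{FPT}-time.

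To capture automorphisms that move~$C$, I would loop over every maximal clique $C'$ with $|C'|=|C|$ and invoke the isomorphism engine of Theorem~\ref{thm:MAIN} on two copies of $G$ with the central cliques $C$ and~$C'$; whenever it reports that $C$ can be mapped to~$C'$, a generator of the corresponding $\Gamma'$ swapping the two posets yields, via Lemma~\ref{lem:preciseVenn}, a concrete automorphism $g_{C'}\in Aut(G)$ with $g_{C'}(C)=C'$. By orbit--stabiliser, $Aut(G)=\langle A_C,\ \{g_{C'}\}\rangle$, and I would output this generating set. There are $\mathcal{O}(n)$ candidate cliques $C'$ and each call is \textbf{FPT} in~$d$, so the overall procedure remains \textbf{FPT}.

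I expect the main obstacle to be the passage from the poset-level group $\Gamma'$, which the algorithm of Theorem~\ref{thm:MAIN} only needs implicitly, to honest vertex-level generators of $Aut(G)$: one must make the lifts supplied by Lemma~\ref{lem:preciseVenn} explicit and compatible across the shared clique~$C$, and verify that the kernel is indeed the claimed product of twin-symmetric groups and bridge-internal automorphisms. Once this lifting is justified, combining the stabiliser with the $\mathcal{O}(n)$ coset representatives is routine, and the \textbf{FPT} bound follows exactly as in Theorem~\ref{thm:MAIN}.
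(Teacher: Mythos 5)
Your proposal is correct and follows essentially the same route as the paper: both run the isomorphism engine of Theorem~\ref{thm:MAIN} on $G$ against itself over all candidate target cliques, lift poset-level generators to vertex-level maps via the constructive direction of Lemma~\ref{lem:preciseVenn}, and supplement them with the kernel generators (symmetric groups on the Venn-diagram cells of $C$ and the attachment-preserving automorphisms of the interval bridges, exactly the paper's $\Delta_2$ and $\Delta_3$). The only difference is packaging -- you organize the output via orbit--stabiliser with an explicit stabiliser $A_C$ plus coset representatives $g_{C'}$, whereas the paper simply takes the group generated by all recorded lifts over all choices of $D$ -- and these generate the same group.
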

	
	Before we turn to the proof, we remark that the temptingly easy way to this
	corollary -- to take just the part of Algorithm~\ref{alg:fullISO} dealing
	with the graph $G$ and the automorphisms of the poset $P$ -- would give us
	only the subgroup of the automorphism group of $G$ set-wise stabilizing~$C$.
	This may not be complete $Aut(G)$.
	
	\begin{proof}
		We run slightly modified Algorithm~\ref{alg:fullISO} on two copies of the
		graph $G$ (i.e.,~$H=G$).
		On line \ref{li:cGammaX}, we denote the computed group by
		$\Gamma'_D=\Gamma'$ (since it depends on the choice of~$D$ in the loop).
		The actual modification comes on line \ref{li:retyes};
		over all choices $D$ such that $\Gamma'_D$ swaps $P$ and~$Q$,
		we record the following permutations $\varrho^*$ of $V(K)$ for every generator
		$\varrho$ of $\Gamma'_D$:
		\begin{itemize}
			\item[a)] The permutation $\varrho^*$ is such that for every $Z\in\mathcal{X}\cup\mathcal{Y}$
			we let $\varrho^*$ on $Z$ be any isomorphism from $K(Z)$ to $K(\varrho(Z))$
			(e.g., on line \ref{it:comparZ} of the algorithm).
			\item[b)] On $C\cup D$, we let the restriction of $\varrho^*$ be any subpermutation respecting the
			mapping $\tilde\varrho$ on the attachment collection of $\mathcal{X}\cup\mathcal{Y}$
			in $C\cup D$ (cf.~Lemma~\ref{lem:preciseVenn}).
			This is a correct definition; recall that the attachments of $Z$ and of
			$\varrho(Z)$ are invariant on the choice of isomorphism in (a).
		\end{itemize}
		
		Let $\Delta_1$ be the permutation group generated on $V(K)$ by the
		permutations $\varrho^*$ for all generators $\varrho$ of $\Gamma'_D$ and over
		all matching choices of~$D$.
		Let $\Delta_2$ be the group obtained by adding to $\Delta_1$ the generators of
		the automorphism groups of the interval graphs $G[C\cup X]$
		for~$X\in\mathcal{X}$, computed as in~\cite{recogIntervalLinear}.
		Let $\Delta_3$ be the group obtained by adding to $\Delta_2$ the generators
		of the symmetric groups on subsets of $C$ which are the cells of the
		attachment collection $\mathcal{U}$ of $\mathcal{X}$ in~$C$.
		Finally, let $\Delta$ be the restriction of $\Delta_3$ to $V(G)$.
		We claim $Aut(G)=\Delta$.
		
		Indeed, every member of $\Delta$ is an automorphism of $G$ which follows
		from the analysis of Algorithm~\ref{alg:fullISO}.
		On the other hand, any automorphism $g\in Aut(G)$ is ``discovered'' by
		Algorithm~\ref{alg:fullISO} as a member of $\Gamma'_D$ for $D=g(C)$, 
		modulo the concrete choices of isomorphisms between the interval components
		and of a permutation of $C\cup D$ which respects the attachment collection
		$\mathcal{U}$ of $\mathcal{X}$ in~$C$,
		and $g$ thus is contained in the constructed group~$\Delta$.
		\qed\end{proof}

	\section{An XP-time isomorphism approach for $T$-graphs}\label{hardestsection}
	
	We now show that the approach from Section~\ref{hardsection} can be extended
	to solve even the isomorphism problem of $T$-graphs.
	However, this relatively easy extension comes at a price -- the more general
	algorithm runs only in \textbf{XP}-time.

	Before getting to the extension, we note that there is another possible approach
	to $T$-graph isomorphism, to which we will return in the next section.
	Given two $T$-graphs $G$ and $H$ on $n$ vertices, we may consider all possible 
	assignments of maximal cliques of $G$ and $H$ to the branching nodes of~$T$.
	This can be achieved in $n^{\mathcal{O}(T)}$ iterations.
	In each iteration, we use the assignment to define $S_d$-graph isomorphism
	subproblems ``around'' each branching node of $T$, and apply the algorithm
	from Section~\ref{hardsection}.
	However, this seemingly straightforward procedure has some rather deep technical
	problems when considering general $T$-graphs,
	and that is why we prefer another recursive approach outlined next.

	Our actual approach combines the building blocks of Algorithm~\ref{alg:fullISO}
	with a special recursive routine which, in a nutshell, takes care of
	the non-interval pieces which occur as bridges of the chosen clique(s).
	It is described in the next procedure.
	
	\begin{proc}\label{proc:xpT}\rm~
		Given two $T$-graphs $G$ and $H$ on $n$ vertices, we compute the
		automorphism group $Aut(G\uplus H)$ if $G$ and $H$ are isomorphic, or output
		that $G$ and $H$ are not isomorphic.
		
		First, fix a maximal clique $C \subseteq G$ such that every component of
		$G-C$ is a $T'$-graph for some strict subtree $T'\subsetneq T$.
		Then apply the following procedure to each maximal clique $D \subseteq H$ 
		such that $\vert C \vert = \vert D \vert$:
		\begin{enumerate}	
			\item Identify the set $\mathcal{X}$ of all components $Z$ of $G-C$ such that $K(Z)$ is an interval graph,
			and the set $\mathcal{Y}$ of analogous components of $H-D$.
			If $\vert \mathcal{X} \vert = \vert \mathcal{Y} \vert$,
			then let $A_1, \dots, A_a$ and $B_1, \dots, B_b$ be the remaining (i.e., non-interval) components of $G-C$ and $H-D$, respectively.
			Note that the numbers $a,b$ of the latter components are bounded as $a,b\leq|V(T)|$.
			If $a=b$, proceed with the next steps.
			
			\item\label{it:st2b} As in Algorithm~\ref{alg:fullISO},
			construct the posets $P$ and $Q$ formed by the interval components
			$\mathcal{X}$ and $\mathcal{Y}$, respectively, color their nodes by respectful
			isomorphism of the components, and compute the color-preserving automorphism group
			$\Gamma_0$ of $R = P \uplus Q$.
			(Recall that the group $\Gamma_0$ may not yet preserve the cardinality Venn
			diagram of the attachment collection of $\mathcal{X}\cup\mathcal{Y}$ in $C\cup D$.)
			If $\Gamma_0$ contains a generator swapping $P$ and $Q$, proceed with the next steps.
			
			\item Loop through all bijections $m$ from $\mathcal{A}=\{ A_1, \dots, A_a \}$ to $\mathcal{B}=\{ B_1, \dots, B_a \}$:
			\begin{enumerate}
				\label{it:loopbm}%
				\item\label{it:loopbma} For each matching pair of subgraphs $G[C\cup A_i]$ and
				$H[D\cup B_{m(i)}]$, $i=1,\ldots,a$, compute recursively the automorphism group $$\Gamma_i :=
				Aut\big(\, G[C\cup A_i] \uplus H[D\cup B_{m(i)}] \,\big) $$
				set-wise stabilizing~$C\cup D$.
				If the graphs $G[C\cup A_i]$ and $H[D\cup B_{m(i)}]$ are not isomorphic this way, 
				stop this iteration and continue with the next bijection~$m$.

				\item Define the group $\Gamma$ as the direct product of the following groups
				\label{it:compstogether}
				\begin{itemize}\parskip-1pt\par
					\item of $\Gamma_0$ on $\mathcal{X}\cup\mathcal{Y}$ (step~\ref{it:st2b}), and
					\item for $i=1,\ldots,a$, of $\Gamma_i$ restricted to~$A_i\cup B_{m(i)}$ (step~\ref{it:loopbma}).
				\end{itemize}
				See Remark~\ref{rem:explainaut} for an explanation of the purpose of this step.
				
				\item Let $\mathcal{U}$ be the attachment collection of
				$\mathcal{X}\cup\mathcal{Y}$ in $C\cup D$, and $\mathcal{V}_i$ be the
				attachment collection of $A_i\cup B_{m(i)}$ in $C\cup D$.
				Let $\mathcal{V}=\mathcal{U}\cup \mathcal{V}_1\cup\ldots\cup\mathcal{V}_a$ (recall that this is a multiset).
				Define $\tilde\Gamma$ as the group of permutations of $\mathcal{V}$ generated
				by the action of $\Gamma$ on the attachment collection $\mathcal{V}$.
				\label{it:gaction}

				\item\label{it:usealg2} Using analogy of Algorithm~\ref{alg:Gammaprime}
				 (see details in the proof of Theorem~\ref{thm:xpT}),
				compute the subgroup $\Gamma'\subseteq\Gamma$ of those members of $\Gamma$
				which are consistent on $C\cup D$.
				Precisely, $\Gamma'$ is the maximum subgroup of $\Gamma$ such that the related
				group $\tilde\Gamma'$, defined as in \eqref{it:gaction}, preserves the cardinality Venn diagram of $\mathcal{V}$.
				If no member of $\Gamma'$ swaps $G-C$ with $H-D$, stop this iteration and go to the next~$m$.
				\\Otherwise, let $\Gamma'$ be denoted by $\Gamma'_m$ w.r.t.\ the current bijection~$m$.
				
			\end{enumerate}		
			
			\item\label{it:st4} If none of the bijections $m$ in step \ref{it:loopbm} successfully computed the group $\Gamma'_m$,
			then the result of the current choice of the clique $D\subseteq H$ is void.
			Otherwise, let $\Gamma'_D$ be the group generated by all $\Gamma'_m$ over such ``successful''~$m$
			for the current clique~$D$.
			This is a sound definition of $\Gamma'_D$ since every member of each $\Gamma'_m$ is actually a permutation of the set
			$(\mathcal{X}\cup\mathcal{Y})\cup\bigcup_{i=1,\ldots,a}(A_i\cup B_{m(i)})$ (cf.~step~\ref{it:compstogether}).

			\item\label{it:st5} At last, we use the same construction on the components of $\mathcal{X}\cup\mathcal{Y}$ as in
			Corollary~\ref{cor:Sdautomo} to turn $\Gamma'_D$ into an automorphism subgroup
			(not complete yet) acting on the vertex set $V(G\uplus H)$.
			Let this group be denoted by $\Delta_D$.
		\end{enumerate}
		
		At the end, the desired automorphism group $Aut(G\uplus H)$ is generated by
		the union of the generators of the non-void groups $\Delta_D$ computed in
		the procedure.
	\end{proc}
	
	\begin{remark}\label{rem:explainaut}
	To fully understand Procedure~\ref{proc:xpT}, it is important to notice why the group $\Gamma$ in
	step~\eqref{it:compstogether} is such seemingly over-complicated;
	the group acts on the interval components (as whole) of~$\mathcal{X}\cup\mathcal{Y}$, while it
	simultaneously acts on the individual vertices of the non-interval components.

	The reason for such handling is that, as we know from Section~\ref{hardsection}, the
	attachments of a single component from~$\mathcal{X}\cup\mathcal{Y}$ in $C\cup D$ form a chain by the inclusion
	(and so they are invariant on automorphism of the component), but the attachments of a single non-interval
	component from~$\{A_i,B_i:i=1,\ldots,a\}$ in $C\cup D$ may be ``shuffled'' by an automorphism of the
	component (and hence we need to ``track'' the mapping of individual vertices of that component throughout the algorithm).
	On the other hand, detailed tracking of the attachments of the non-interval
	components can be done efficiently since the number of them is bounded from above by~$|V(T)|$.
	\end{remark}	

	\begin{remark}
		Procedure~\ref{proc:xpT} implicitly tests for representability as $T$- and
		$T'$-graphs; this can be done using the \textbf{XP}-time algorithm of~\cite{zemanWG}.
		On the other hand, we do not explicitly use intersection representations of
		the graphs, and we only need two properties implied by $T$-representability:
		that the number of non-interval components and the recursion depth is
		bounded in~$|V(T)|$, and that there can be only bounded number of pairwise incomparable
		attachments in the collection~$\mathcal{V}$.
		We may thus just assume $T$- or $T'$-representability of our graphs without checking, and
		throw away the computation branches in which we detect violation of the implied properties.
	\end{remark}

	\begin{theorem}\label{thm:xpT}
		The isomorphism problem of $T$-graphs can be solved in
		\textbf{XP}-time with respect to the fixed parameter~$|V(T)|$ by Procedure~\ref{proc:xpT}.
	\end{theorem}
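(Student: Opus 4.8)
The plan is to prove, by induction on the number of branching nodes of $T$, that Procedure~\ref{proc:xpT} correctly computes a generating set of $Aut(G\uplus H)$ (from which isomorphism of $G$ and $H$ is read off as the existence of a generator swapping $G$ and $H$), and then to bound its running time. The base case is a single branching node, where $G$ and $H$ are $S_d$-graphs with $d=\deg(b)\le|V(T)|$ and the claim is exactly Corollary~\ref{cor:Sdautomo}. For the inductive step we rely on the fact that each non-interval component of $G-C$ is a $T'$-graph for a strict subtree $T'\subsetneq T$, so the recursive calls in step~\eqref{it:compstogether} are on instances with strictly fewer branching nodes and the induction hypothesis applies to them. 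Crucially, the recursion computes the \emph{full} automorphism group of each component pair, not merely an isomorphism verdict, because gluing the pieces consistently requires enumerating all automorphisms of the non-interval components.

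First I would settle correctness. Soundness is the direction that every generator retained in $\Gamma'_m$, and thus in $\Delta_D$, extends to a genuine automorphism of $G\uplus H$ mapping $C$ to $D$: on each matched non-interval pair $G[C\cup A_i]\uplus H[D\cup B_{m(i)}]$ the action comes from $\Gamma_i$, which by the inductive hypothesis consists of true automorphisms; on the interval components the action comes from $\Gamma_0$; and $\Gamma'$ retains only those members whose induced permutation $\tilde\Gamma$ preserves the cardinality Venn diagram of $\mathcal{V}$, which is precisely the consistency needed on $C\cup D$. Completeness is the converse: any $g\in Aut(G\uplus H)$ with $g(C)=D$ sends interval components to interval components and non-interval to non-interval, preserving isomorphism type; it therefore induces a color-preserving isomorphism of $P$ and $Q$ lying in $\Gamma_0$, a bijection $m$ between $\{A_1,\dots,A_a\}$ and $\{B_1,\dots,B_b\}$ (so $a=b$), and on each pair an automorphism found by the recursive call inside $\Gamma_i$. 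Since $g$ itself realizes a bijective matching of the attachments on $C\cup D$, it passes the Venn-diagram test and so is captured in $\Gamma'_m\subseteq\Delta_D$. The final assembly of the $\Delta_D$ over all $D$ into $Aut(G\uplus H)$ follows the same argument as in Corollary~\ref{cor:Sdautomo}.

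The main obstacle I anticipate is establishing the gluing lemma underlying step~\eqref{it:gaction}, namely the analogue of Lemma~\ref{lem:preciseVenn} in the combined setting where $\Gamma$ acts on \emph{whole} interval components but on \emph{individual} vertices of the non-interval components. One must prove that a member of $\Gamma$ extends to an automorphism of $G\uplus H$ fixing $C\cup D$ setwise if and only if $\tilde\Gamma$ preserves the cardinality Venn diagram of the joint collection $\mathcal{V}=\mathcal{U}\cup\mathcal{V}_1\cup\dots\cup\mathcal{V}_a$. The forward direction copies Lemma~\ref{lem:preciseVenn}; the reverse direction is the delicate one, as it must compose the per-cell bijections on $C\cup D$ with the already-fixed isomorphisms on every component and verify that the result is a single well-defined permutation. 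The reason the asymmetric bookkeeping is correct is exactly the remark following the procedure: an interval component has chain-ordered attachments that are invariant under its own automorphisms, so treating it as an indivisible unit loses nothing, whereas a non-interval component may have its attachments permuted, forcing us to track its vertices individually. Finally, the efficient extraction of $\Gamma'$ from $\Gamma$ reuses Theorem~\ref{thm:furstgen} exactly as in Section~\ref{hardsection}, because every failure of Venn-consistency is still witnessed by at most $d\le|V(T)|$ attachment sets, an analogue of Corollary~\ref{cor:d-tuples}.

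For the running time I would bound the recursion tree. Each recursive descent consumes at least one branching node and moves to strict subtrees, so the recursion depth is at most the number of branching nodes of $T$, i.e.\ $O(|V(T)|)$. At every node we try $O(n)$ matching cliques $D$ (a chordal graph has only $O(n)$ maximal cliques) and, for each, at most $a!\le|V(T)|!$ bijections $m$ over $a\le|V(T)|$ non-interval components. Thus the number of root-to-leaf paths, which corresponds exactly to assigning maximal cliques to the $O(|V(T)|)$ branching nodes of $T$, is $f(|V(T)|)\cdot n^{O(|V(T)|)}$. The work performed at each node of a path is polynomial in $n$ and \textbf{FPT} in $|V(T)|$: the posets $P,Q$ have width $\le\deg(b)\le|V(T)|$, so $\Gamma_0$ is computed by Corollary~\ref{cor:bdcm} in \textbf{FPT}-time, the Venn-diagram tests cost $O(n^2)$ by Lemma~\ref{lem:testconsist}, and the $O(nd\log d)$ invocations of Theorem~\ref{thm:furstgen} are each \textbf{FPT} in $d$. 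Multiplying the number of paths by this per-path cost gives a total running time of the form $f'(|V(T)|)\cdot n^{O(|V(T)|)}$, which is \textbf{XP} with parameter $|V(T)|$, as claimed.
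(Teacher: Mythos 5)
Your proposal is correct and follows essentially the same route as the paper: induction along the recursion on strict subtrees (equivalently, on the number of branching nodes, with Corollary~\ref{cor:Sdautomo} as the base case), soundness because every retained generator is a genuine automorphism, completeness by decomposing an actual isomorphism with $g(C)=D$ into the poset automorphism, the matching $m$, and the recursively computed component automorphisms, and an \textbf{XP} bound from recursion depth $\mathcal{O}(|V(T)|)$ times polynomially many clique choices per level. The ``gluing lemma'' you flag as the delicate point is likewise left to the analogy with Lemma~\ref{lem:preciseVenn} in the paper's own proof, so your level of rigor matches the original.
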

	
	\begin{proof}
		The run of Procedure~\ref{proc:xpT}, without the recursive calls from it,
		takes \textbf{FPT}-time with the size of $T$ as the parameter.
		This follows analogously to the proof of Theorem~\ref{thm:MAIN}; in particular,
		since the maximum number of incomparable neighborhoods (attachment sets) of all vertices
		of $(G-C)\uplus(H-D)$ in $C\cup D$ is clearly in $\mathcal{O}(|V(T)|)$ for $T$-graphs.
		Though, the depth of the recursion may be of order up to $\Theta(|V(T)|)$;
		every recursive call from Procedure~\ref{proc:xpT} assumes $T'$-graphs for
		some $T'$ strictly smaller than original~$T$.
		Hence the overall \textbf{XP}-time of order $n^{\mathcal{O}(|V(T)|)}$.
		(We remark, without a proof, that the exponent can be improved to
		$\mathcal{O}(diam(T)+\log|V(T)|)$ by a suitable choice of $C$ and
		management of the recursion, but such small improvement is not worth the effort,
		as we explain in the conclusion section.)
		
		It can be easily seen that, as in Section~\ref{hardsection}, every generator
		$\gamma$ of the computed groups $\Delta_D$ in Procedure~\ref{proc:xpT} really is an
		automorphism of the graph~$G\uplus H$.
		Indeed, $\gamma$ consists of a permutation of the set
		$(\mathcal{X}\cup\mathcal{Y})\cup\bigcup_{i=1,\ldots,a}(A_i\cup B_{m(i)})$
		(step~\ref{it:st4} of Procedure~\ref{proc:xpT}) for some bijection $m$ from step~\ref{it:loopbm},
		and a composition of respectful automorphisms
		of the individual components of~$\mathcal{X}\cup\mathcal{Y}$ (step~\ref{it:st5}).
		Moreover, since $\gamma\in\Gamma'_m$, the restriction of $\gamma$ to $G[C\cup A_i] \uplus H[D\cup B_{m(i)}]$
		is an automorphism of that subgraph set-wise stabilizing $C\cup D$, for each~$i=1,\ldots,a$.
		And since the group $\Gamma'_m$ consists only of permutations which are consistent on~$C\cup D$,
		$\gamma$ preserves also the edges of $G\uplus H$ which have one end on $C\cup D$ 
		and the other end in any component of~$(G-C)\uplus(H-D)$.

		Henceforth, if there is a generator of $\Delta_D$ swapping $G$ and $H$, we correctly
		conclude that $G$ and $H$ are isomorphic.
		
		Conversely, if $G$ and $H$ are two isomorphic $T$-graphs, then there
		exists a maximal clique $C\subseteq G$, as assumed by Procedure~\ref{proc:xpT}, and 
		the isomorphic image $D\subseteq H$ of~$C$.
		The interval components of $G-C$ are mapped by isomorphism to the interval components of $H-D$,
		which defines an automorphism of the poset~$R$ (cf.~$\Gamma_0$).
		The non-interval components of $G-C$ are mapped bijectively to the non-interval
		components of $H-D$, defining the bijection~$m$ (step \ref{it:loopbm}).
		The recursive automorphism groups on these non-interval components matched
		by $m$ are computed correctly by induction.
		Since these particular automorphisms in the procedure come from an actual
		automorphism of $G\uplus H$ swapping $G$ and $H$, they are consistent on~$C\cup D$
		and the original automorphism thus is a member of the computed group $\Gamma'_m$.
		
		Lastly, we justify step~\ref{it:usealg2}; that Algorithm~\ref{alg:Gammaprime}
		can be used in the setting of Procedure~\ref{proc:xpT}.
		By the proof of Lemma~\ref{lem:Gammaprim} it suffices to distribute the vertices of $A_i\cup B_{m(i)}$ adjacent to $C\cup D$,
		for each~$i=1,\ldots,a$, into bounded-size ``levels'' (clustered by exactly same neighborhood) which are invariant upon respectful automorphisms.
		For instance, we can easily define such levels by equal cardinalities of the neighborhoods of $A_i\cup B_{m(i)}$ in $C\cup D$,
		and this finishes the proof.
		\qed\end{proof}
	
	Since the algorithm of Theorem~\ref{thm:xpT} does not care about the shape
	of the tree $T$ (and degree-$2$ nodes of $T$ are irrelevant), we actually
	get the following more general conclusion:
	
	\begin{corollary}
		The isomorphism problem of chordal graphs of leafage at most $\ell$ can be solved in
		\textbf{XP}-time with respect to the fixed parameter~$\ell$.
	\end{corollary}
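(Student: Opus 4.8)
The plan is to reduce the statement directly to Theorem~\ref{thm:xpT} by observing that, for a chordal graph of leafage at most $\ell$, the only feature of the host tree $T$ that influences the running time of Procedure~\ref{proc:xpT} is its number of leaves (equivalently, its number of branching nodes), and \emph{not} its total number of vertices. First I would recall that $G$ has leafage at most $\ell$ precisely when $G$ is a $T$-graph for some tree $T$ with at most $\ell$ leaves. In such a tree, the number of branching nodes (those of degree at least $3$) is at most $\ell-2$, and every degree-$2$ node is a mere subdivision point. As the text preceding the statement notes, these degree-$2$ nodes are invisible to the algorithm: they contribute only to interval portions lying between consecutive branching cliques and never create a non-interval component or an additional level of recursion. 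Hence the full size $|V(T)|$, which may be arbitrarily large, is irrelevant, and the effective parameter governing Procedure~\ref{proc:xpT} is $\mathcal{O}(\ell)$.

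Next I would track how the branching structure behaves under the recursion. When a branching clique $C$ is fixed at a branching node $b$ of degree $\delta$, the subtrees hanging off $b$ partition the leaves of $T$; a subtree that is a bare path yields an interval component, while a subtree containing a further branching node yields a non-interval component. Thus the non-interval components correspond bijectively to those subtrees with at least two leaves, so their number is at most $\ell/2$ and is bounded in $\ell$, as required by step~\ref{it:loopbm}. Each recursive call treats $G[C\cup A_i]$, a $T'$-graph whose tree $T'$ consists of $b$ together with one such subtree; here $b$ becomes a leaf of $T'$, so $T'$ has at most $\ell_j+1$ leaves, where $\ell_j$ is the number of original leaves in that subtree. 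Since $\delta\geq3$ forces $\ell_j\leq\ell-2$, we have $\ell_j+1<\ell$, so the leaf-count strictly decreases along every recursion branch; consequently the recursion depth is at most the number of branching nodes, i.e.\ $\mathcal{O}(\ell)$, and the total number of recursive calls is bounded in $\ell$.

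With these bounds in hand, the argument is essentially that of Theorem~\ref{thm:xpT}. Each node of the recursion performs \textbf{FPT}-time work in the parameter $\mathcal{O}(\ell)$ (computing $\Gamma_0$ by Corollary~\ref{cor:bdcm}, refining it to $\Gamma'$ via the analogue of Algorithm~\ref{alg:Gammaprime}, and looping over the bounded number of bijections~$m$), multiplied by the choice of clique $D\subseteq H$, which contributes a factor of $n$ at each of the $\mathcal{O}(\ell)$ recursion levels. This yields the overall bound $n^{\mathcal{O}(\ell)}$, i.e.\ \textbf{XP}-time in $\ell$. I would run the procedure for isomorphism exactly as in Theorem~\ref{thm:xpT}, on $G\uplus H$, and conclude that $G\simeq H$ if and only if some generator of the computed automorphism group swaps $G$ and $H$.

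The main obstacle I anticipate is not any single calculation but the conceptual point that the algorithm must be driven purely by the input graph, without ever being handed the tree $T$ or its leafage. This is precisely the situation addressed by the Remark following Procedure~\ref{proc:xpT}: we never compute a representation, we only \emph{assume} $T'$-representability at each recursion step and discard any branch in which the implied structural bounds—the number of non-interval components, the recursion depth, and the number of pairwise incomparable attachments in $\mathcal{V}$—are found to be violated. The leafage bound of $\ell$ guarantees that for the correct sequence of branching-clique choices these quantities never exceed their $\mathcal{O}(\ell)$ thresholds, so the correct computation branch survives, while the running time of every surviving branch is controlled by the effective parameter $\mathcal{O}(\ell)$ rather than by $|V(T)|$.
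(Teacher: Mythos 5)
Your proposal is correct and follows essentially the same route as the paper, which derives the corollary directly from Theorem~\ref{thm:xpT} by noting that the algorithm is insensitive to degree-$2$ nodes of $T$, so only the leaf count (hence $\mathcal{O}(\ell)$ branching nodes and an effective tree size $\mathcal{O}(\ell)$) enters the exponent; you simply spell out the details the paper leaves as a one-line observation. The only nitpick is that the correspondence between non-interval components and subtrees with at least two leaves is an injection rather than a bijection (a branching subtree may still host an interval component), but the bound you need holds in the required direction.
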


	\section{Isomorphism of proper $T$-graphs in FPT-time}\label{proper}
	
	As noted in the introduction, proper $T$-graphs present a significantly more
	restrictive graph class than general $T$-graphs.
	Recently, Chaplick et al.~\cite{chaplick2020recognizing} have shown that proper $T$-graphs
	can be recognized in \textbf{FPT}-time parameterized by the size of $T$
	(which is not yet known for general $T$-graphs).
	Here we show that also the isomorphism problem of proper $T$-graphs is
	notably easier than that of $S_d$-graphs and $T$-graphs in general.
	Though, the isomorphism problem stays GI-complete for proper $S_d$-graphs
	with $d$ on the input -- recall Proposition~\ref{prop:SdGIc},
	and so we cannot realistically hope for polynomial-time algorithms here.
	
	In the case of $T=S_d$, the algorithm is really simple and purely combinatorial:
	
	\begin{theorem}\label{properSD}
		The isomorphism of proper $S_d$-graphs can be tested in \textbf{FPT}-time parameterized by~$d$.
	\end{theorem}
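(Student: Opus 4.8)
The plan is to exploit the strong rigidity of \emph{proper} $S_d$-representations, which lets us replace the group-computing machinery of Section~\ref{hardsection} by a direct canonicalization. As before we do not assume a representation on the input. First I would fix a candidate central clique: since $G$ is chordal it has only linearly many maximal cliques, and I try each maximal clique $C\subseteq G$ in turn as the center, discarding $C$ unless every $G[C\cup X_i]$ (for $X_i$ a component of $G-C$) is a \emph{proper} interval graph and the central poset $P$ has width at most~$d$. An admissible $C$ exists whenever $G$ is a proper $S_d$-graph, namely the center of any proper representation. The same is done for $H$ with a clique $D$, comparing only pairs with $|C|=|D|$.

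The crucial structural step is to show that properness canonically pins down the decomposition of $G-C$ into the $d$ rays. Recall that in the general case the obstruction was twofold: the central poset admits many different chain covers (so the assignment of components to rays is not unique), and even a fixed assignment need not be consistently realizable on $C$ (the cardinality Venn-diagram issue of Lemma~\ref{lem:preciseVenn}). I claim that in a proper representation both obstructions disappear. On each ray the vertices of $C$ together with the components placed on that ray form a \emph{proper} (unit) interval graph, which for a connected graph has a vertex ordering unique up to reversal; the clique $C$ occupies a canonically determined consecutive block of this ordering, and the attachment $N_C(X_i)$ of every component on the ray is forced to be a consecutive block as well. Consequently the attachment sets along one ray are linearly nested, and a component cannot be moved to a different chain of $P$ without creating an inclusion among the representing subgraphs, so the chain cover into rays is determined up to permuting rays carrying isomorphic data and up to a global reversal of each ray.

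With this rigidity the test becomes purely combinatorial. For the fixed pair $(C,D)$ I would compute, for each of the $\le d$ rays of $G$ and of $H$, the canonical form of the associated connected proper interval graph (using the canonical ordering of unit interval graphs via~\cite{recogIntervalLinear}), recording the induced ordering of its $C$- (resp.\ $D$-) vertices. Because each ray's ordering is unique up to reversal, the positions of the central-clique vertices inside the rays are determined, and the rays together induce a well-defined identification on $C$ (two vertices of $C$ are identified exactly when they occupy matching positions in every ray reaching them). Isomorphisms mapping $C$ to $D$ are then precisely the bijections between the $d$ rays of $G$ and of $H$ that respect these per-ray canonical forms and are simultaneously compatible with one bijection $C\to D$; since there are at most $d!$ ray matchings and at most two orientations per ray, I enumerate all $d!\cdot 2^d$ combinations and, for each, check in polynomial time whether the implied partial maps on $C$ glue into a single bijection. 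We return ``isomorphic'' iff some choice of $(C,D)$ and some matching succeeds.

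The main obstacle is the structural rigidity lemma, precisely the assertion that properness forces each attachment to be a consecutive block of a canonical per-ray ordering and thereby makes the ray partition canonical up to the stated symmetries. This is exactly the point that fails for general $S_d$-graphs and necessitated the cardinality Venn-diagram analysis; here the unit-interval orderings absorb all the consistency information on $C$, so the heavy group machinery of Section~\ref{hardsection} is not needed. The remaining parts, namely enumerating maximal cliques, canonicalizing the proper interval graphs, and trying the $d!\cdot 2^d$ ray matchings with the gluing check on $C$, are routine and yield the claimed \textbf{FPT} bound in~$d$.
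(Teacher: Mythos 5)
Your plan hinges on a ``structural rigidity lemma'' that you explicitly leave unproven, and as stated it is not correct. The restriction of a proper $S_d$-representation to a single ray is \emph{not} in general a proper interval representation: two vertices $u,v$ of the central clique $C$ may have segments on one ray with $u$'s strictly containing $v$'s, the containment being reversed on a different ray (properness is a global condition on the subtrees of the star, not a per-ray one). Consequently you cannot read off ``the positions of the central-clique vertices inside the rays'' from a canonical unit-interval ordering of $G[C\cup X_i]$; moreover, all vertices of $C$ not adjacent to the ray's components are mutual twins in $G[C\cup X_i]$, so even the abstract canonical ordering determines only the chain of attachment sets $N_1(X_i)\subsetneq\cdots\subsetneq N_a(X_i)$, never a placement of individual vertices of $C$. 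The cross-ray compatibility you then need (``the implied partial maps on $C$ glue into a single bijection'') is exactly the cardinality Venn-diagram condition of Lemma~\ref{lem:preciseVenn}, which your gluing check gestures at but does not formalize; this is the part of the argument that cannot be waved away, and the paper reuses Lemmas~\ref{lem:preciseVenn} and~\ref{lem:testconsist} verbatim to handle it.

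You also miss the one simple observation that makes the proper case easy and that the paper's proof of Theorem~\ref{properSD} is built on: in a proper representation a single ray cannot carry two components with nonempty attachment in $C$. Indeed, if $X_i$ lies nearer the center than $X_j$ on the same ray and $X_j$ meets some $v\in C$, then the subpath representing $v$ must run past the entire representation of $X_i$, so it contains the representation of every vertex of $X_i$, contradicting properness. Hence there are at most $d$ components with nonempty attachment, and the algorithm is just: fix $C$, loop over $D$, match the (single) attachment-free part by interval-graph isomorphism, enumerate the at most $d!$ bijections between the $\le d$ attached components of $G-C$ and of $H-D$, test each matched pair $G[C\cup X_i]$, $H[D\cup Y_{f(i)}]$ for interval-graph isomorphism, and verify consistency on $C\cup D$ with the attachment-collection test. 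Your picture of several nested attached components per ray, whose chain cover is ``canonical up to reversal,'' describes the general (non-proper) situation and is precisely what properness rules out; without the above observation your $d!\cdot 2^d$ enumeration has no justified bound on what it ranges over, and the claimed \textbf{FPT} bound does not follow.
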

	
	\begin{proof}
		This is based on the following observation:
		
		If $C \subseteq G$ is the central clique of a proper $S_d$-representation of~$G$
		and $X_1,\dots,X_k$ are the connected components of $G-C$ which have nonempty attachment in~$C$,
		then~$k\leq d$.
		Indeed, if $k>d$, then some two components, say $X_i,X_j$, would be
		represented on the same ray of a subdivision $S_d'$ of~$S_d$ such that
		$X_i$ is closer to the center than~$X_j$.
		If $X_j$ is adjacent to $v\in C$, then the representation of $v$ in $S_d'$
		actually contains the whole representation of $X_i$,
		which is a contradiction to $G$ being a proper $S_d$-graph.
		
		Assume we get two proper $S_d$-graphs $G$ and~$H$ on $n$ vertices.
		Analogously to the previous algorithms, we hence fix a central maximal clique
		$C \subseteq G$, determine $X_1,\dots,X_k$ which are the connected components of
		$G-C$ which have nonempty attachment in~$C$, and denote by $X_0$ the
		possible connected components of $G$ disjoint from~$C$ (or set $X_0=\emptyset$).
		We loop through all maximal cliques $D \subseteq H$ such that $|D|=|C|$,
		and similarly determine components $Y_1,\ldots,Y_l$ and $Y_0$ of $H-D$.
		If $k=l$, we compare $X_0$ and $Y_0$ to (proper) interval graph isomorphism~\cite{recogIntervalLinear}.
		Then we loop through all bijections $\varrho:\{1,\ldots,k\}\to\{1,\ldots,k\}$,
		compare the graphs $G[C \cup X_i]$ and $H[D \cup Y_{\varrho(i)}]$,
		$i=1,\ldots,k$, to interval graph isomorphism,
		and compare their attachment collections in $C$ and in $D$ using the test
		provided by Lemmas~\ref{lem:preciseVenn} and~\ref{lem:testconsist}.
		(Recall that the attachment collections are automorphism-invariant.)
		If we ever succeed, then we mark $G$ and $H$ as isomorphic.
		
		The overall approach takes $\mathcal{O}(k!n^3)\leq\mathcal{O}(d!\,n^3)$ time which is in \textbf{FPT} with respect to $d$.  
		The correctness follows by the same arguments as in Section~\ref{hardsection}.  
		\qed\end{proof}

	We now move onto proper $T$-graphs. 
	As with $S_d$-graphs, we may always restrict to $T$-representations of graphs $G$ (in a subdivision $T'$ of~$T$)
	such that the cliques represented at the branching nodes of $T'$ are maximal cliques of $G$,
	and we call such cliques the {\em branching cliques} of this representation.
	Informally, our idea is to use the finding of Chaplick et al.~\cite{chaplick2020recognizing}
	that those maximal cliques of a proper $T$-graph~$G$ which occur as the branching cliques
	of any $T$-representation of $G$ are somehow special, and that their number, modulo an easy canonical adjustment, is bounded.
	We will call such cliques (after the adjustment) the {\em rich cliques} of~$G$ here.
	Then it will not be difficult to try all mappings between the rich cliques in \textbf{FPT}-time,
	as mentioned already in Section~\ref{hardestsection}, and then answer the isomorphism problem
	similarly as in the proof of Theorem~\ref{properSD}.

	The full definition of the terms, following \cite{chaplick2020recognizing}, is quite technical
	and we skip the details here since they are not important for our proof.
	For our purpose, it is enough to accept the notion of a {\em chain} in a proper $T$-graph~$G$,
	which is (without further unnecessary details that can be found in~\cite{chaplick2020recognizing})
	a collection $\mathcal{Y}$ of maximal cliques of $G$ such that $\mathcal{Y}$ is partitioned 
	into a sequence $(C_1,\ldots,C_s)$ of the {\em inner} cliques and into two {\em terminals}
	which are nonempty subcollections $\mathcal{T}_1,\mathcal{T}_2\subset\mathcal{Y}$ of (remaining) cliques.
	Then we use the following:

	\begin{prop}[Chaplick et al.~\cite{chaplick2020recognizing}]\label{prop:chains}	
		Assume $G$ is a connected proper $T$-graph.
		Then the following sets are unique in $G$ (meaning that they appear the same in every proper $T$-representation of $G$);
		the set $\mathcal{H}(G)$ of the chains in $G$, and
		the set $\mathcal{S}(G)$ of the remaining maximal cliques of $G$ not contained in the chains of $\mathcal{H}(G)$.
		Furthermore:
		\begin{itemize}
			\item[a)] The inner cliques of every chain of $\mathcal{H}(G)$ form a path in any proper $T$-representation of $G$,
			\item[b)] if a terminal $\mathcal{T}_1$ of a chain in $\mathcal{H}(G)$ is not a singleton clique,
			then $\mathcal{T}_1$ is the set of the inner cliques of some (other) chain in $\mathcal{H}(G)$, and
			\item[c)] if an inner clique $C_i$ of a chain is a branching clique in some proper $T$-repre\-sentation of $G$,
			then $C_i$ is contained in a terminal $\mathcal{T}_1$ of some (other) chain in $\mathcal{H}(G)$, and
			any (other) clique $C_j\in\mathcal{T}_1$ may replace $C_i$ in a proper $T$-representation of $G$.
		\end{itemize}

		The sets $\mathcal{H}(G)$ and $\mathcal{S}(G)$ can be computed in \textbf{FPT}-time parameterized by $|T|$,
		and their cardinalities are in~$\mathcal{O}(|T|^2)$.

	\end{prop}

	Note that, even though the number of chains (and hence also the number of terminals) is bounded with respect to $T$ in Proposition~\ref{prop:chains},
	the number of potential branching cliques is not bounded since the terminals may have arbitrary cardinality.
	However, Proposition~\ref{prop:chains}(b) allows us to select, from a non-singleton terminal $\mathcal{T}_1$,
	only the extreme two (the first and the last) of the sequence of inner cliques.

	We hence determine the collection $\mathcal{R}$ of {\em rich cliques} of a connected proper $T$-graph $G$ algorithmically as follows:
	\begin{enumerate}
		\item Using the Algorithm of \cite{chaplick2020recognizing}, identify the set $\mathcal{H}(G)$ of the chains in~$G$,
		and $\mathcal{S}(G)$ of remaining maximal cliques of $G$ in \textbf{FPT}-time parameterized by~$\vert T \vert$.
		\item Initially set $\mathcal{R}:=\mathcal{S}(G)$.
		\item For each terminal of every chain in $\mathcal{H}(G)$ which is a single clique, add~it~to~$\mathcal{R}$.
		\item For every pair of chains in $\mathcal{H}(G)$; if a non-singleton terminal of one is the set of the inner cliques
		of the other chain, ordered as the sequence $(C_1,\ldots,C_s)$, then add $C_1$ and $C_s$ to~$\mathcal{R}$.
	\end{enumerate}

	\begin{lem}\label{rich2}
		If $G$ is a connected proper $T$-graph, then the set of rich cliques $\mathcal{R}$ of $G$ is determined as above
		in \textbf{FPT}-time parameterized by $\vert T \vert$, its cardinality is in~$\mathcal{O}(|T|^2)$, and
		$\mathcal{R}$ is isomorphism-invariant.
	\end{lem}

	\begin{proof}
		The runtime is in \textbf{FPT} only because of the Algorithm of \cite{chaplick2020recognizing}, while the rest
		of the computation of $\mathcal{R}$ is clearly in polynomial time.
		The cardinalities $|\mathcal{H}(G)|=\mathcal{O}(|T|^2)$ and $|\mathcal{S}(G)|=\mathcal{O}(|T|^2)$ are by Proposition~\ref{prop:chains},
		and for every chain in $\mathcal{H}(G)$, at most $4$ cliques are added to $\mathcal{R}$.
		Finally, Proposition~\ref{prop:chains} also guarantees that the set $\mathcal{R}$ is invariant on the choice of
		a particular proper $T$-representation of $G$, hence isomorphism-invariant.
	\qed\end{proof}
	
	We can now give our fully combinatorial algorithm:

	\begin{theorem}\label{properT}
		The isomorphism of proper $T$-graphs can be tested in \textbf{FPT}-time with respect to the size of~$T$.
	\end{theorem}
	
	\begin{proof}
		We are given two proper $T$-graphs $G$ and $H$ on $n$ vertices.
		Let $V_b(T)$ denote the set of branching nodes of~$T$ (which will be the set of branching nodes of any subdivision of $T$, too).
		For simplicity, we assume
		\begin{itemize}
			\item 
			that both $G$ and $H$ are connected, as otherwise we can test the isomorphism between each pair of components separately, and
			\item
			that $G$ and $H$ are not proper $T_1$-graphs for any $T_1\subsetneq T$,
			since we can exhaustively try the coming algorithm for all $T_1\subsetneq T$ before trying with~$T$.
		\end{itemize}
		
		Imagine now that $G\simeq H$. Then there exist proper $T$-representations of $G$ and $H$ in the same subdivision $T'$ of $T$,
		such that the branching cliques of $G$ are mapped to the branching cliques of $H$ in a chosen isomorphism $f$ of $G$ and $H$.
		Let these branching cliques be $C_1,\ldots,C_m\subseteq G$ and $D_1,\ldots,D_m\subseteq H$, where~$m=|V_b(T)|$ and~$f(C_i)=D_i$.
		By Proposition~\ref{prop:chains}, we may also assume that these cliques are among the rich cliques of $G$ and of~$H$.
		We define the graphs $G_0:=G-(C_1\cup\ldots\cup C_m)$ and $H_0:=G-(D_1\cup\ldots\cup D_m)$.
		Then, since $C_1,\ldots,C_m$ are the branching cliques in an intersection representation of $G$ in $T'$,
		we have that every connected component $G_1\subseteq G_0$ is adjacent to at most two cliques ($C_i,C_j$) among $C_1,\ldots,C_m$,
		and we denote by $G_1^+:=G_1\cup C_i\cup C_j$ which is always an interval graph. We do the same for~$H_0$.

		On the other hand, every branching clique $C_i$ is adjacent to at most $\ell$ components of $G_0$, where $\ell$ is the number of leaves of~$T$.
		If this was not true, then there would be two intervals in $T'$ (indeed intervals since all branching cliques have been subtracted from~$G_0$)
		representing two vertices of distinct components adjacent to $C_i$, and so one lying on a path in $T'$ from the branching node of $C_i$ to the other.
		That would violate that our representation is proper.
		In particular, the number of connected components of $G_0$ and of $H_0$ is in $\mathcal{O}(|T|^2)$.

		Since $f$ is an isomorphism from $G$ to~$H$, there is a pairing of the components of $G_0$ and $H_0$ such that
		for $G_1\subseteq G_0$ and $H_1\subseteq H_0$, we have $f(G_1)=H_1$, and this can be verified as an interval graph isomorphism.
		The attachment collections of the components in the branching cliques can then be compared (between $G$ and $H$)
		in polynomial time the same way as in the proof of Theorem~\ref{properSD}.
		Altogether, we can verify an isomorphism from $G$ to $H$ in polynomial time once we ``guess'' the right matching selections
		of branching cliques in $G$ and~$H$.

		Our algorithm hence first identifies the sets of all rich cliques $\mathcal{C}$ and $\mathcal{D}$
		of $G$ and $H$, respectively, using Lemma~\ref{rich2}.
		Then we loop through all assignments $f:V_b(T)\to\mathcal{C}$ and $g:V_b(T)\to\mathcal{D}$ of rich cliques to the branching nodes,
		and for each assignment we check whether the graph $G_1^+$ (resp., $H_1^+$) over all components of $G_0$ (resp., of~$H_0$) as above is an interval graph.
		If this test succeeds, then we check whether there is an isomorphism from $G$ to $H$ respecting the assignment $f$ and $g$, as described above.
		If no isomorphism is found in any of the rounds, then we conclude that~$G\not\simeq H$.

		As for the correctness, if the isomorphism test ever succeeds, then obviously $G\simeq H$.
		If $G\simeq H$, then, as argued above, there exist ``matching'' proper $T$-representations of $G$ and $H$ with their rich cliques
		as the branching cliques, this assignment of the rich cliques is among the tested ones, and hence the outcome will be that $G\simeq H$.
	\qed\end{proof}

	\section{Conclusions}
	
	In this paper, we have focused on the graph isomorphism problem on $S_d$-graphs and $T$-graphs. 
	We have shown that $S_d$-graph isomorphism problem includes testing for the
	isomorphism of posets of bounded width which can be solved in
	\textbf{FPT}-time using a classical group-based approach by Furst, Hopcroft and
	Luks~\cite{furst} via Babai~\cite{babai-bdcm}. 
	We have given an \textbf{FPT}-time algorithm to test the isomorphism of
	$S_d$-graphs with the parameter~$d$,
	which also builds on the mentioned classical group-based approach.  
	
	Since it does not seem easy to solve the isomorphism problem of posets
	of bounded width in a combinatorial (or algebra-free) way, and we
	are not aware of any published result in this direction, the related question 
	of an existence of a purely combinatorial \textbf{FPT}-time algorithm for $S_d$-graph isomorphism
	remains open (due to Theorem~\ref{theo:red2}).
	We have only shown a partial answer to this open question in the case of more 
	restricted proper $S_d$-graphs and $T$-graphs in Section~\ref{proper}.
	
	As a natural extension of our \textbf{FPT}-time algorithm for $S_d$-graph isomorphism,
	we have shown that the isomorphism problem for $T$-graphs can be solved in \textbf{XP}-time with respect to the size of~$T$.
	In the direct approach we have taken here, the jump from \textbf{FPT}- to \textbf{XP}-time
	seems unavoidable. However, in subsequent very recent papers independently
	Arvind, Nedela, Ponomarenko and Zeman~\cite{DBLP:journals/corr/abs-2107-10689}
	and these authors~\cite{DBLP:conf/walcom/CagiriciH22} have found different ways to
	address the isomorphism problem for $T$-graphs in \textbf{FPT}-time.
	The approach of~\cite{DBLP:conf/walcom/CagiriciH22} builds on the
	core ideas of Section~\ref{hardsection}, combined with a tricky
	canonical decomposition of $T$-graphs, while
	\cite{DBLP:journals/corr/abs-2107-10689} use a deeply algebraic approach.
	
	Chaplick et al.\ also showed that the graph isomorphism problem for $H$-graphs is
	GI-complete when $H$ contains a double triangle as a minor \cite{zemanWG}. 
	However, it is an open problem what is the complexity of the $H$-graph
	isomorphism problem when $H$ contains one cycle, and we would like to consider this problem in a future research.

	\begin{acknowledgements}
		We would like to thank to Pascal Schweitzer for pointing us to the paper~\cite{furst},
		and to Onur \c{C}a\u{g}{\i}r{\i}c{\i} for comments on this manuscript.
	\end{acknowledgements}

	\bibliography{Sd-bibliography}

\end{document}